\documentclass[aos]{imsart}

\RequirePackage{amsthm,amsmath,amsfonts,amssymb}
\RequirePackage[authoryear]{natbib}
\RequirePackage[colorlinks,citecolor=blue,urlcolor=blue]{hyperref}
\RequirePackage{graphicx}

\RequirePackage{cleveref}
\RequirePackage{enumitem}
\RequirePackage{fix-cm}
\RequirePackage{xcolor} 

\RequirePackage{algorithm}
\RequirePackage{algpseudocode}

\startlocaldefs
\theoremstyle{plain}
\newtheorem{axiom}{Axiom}
\newtheorem{claim}[axiom]{Claim}
\newtheorem{theorem}{Theorem}[section]
\newtheorem{lemma}[theorem]{Lemma}

\newtheorem{assumption}{Assumption}
\theoremstyle{definition}

\newtheorem{remark}{Remark}

\DeclareMathOperator*{\argmin}{arg\,min}

\def\pmk{p_{\text{-}k}}

\def\mat{\textnormal{\textsc{mat}}}
\def\map{\textnormal{\textsc{map}}}
\def\reshape{\textnormal{\textsc{reshape}}}
\def\tr{\textnormal{\text{tr}}}
\def\wt#1{\widetilde{#1}}
\def\wh#1{\widehat{#1}}

 
\def\E{\mathbb{E}}

\def\N{\mathbb{N}} 
\def\P{\mathbb{P}}
\def\R{\mathbb{R}}


\def\cD{\mathcal{D}} 
\def\cE{\mathcal{E}} 
\def\cF{\mathcal{F}}
\def\cG{\mathcal{G}}  \def\cI{\mathcal{I}} 
\def\cII{\mathcal{II}} 
\def\cIII{\mathcal{III}} 
\def\cIV{\mathcal{IV}}

\def\cN{\mathcal{N}} 
\def\cO{\mathcal{O}}

\def\cS{\mathcal{S}}

\def\cX{\mathcal{X}} 
\def\cY{\mathcal{Y}}

 
\def\zero{\mathbf{0}} 
\def\bi{\boldsymbol{i}} 
\def\bI{\mathbf{I}} 
\def\be{\mathbf{e}} 
\def\bw{\mathbf{w}}
\def\bh{\mathbf{h}}

\def\bx{\mathbf{x}} 

\def\bA{\mathbf{A}}
\def\bB{\mathbf{B}}
\def\bD{\mathbf{D}}
\def\bX{\mathbf{X}} 
\def\bY{\mathbf{Y}} 
\def\bU{\mathbf{U}} 
 
\def\bC{\mathbf{C}} 
 
\def\bH{\mathbf{H}}
\def\bM{\mathbf{M}}

\def\bv{\boldsymbol{v}} 
\def\bg{\mathbf{g}}

\def\balpha{\boldsymbol{\alpha}}
\def\bbeta{\boldsymbol{\beta}}

   \def\bSigma{\boldsymbol{\Sigma}} \def\bOmega{\boldsymbol{\Omega}}



\def\trans{\intercal}

\def\cov{\mathbb{C}\mathrm{ov}} 
\def\var{\mathbb{V}\mathrm{ar}}

\def\vec{\mathrm{vec}}

\def\rank{\mathrm{rank}}

\def\diag{\mathrm{diag}}



\endlocaldefs

\begin{document}

\begin{frontmatter}
\title{Identification and estimation of multi-order tensor factor models}
\runtitle{Multi-order tensor factor models}

\begin{aug}
\author[A]{\fnms{Zetai}~\snm{Cen}
\ead[label=e1]{zetai.cen@bristol.ac.uk}}\orcid{0009-0007-0215-8716} 
\address[A]{School of Mathematics,
University of Bristol
\printead[presep={ ,\ }]{e1}}

\end{aug}

\begin{abstract}
We propose a novel framework in high-dimensional factor models to simultaneously analyse multiple tensor time series, each with potentially different tensor orders and dimensionality. The connection between different tensor time series is through their global factors that are correlated to each other. A salient feature of our model is that when all tensor time series have the same order, it can be regarded as an extension of multilevel factor models from vectors to general tensors. Under very mild conditions, we separate the global and local components in the proposed model. Parameter estimation is thoroughly discussed, including a consistent factor number estimator. With strong correlation between global factors and noise allowed, we derive the rates of convergence of our estimators, which can be more superior than those of existing methods for multilevel factor models. We also develop estimators that are more computationally efficient, with rates of convergence spelt out. Extensive experiments are performed under various settings, corroborating with the pronounced theoretical results. As a real application example, we analyse a set of taxi data to study the traffic flow between Times Squares and its neighbouring areas.
\end{abstract}

\begin{keyword}[class=MSC]
\kwd[Primary ]{62H25}
\kwd{62H12}
\end{keyword}

\begin{keyword}
\kwd{Dimension reduction}
\kwd{global factors}
\kwd{multilevel factor models}
\kwd{tensor order}
\kwd{tensor time series}
\end{keyword}

\end{frontmatter}

\section{Introduction}
\label{sec: introduction}

With the rapid development of technology over the past few decades, high-dimensional time series are becoming more ubiquitous. This constantly demands methodologies to facilitate interpretation and forecasting. Among many other methods, factor models are found very efficient in dimension reduction and helping researchers to understand the dependence structure in data sets. Credited to these advantages of factor models, they have been applied to various fields, including but not limited to psychology \citep{McCraeJohn1992}, biology \citep{Hirzeletal2002, Hochreiteretal2006}, economics and finance \citep{ChamberlainRothschild1983, FamaFrench1993, StockWatson2002}.

More recently, researchers have open up to time series formed by matrices or, more generally, multidimensional arrays or tensors. Not only are such representations often the most natural by how the data is collected, they also provide useful insight that flattening the data might overlook. Thus, the literature has seen increasing work in matrix- or generally tensor-valued factor models. For matrix-valued time series, \cite{Wangetal2019} proposed matrix factor models by extending the framework in \cite{Lametal2011}. \cite{yu2022projected} also studied a matrix factor model and proposed a projection estimator with faster rate of convergence than the PCA-type (principal component analysis) estimator. \cite{ChenFan2023} proposed $\alpha$-PCA by generalizing the method for vector factor models by \cite{Bai2003}. Addressing the interplay between rows and columns, \cite{LamCen2024} proposed a main effect factor model, and \cite{Yuanetal2023} studied a two-way dynamic factor model. Methodologies are also ample for tensor factor models since the early work by \cite{Chenetal2022}, see \text{e.g.} \cite{barigozzi2022statistical}, \cite{Hanetal2022}, and \cite{Hanetal2024}.

Despite the vast amount of approaches in tensor factor modelling discussed above, all of them consider a single time series with tensor observation of the same order and dimensions.
Formally, for a single series of order-$K$ tensors $\{\cX_t\}$, a Tucker-type tensor factor model decomposes each $\cX_t$ as $\cX_t = \cF_t \times_{k=1}^K \bA_k + \cE_t$, where the latent core factor $\cF_t$ drives the dynamics of the observation through the factor loading matrices $\bA_k, \dots, \bA_K$.
This refrains us to conduct a comprehensive study on data sets containing multiple tensor time series, or to investigate the dependence structure among several tensor time series. In fact, this can be limited even when facing two vector time series that are closely connected in nature for panel data, yet of different dimensions. 

To tackle this problem, for vector time series, \cite{Wang2008} proposed a multilevel factor model, whose asymptotic theory is later generalized by \cite{Choietal2018}. We also note that such a framework is similar to hierarchical factor models in \cite{Moenchetal2013} for instance. In particular, their models consider $M$ series of observed vectors. For each group $m=1,\dots,M$, the vector observation $\bx_t^{(m)}\in \R^{p_m}$ is represented by $\bx_t^{(m)} = \bA_m \mathbf{g}_t + \bB_m \mathbf{f}_t^{(m)} +\mathbf{e}_t^{(m)}$. The low-rank vectors $\mathbf{g}_t$ and $\mathbf{f}_t^{(m)}$ are termed as the global and local factors respectively, with $\bA_m$ and $\bB_m$ being their corresponding loading matrices. By stacking all groups of vectors at the same timestamp, one may write
\[
\begin{pmatrix}
    \bx_t^{(1)} \\ \bx_t^{(2)} \\ \vdots \\ \bx_t^{(M)}
\end{pmatrix} = \begin{pmatrix}
    \bA_1 & \bB_1 & \zero & \ldots & \zero \\
    \bA_2 & \zero & \bB_2 & \ldots & \zero \\
    \vdots & \vdots & \vdots & \ddots & \vdots \\
    \bA_M & \zero & \zero & \ldots & \bB_M
\end{pmatrix} \begin{pmatrix}
    \mathbf{g}_t \\ \mathbf{f}_t^{(1)} \\ \vdots \\ \mathbf{f}_t^{(M)}
\end{pmatrix} + \begin{pmatrix}
    \mathbf{e}_t^{(1)} \\ \mathbf{e}_t^{(2)} \\ \vdots \\ \mathbf{e}_t^{(M)}
\end{pmatrix} ,
\]
which can be viewed as a vector factor model with specific structures. In addition to the interpretation of factor structures at each group, one benefit of the techniques developed for multilevel factor models is that, the model estimation can be more efficient by accounting for the multilevel or group structure, compared to estimating a very sparse factor loading matrix if all observed panels with unaligned dimensions are directly stacked together, as shown above. However, re-organising data sets becomes infeasible for higher-order tensor data (including matrix data which are order-2 tensors), urgently requiring appropriate methods to model the data sets.

Given $M$ groups in total, for each series of matrix observations with dimension $p_{m,1}\times p_{m,2}$, $m=1,\dots, M$, \cite{Zhangetal2025} considered multilevel factor models by decently extending the work in \cite{Wang2008} and \cite{Choietal2018} to matrix time series. However, their estimators of local factor loading matrices did not take advantage of the matrix format as \cite{yu2022projected} did, leaving the possibility to improve the estimation performance. Although their augmented iterative algorithm tried to amend the slow rates of convergence, both their initial estimator and iterative algorithm are computationally inefficient. Moreover, their Condition~6 required that all row dimensions $p_{m,1}$'s and column dimensions $p_{m,2}$'s are respectively of the same order, which might be unrealistic to many datasets empirically.

As tensor data becomes more prevalent nowadays, it is crucial to develop methodologies on multilevel factor modelling for tensor time series. Further ahead, facing multiple time series with different tensor orders gives rise to another challenge---the global factors cannot stay exactly the same over different series due to different shapes of arrays. A general solution is to consider global factors that are connected only due to their correlation. In fact, even when dealing with time series the same tensor orders, this setup is more flexible than that in multilevel factor models.

Another difficulty lies in specifying the number of factors, especially when tensor orders can be different. In the context of multilevel factor modelling, \cite{Choietal2018} proposed a consistent estimator based on several information criteria, whereas \cite{Zhangetal2025} only provided a practical approach based on eigenvalue ratio. The former method remains unclear when higher-order tensor time series are present, and for the latter, we will show in our numerical experiments that the vanilla eigenvalue-ratio estimator tend to overestimate the factor numbers, resulting in a part of idiosyncratic noise included in the estimated factors. Although this is not too problematic in practice, the effect of overestimating the factor numbers can be non-negligible \citep{BarigozziCho2020}.

To address all the above concern, we propose a multi-order tensor factor model to study a joint collection of tensor-valued time series, with at least the following contributions. As far as we are aware, our proposed model is the first work identifying and studying multiple tensor time series, each with potentially different orders and dimensions. Secondly, we adapt both techniques in \cite{Bai2003} and \cite{Lametal2011}, and establish theoretical rates of convergence that are often more superior than the results in, for example, \cite{Zhangetal2025}, even under very general dependence conditions. We also propose a consistent eigenvalue-ratio estimator for the global factor numbers, with a detailed discussion therein. The operator of tensor reshape introduced by \cite{CenLam2025_KronProd} is extended to facilitate natural representation of the global factors. Last but not least, a by-product of our model is a multilevel factor models for general tensor time series, with minor dimension requirements.

\subsection{Organisation of the paper}
After this Introduction, the rest of this paper is organised as follows. \Cref{sec: mm_tfm} introduces the data profile and formalises the multi-order tensor factor model. In detail, \Cref{subsec: model} concerns the identification issue, followed by parameter estimation in Sections~\ref{subsec: est_loading} and \ref{subsec: est_fac}. Assumptions are listed and explained in \Cref{subsec: assumption}, while theoretical guarantee of the estimated parameters is included in \Cref{subsec: theorem}. \Cref{sec: further_discussion} concerns various topics of interests. In detail, \Cref{subsec: explicit_global_fac} considers explicit forms of global factors, \Cref{subsec: est_num_fac} discusses the estimation of factor numbers, \Cref{subsec: bootstrap_implement} provides a computationally efficient implementation, and \Cref{subsec: assump_fac_satisfied} demonstrates alternative assumptions. \Cref{subsec: simulation} presents the numerical performance of the proposed methods by extensive Monte Carlo experiments, and finally, in \Cref{subsec: real_data}, a set of taxi traffic data is analysed using our model. All proof of theoretical results, together with additional details on taxi data, are relegated to the supplement.

\subsection{Notations}
Unless specified otherwise, we denote vectors, matrices and tensors by bold lower-case, bold capital, and calligraphic letters, i.e., $\bx$, $\bX$ and $\cX$, respectively. We also use $x_i, X_{ij}, \bX_{i\cdot}, \bX_{\cdot i}$ to denote, respectively, the $i$th element of a vector $\bx$, the $(i,j)$th element of $\bX$, the $i$th row vector (as a column vector) of $\bX$, and the $i$th column vector of $\bX$. We use $\otimes$ to represent the Kronecker product, and $\circ$ the Hadamard product. By convention, the total Kronecker product for an index set is computed in descending index. We use $a\asymp b$ to denote $a=O(b)$ and $b=O(a)$. Hereafter, given a positive integer $m$, define $[m]=\{1,2,\dots,m\}$. The $i$th largest eigenvalue and singular value of a matrix $\bX$ are denoted by $\lambda_i(\bX)$ and $\sigma_i(\bX)$, respectively. For an order-$K$ tensor $\cX = (X_{i_1,\ldots,i_K}) \in \mathbb{R}^{p_1\times \cdots\times p_K}$, we denote by $\mat_k(\cX) \in \R^{p_k\times \pmk}$ its \textit{mode-$k$ unfolding/matricisation}. We denote by $\cX \times_k \bA$ the \textit{mode-$k$ product} of a tensor $\cX$ with a matrix $A$, defined by $\mat_k(\cX) \times_k \bA = \bA\, \mat_k(\cX)$.

\section{Multi-order tensor factor models}\label{sec: mm_tfm}

\subsection{Model setup}\label{subsec: model}

Under some categorization $m\in[M]$ with fixed $M$, suppose that we observe a tensor time series $\{\cX_t^{(m)}\}$, $t\in[T]$, such that each $\cX_t^{(m)}$ is a mean-zero order-$K_m$ tensor of dimension $p_{m,1}\times \dots \times p_{m,K_m}$. Without loss of generality, let $K_1\leq \dots\leq K_M$. By joining $\{ \cX_t^{(1)}, \dots,\cX_t^{(M)} \}_{t\in[T]}$ as a collection (JXC hereafter), we assume JXC follows a \textit{multi-order tensor factor model} such that for each $m\in[M]$, the $m$th \textit{thread} $\{\cX_t^{(m)}\}$ admits the representation for each $t\in[T]$:
\begin{equation}
\label{eqn: model}
\cX_t^{(m)} = \cG_t^{(m)} \times_{k=1}^{K_m} \bA_{m,k} + \cF_t^{(m)} \times_{\ell=1}^{K_m} \bB_{m,\ell} + \cE_t^{(m)} ,
\end{equation}
where $\cG_t^{(m)}$ is the global core factor with dimension $r_{m,1}\times \dots\times r_{m,K_M}$, $\cF_t^{(m)}$ is the local core factor with dimension $u_{m,1}\times \dots\times u_{m,K_m}$, $\bA_{m,k}$ and $\bB_{m,\ell}$ ($k,\ell\in[K]$) are respectively the global and local factor loading matrices, and $\cE_t^{(m)}$ is the noise. For a non-trivial discussion, we assume $M>1$ throughout this paper, and otherwise \eqref{eqn: model} boils down to the classical Tucker-decomposition tensor factor model. Note that only JXC is observed and the right-hand side of \eqref{eqn: model} is latent.

Model \eqref{eqn: model} turns out to be a very general framework. A useful special case is when $K\equiv K_1=\dots =K_M$ and $\cG_t\equiv \cG_t^{(1)} =\dots=\cG_t^{(M)}$, which is commonly known as a multilevel/group factor model if $K=1$ \citep{Choietal2018, Huetal2025} or its matrix-valued extension for $K=2$ \citep{Zhangetal2025}. An immediate importance of \eqref{eqn: model} is thus a generalization of multilevel factor models to tensor time series. It might be worth to point out that even when the tensor orders are the same across $m\in[M]$, traditional factor models cannot directly characterize the factor structure in JXC, unless some forms of concatenation are carried out---which inevitably inflates the number of parameters and overlooks the structure along each mode.

Note that in most, if not all, multilevel factor models, the same global factor contributes to different groups, whereas the interplay among $\cG_t^{(m)}$ ($m\in[M]$) in our formulation \eqref{eqn: model} remains unclear. First, it should be straightforward that $\cG_t^{(m)} \neq \cG_t^{(n)}$ if $K_m\neq K_n$. More importantly, we choose not to specify, e.g., how $\cG_t^{(1)}$ affects $\cG_t^{(2)}$, and this lack of formulation is seen as an advantage in view of the generality of \eqref{eqn: model}. On the other hand, an explicit connection among the global core factors can better facilitate interpretation and forecasting on each thread $\cX_t^{(m)}$, and such a discussion is deferred to \Cref{subsec: explicit_global_fac} for interested readers.

We refer to $\cX_{G,t}^{(m)}:= \cG_t^{(m)} \times_{k=1}^{K_m} \bA_{m,k}$ and $\cX_{F,t}^{(m)}:= \cF_t^{(m)} \times_{\ell=1}^{K_m} \bB_{m,\ell}$ as the global and local components, respectively. Then with \eqref{eqn: model}, one can model multiple tensor time series with different orders at the same time, with the global component featuring the dependence of observed tensors across threads, which is new to the literature. To separate the global and local components, we impose the following identification conditions.

\begin{assumption}[Identification]\label{ass: identification}
For all $t\in[T]$, $m\in[M]$, we assume that: 
\begin{enumerate}[itemsep=0pt, label = (\alph*), left = 0pt]
    \item Any element in $\big\{ \cG_t^{(m)}, \cF_t^{(m)} \big\}$ has zero mean;
    \item Any element in $\big\{ \cG_t^{(m)}, \cF_t^{(m)} \big\}$ is uncorrelated to those in $\cF_t^{(n)}$ for all $n\neq m$;
    \item $\cG_t^{(m)}$ cannot be decomposed as $\cG_{t}^{(m)} =\cG_{t,1}^{(m)} + \cG_{t,2}^{(m)}$ with nonzero $\cG_{t,2}^{(m)}$ whose elements are uncorrelated to all elements in $\cG_t^{(n)}$ for all $n\neq m$.
\end{enumerate}
\end{assumption}

\Cref{ass: identification} is in the same spirit of Assumption~1 in \cite{Choietal2018} and Condition~6 in \cite{Zhangetal2025}. Part~(a) resolves the indeterminacy in mean. Parts~(b) and (c) together indicate that the local factor does not carry any dependence structure among the JXC, $\{\cX_t^{(1)}, \dots, \cX_t^{(M)} \}$. These assumptions are very mild in the sense that the identified global factor $\cG_t^{(m)}$ can be correlated to $\cF_t^{(m)}$. In comparison, such a correlation is prohibited in Assumption~1 in \cite{Choietal2018} for instance, due to their restricted $\cG_t^{(1)} =\dots= \cG_t^{(M)}$ which is required to be uncorrelated to the local factors. \Cref{ass: identification} allows us to conveniently identify the global and local components, and we present the result in the following theorem, with its proof relegated to the supplement.

\begin{theorem}
\label{prop: identification}
Let Assumption~\ref{ass: identification} hold. Then for any $t\in[T]$, the global and local components in \eqref{eqn: model} are identifiable in the sense that for any $m\in[M]$, two sets of parameters $(\cX_{G,t}^{(m)}, \cX_{F,t}^{(m)})$ and $(\wt\cX_{G,t}^{(m)}, \wt\cX_{F,t}^{(m)})$ satisfy $\cX_{G,t}^{(m)} + \cX_{F,t}^{(m)} = \wt\cX_{G,t}^{(m)} + \wt\cX_{F,t}^{(m)}$ if and only if $(\cX_{G,t}^{(m)}, \cX_{F,t}^{(m)}) =(\wt\cX_{G,t}^{(m)}, \wt\cX_{F,t}^{(m)})$.
\end{theorem}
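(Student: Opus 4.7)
Writing $\Delta^{(m)} := \cX_{G,t}^{(m)} - \wt\cX_{G,t}^{(m)} = \wt\cX_{F,t}^{(m)} - \cX_{F,t}^{(m)}$, the plan is to prove $\Delta^{(m)} = 0$ entry-wise for every $m \in [M]$, from which both $\cX_{G,t}^{(m)} = \wt\cX_{G,t}^{(m)}$ and $\cX_{F,t}^{(m)} = \wt\cX_{F,t}^{(m)}$ follow immediately. The strategy has two ingredients: a short orthogonality computation driven by Assumption~\ref{ass: identification}(b), followed by an application of Assumption~\ref{ass: identification}(c) via the Tucker factor representation.

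I would first use the local form $\Delta^{(m)} = \wt\cX_{F,t}^{(m)} - \cX_{F,t}^{(m)}$ together with the hypothesis $\cX_{G,t}^{(n)} + \cX_{F,t}^{(n)} = \wt\cX_{G,t}^{(n)} + \wt\cX_{F,t}^{(n)}$ to show that $\Delta^{(m)}$ is uncorrelated (entry-wise) with every entry of the signal $\cX_{G,t}^{(n)} + \cX_{F,t}^{(n)}$ for each $n \neq m$. Indeed, by part~(b) applied to the original decomposition, $\cX_{F,t}^{(m)}$ is uncorrelated with every entry of both $\cG_t^{(n)}$ and $\cF_t^{(n)}$, and hence with the sum $\cX_{G,t}^{(n)} + \cX_{F,t}^{(n)}$; the analogous claim for $\wt\cX_{F,t}^{(m)}$ against $\wt\cX_{G,t}^{(n)} + \wt\cX_{F,t}^{(n)}$ follows from (b) applied to the tilde decomposition, and the two sums coincide by hypothesis, so subtracting yields the claim.

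I would then switch to the global form $\Delta^{(m)} = \cX_{G,t}^{(m)} - \wt\cX_{G,t}^{(m)}$ and pull it back to the core factor. The standard full-column-rank condition on the Tucker loadings $\bA_{m,k}$ (the usual identifiability convention for a Tucker factor model) ensures that the mode products $\times_{k=1}^{K_m} \bA_{m,k}$ admit a left inverse, so $\cX_{G,t}^{(m)} = \wt\cX_{G,t}^{(m)} + \Delta^{(m)}$ lifts to $\cG_t^{(m)} = \cG_{t,1}^{(m)} + \cG_{t,2}^{(m)}$, where the entries of $\cG_{t,2}^{(m)}$ are linear combinations of those of $\Delta^{(m)}$. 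By the analogous pull-back through $\bA_{n,k}$, the entries of $\cG_t^{(n)}$ are linear combinations of entries of $\cX_{G,t}^{(n)}$, and combining this with the first step, together with (b) to peel off the $\cF_t^{(n)}$-contribution from the signal, shows that every entry of $\cG_{t,2}^{(m)}$ is uncorrelated with every entry of $\cG_t^{(n)}$ for all $n \neq m$. Part~(c) then forces $\cG_{t,2}^{(m)} = 0$ and hence $\Delta^{(m)} = 0$.

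I expect the main obstacle to be the step converting orthogonality of $\Delta^{(m)}$ with the observed sums into orthogonality with $\cG_t^{(n)}$ individually, because orthogonality with a sum does not split automatically. The key will be to use part~(b) simultaneously for both decompositions---so that the $\cF_t^{(n)}$- and $\wt\cF_t^{(n)}$-contributions coming from each decomposition can be eliminated separately---and to substitute $\cX_{G,t}^{(n)} = (\cX_{G,t}^{(n)}+\cX_{F,t}^{(n)}) - \cX_{F,t}^{(n)}$ so that each remaining piece is either already controlled by (b) or matches a quantity already shown to vanish. Once that refinement is in hand, the invocation of (c) is immediate and the argument proceeds symmetrically in $m$.
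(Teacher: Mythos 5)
Your proposal is correct and follows essentially the same route as the paper: both arguments use Assumption~\ref{ass: identification}(b), applied to each candidate decomposition of thread $m$ against the components of the other threads, to show that the difference of the global components is uncorrelated with every entry of $\cX_{G,t}^{(n)}$ for all $n\neq m$, and then invoke Assumption~\ref{ass: identification}(c) to force that difference to vanish. The only distinction is cosmetic: you make explicit the pull-back from components to core factors through the full-column-rank loadings before applying part~(c), a step the paper leaves implicit when it applies (c) directly at the component level.
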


\Cref{prop: identification} serves to separate the global and local components, which is sufficient for the discussion in this article, since both components admit Tucker decompositions and hence identification of parameters within has been well studied. For the same reason, the idiosyncratic noise is not involved in the proposition, as they can be separated from the global and local components by spiking eigenvalues in the unfolding covariance matrix.

\begin{remark}
Fix $m\in[M]$ and consider the dynamic of the $m$th thread only. \eqref{eqn: model} assumes a sparse tensor factor model in the sense that we can rewrite the model as
\begin{equation}
\label{eqn: Xt_rewrite_tfm}
\cX_t^{(m)} = \diag(\cG_t^{(m)}, \cF_t^{(m)}) \times_{k=1}^{K_m} (\bA_{m,k}, \bB_{m,k}) + \cE_t^{(m)} ,
\end{equation}
where $\diag(\cG_t^{(m)}, \cF_t^{(m)})$ denotes the block tensor consisting $\cG_t^{(m)}$ and $\cF_t^{(m)}$ on its diagonal. By modelling each observed tensor time series in the above way, more factors are introduced, which does not affect the theoretical results but undermines the estimation performance in practice. A more serious issue is that the dependence between threads is neglected and interpretation lost.
\end{remark}

\begin{remark}
With tensors with potentially different orders in JXC, a plausible method to model all tensor time series is to concatenate the vectorized data. In detail, note that for each $m\in[M]$, \eqref{eqn: model} implies $\vec(\cX_t^{(m)}) = (\otimes_{k=1}^{K_m} \bA_{m,k}, \otimes_{k=1}^{K_m} \bB_{m,k}) \big\{ \vec(\cG_t^{(m)})^\trans, \vec(\cF_t^{(m)})^\trans \big\}^\trans + \vec(\cE_t^{(m)})$. We thus have
{\small
\begin{align*}
    \begin{pmatrix}
    \vec(\cX_t^{(1)}) \\ \vec(\cX_t^{(2)}) \\ \vdots \\ \vec(\cX_t^{(M)})
    \end{pmatrix} 
    = \begin{pmatrix}
    \otimes_{k=1}^{K_1} \bA_{1,k} & \otimes_{k=1}^{K_1} \bB_{1,k}  & \ldots & \zero \\
    \zero & \zero &  \ldots & \zero \\
    \vdots & \vdots &  \ddots & \vdots \\
    \zero & \zero &  \ldots & \otimes_{k=1}^{K_M} \bB_{M,k}
    \end{pmatrix} \begin{pmatrix}
    \vec(\cG_t^{(1)}) \\ \vec(\cF_t^{(1)}) \\ \vdots \\ \vec(\cG_t^{(M)}) \\ \vec(\cF_t^{(M)})
    \end{pmatrix} + \begin{pmatrix}
    \vec(\cE_t^{(1)}) \\ \vec(\cE_t^{(2)}) \\ \vdots \\ \vec(\cE_t^{(M)})
    \end{pmatrix} ,
\end{align*}
}
or, if $\vec(\cG_t^{(1)}) =\dots =\vec(\cG_t^{(M)})$ under certain contexts (such as tensor multilevel factor models),
{\small
\begin{align*}
    \begin{pmatrix}
    \vec(\cX_t^{(1)}) \\ \vec(\cX_t^{(2)}) \\ \vdots \\ \vec(\cX_t^{(M)})
    \end{pmatrix} 
    = \begin{pmatrix}
    \otimes_{k=1}^{K_1} \bA_{1,k} & \otimes_{k=1}^{K_1} \bB_{1,k}  & \ldots & \zero \\
    \otimes_{k=1}^{K_2} \bA_{2,k} & \zero & \ldots & \zero \\
    \vdots & \vdots &  \ddots & \vdots \\
    \otimes_{k=1}^{K_M} \bA_{M,k} & \zero &  \ldots & \otimes_{k=1}^{K_M} \bB_{M,k}
    \end{pmatrix} \begin{pmatrix}
    \vec(\cG_t^{(1)}) \\ \vec(\cF_t^{(1)}) \\ \vdots \\ \vec(\cF_t^{(M)})
    \end{pmatrix} + \begin{pmatrix}
    \vec(\cE_t^{(1)}) \\ \vec(\cE_t^{(2)}) \\ \vdots \\ \vec(\cE_t^{(M)})
    \end{pmatrix} .
\end{align*}
}
However, both representations above significantly inflate the number of parameters in the loading matrix, and recovering the non-zero blocks within is also difficult in its own sake.
\end{remark}

\begin{remark}
\label{remark: absent_component}
In practice, we also allow for the absence of the global or local components (or both) in the representation \eqref{eqn: model} and by \Cref{ass: identification}, which is more realistic and reasonable from the perspective of data modelling. Although it can be attractive to make this mathematically rigorous in model \eqref{eqn: model} such as testing for the factor structure after estimating the components \citep[e.g.][for matrix time series]{Heetal2023}, it is beyond the scope of this paper and hence we refer to \Cref{subsec: est_num_fac} briefly discussing a practical solution and do not further pursue this line of analysis.
\end{remark}

\subsection{Estimation of the global and local loading matrices}
\label{subsec: est_loading}

In this subsection, we discuss the estimation of factor loading matrices in model \eqref{eqn: model}. First, for any $m\in[M]$, $k\in[K_m]$, define $\bA_{m,\text{-}k} := \otimes_{j\in[K_m] \setminus\{k\}} \bA_{m,j}$, and similarly $\bB_{m,\text{-}k}$. By convention, set $\bA_{m,\text{-}1}= \bB_{m,\text{-}1}= 1$ if $K_m=1$. Throughout this subsection, we treat the factor numbers $r_{m,k}$ and $u_{m,k}$ as given; the details of how to estimate them are deferred to \Cref{subsec: est_num_fac}.

To estimate the global factor loading matrices, consider the mode-$k$ unfolding of $\cX_t^{(m)}$ for each $t\in[T]$:
\begin{equation}
\label{eqn: unfold_Xt}
\mat_k(\cX_t^{(m)}) = \bA_{m,k} \mat_k(\cG_t^{(m)}) \bA_{m,\text{-}k}^\trans + \bB_{m,k} \mat_k(\cF_t^{(m)}) \bB_{m,\text{-}k}^\trans + \mat_k(\cE_t^{(m)}) .
\end{equation}
Then for any $n\neq m$, $\bi=(i_1,\ldots,i_{K_n}) \in [p_{n,1}] \times \cdots \times[p_{n,K_n}]$, we may define the filtered unfolding as
\begin{align*}
    \bOmega_{k,n,\bi}^{(m)} &:= \frac{1}{T} \sum_{t=1}^T \E\Big\{ \mat_k(\cX_t^{(m)}) \cdot (\cX_t^{(n)})_{\bi} \Big\} \\
    &= \bA_{m,k} \E\Bigg\{ \frac{1}{T} \sum_{t=1}^T \mat_k(\cG_t^{(m)}) \bA_{m,\text{-}k}^\trans \Big( \cG_t^{(n)} \times_{h=1}^{K_n} \bA_{n,h} \Big)_{\bi} \Bigg\} ,
\end{align*}
where the last equality used \Cref{ass: identification}(b) and \Cref{ass: dependence_moment}(b) (in \Cref{subsec: assumption}). Then enlightened by Equation~(4) in \cite{Lametal2011}, we introduce a non-negative definite matrix:
\begin{equation*}
\begin{split}
    \bSigma_{m,k} := \sum_{\substack{n=1\\ n\neq m}}^M \sum_{\bi \in [p_{n,1}] \times \cdots \times[p_{n,K_n}]} \bOmega_{k,n,\bi}^{(m)} \bOmega_{k,n,\bi}^{(m) \trans} ,
\end{split}
\end{equation*}
and observe that the $r_{m,k}$ leading eigenvectors of $\bSigma_{m,k}$ span the same linear space as $\bA_{m,k}$ does. Hence for any $m\in[M]$, $k\in[K_m]$, we may estimate $\bA_{m,k}$ by $\wh{\bA}_{m,k}$ which is defined as $\sqrt{p_{m,k}}$ times the $r_{m,k}$ leading eigenvectors of
\begin{equation}
\label{eqn: def_hat_Sigma_mk}
\begin{split}
    \wh\bSigma_{m,k} &:= \sum_{\substack{n=1\\ n\neq m}}^M \sum_{\bi \in [p_{n,1}] \times \cdots \times[p_{n,K_n}]} \wh\bOmega_{k,n,\bi}^{(m)} \wh\bOmega_{k,n,\bi}^{(m) \trans} ,
\end{split}
\end{equation}
where
\[
\wh\bOmega_{k,n,\bi}^{(m)} := \frac{1}{T} \sum_{t=1}^T \mat_k(\cX_t^{(m)}) \cdot (\cX_t^{(n)})_{\bi} .
\]

To estimate the local loading matrices, first consider $K_m> 1$. Let $\wh{\bA}_{m,k,\perp}$ be
all but the $r_{m,k}$ leading eigenvectors of $\wh\bSigma_{m,k}$, \text{viz.} $\wh{\bA}_{m,k,\perp}$ is the orthogonal complement of $\wh{\bA}_{m,k}$. Then from \eqref{eqn: unfold_Xt},
\begin{align*}
    &\;\quad
    \mat_k(\cX_t^{(m)}) \wh{\bA}_{m,\text{-}k,\perp} - \bA_{m,k} \mat_k(\cG_t^{(m)}) \bA_{m,\text{-}k}^\trans \wh{\bA}_{m,\text{-}k,\perp} \\
    &= \bB_{m,k} \mat_k(\cF_t^{(m)}) \bB_{m,\text{-}k}^\trans \wh{\bA}_{m,\text{-}k,\perp} + \mat_k(\cE_t^{(m)}) \wh{\bA}_{m,\text{-}k,\perp} ,
\end{align*}
where $\wh{\bA}_{m,\text{-}k,\perp} := \otimes_{j\in[K_m] \setminus\{k\}} \wh{\bA}_{m,j,\perp}$. Intuitively, $\wh{\bA}_{m,\text{-}k,\perp}$ is close to the orthogonal complement of $\bA_{m,\text{-}k}$, and hence $\bA_{m,\text{-}k}^\trans \wh{\bA}_{m,\text{-}k,\perp}$ is close to zero. This motivates a PCA-type estimator for $\bB_{m,k}$, denoted by $\wh{\bB}_{m,k}$ which is defined as $\sqrt{p_{m,k}}$ times the $u_{m,k}$ leading eigenvectors of
\begin{equation}
\label{eqn: def_hat_Sigma_B_mk}
\begin{split}
    \wh\bSigma_{B,m,k} &:= \frac{1}{Tp_m} \sum_{t=1}^T \mat_k(\cX_t^{(m)}) \wh{\bA}_{m,\text{-}k,\perp} \wh{\bA}_{m,\text{-}k,\perp}^\trans \mat_k(\cX_t^{(m)})^\trans ,
\end{split}
\end{equation}
where $p_m:= \prod_{k=1}^{K_m} p_{m,k}$. Now suppose $K_m=1$ and the above approach to obtain $\wh{\bB}_{m,1}$ is thus infeasible. Fortunately, \eqref{eqn: unfold_Xt} can be rewritten as
\[
\mat_1(\cX_t^{(m)}) = (\bA_{m,1}, \bB_{m,1}) \big\{ \mat_1(\cG_t^{(m)})^\trans, \mat_1(\cF_t^{(m)})^\trans \big\}^\trans + \mat_k(\cE_t^{(m)}) .
\]
Hence by leveraging again the orthogonal complement of the global loading matrix, we may compute $\wh{\bB}_{m,1}$ as $\sqrt{p_{m,1}}$ times the $u_{m,k}$ leading eigenvectors of
\begin{equation}
\label{eqn: def_tilde_Sigma_B_m1}
\begin{split}
    \wt\bSigma_{B,m,1} &:= \frac{1}{Tp_{m,1}} \sum_{t=1}^T \wh{\bA}_{m,1,\perp} \wh{\bA}_{m,1,\perp}^\trans \mat_1(\cX_t^{(m)}) \mat_1(\cX_t^{(m)})^\trans \wh{\bA}_{m,1,\perp} \wh{\bA}_{m,1,\perp}^\trans .
\end{split}
\end{equation}

\subsection{Estimation of the core factors and components}
\label{subsec: est_fac}

With $\wh{\bA}_{m,k}$, $\wh{\bA}_{m,k,\perp}$, and $\wh{\bB}_{m,k}$ ($m\in[M], k\in[K_m]$) obtained in \Cref{subsec: est_loading}, we have for any $t\in[T]$, $m\in[M]$,
\begin{align*}
    &\;\quad
    (\otimes_{k=1}^K \wh{\bA}_{m,k,\perp})^\trans \vec(\cX_t^{m}) - (\otimes_{k=1}^K \wh{\bA}_{m,k,\perp})^\trans (\otimes_{k=1}^K \bA_{m,k}) \vec(\cG_t^{(m)}) \\
    &= (\otimes_{k=1}^K \wh{\bA}_{m,k,\perp})^\trans (\otimes_{k=1}^K \bB_{m,k}) \vec(\cF_t^{(m)}) + (\otimes_{k=1}^K \wh{\bA}_{m,k,\perp})^\trans \vec(\cE_t^{m}) ,
\end{align*}
where we expect $(\otimes_{k=1}^K \wh{\bA}_{m,k,\perp})^\trans (\otimes_{k=1}^K \bA_{m,k})$ to be close to zero, as discussed in \Cref{subsec: est_loading}. Then with the notation $\wh{\bC}_m := (\otimes_{k=1}^K \wh{\bA}_{m,k,\perp})^\trans (\otimes_{k=1}^K \wh{\bB}_{m,k})$, the vectorized local factor can be estimated in a least squared manner such that
\begin{equation}
\label{eqn: def_hat_Ft}
\begin{split}
    \vec(\wh{\cF}_t^{(m)}) &:= (\wh{\bC}_m^\trans \wh{\bC}_m)^{-1} \wh{\bC}_m^\trans (\otimes_{k=1}^K \wh{\bA}_{m,k,\perp})^\trans \vec(\cX_t^{m}) .
\end{split}
\end{equation}
For any $t\in[T]$, $m\in[M]$, the global factor can also be formulated as in a linear regression form:
\begin{align*}
    &\;\quad
    \vec(\cX_t^{m}) - (\otimes_{k=1}^K \bB_{m,k}) \vec(\cF_t^{(m)}) = (\otimes_{k=1}^K \bA_{m,k}) \vec(\cG_t^{(m)}) + \vec(\cE_t^{m}) .
\end{align*}
Then similar to \eqref{eqn: def_hat_Ft}, we may plug in the previous estimators and arrive at
\begin{equation}
\label{eqn: def_hat_Gt}
\begin{split}
    \vec(\wh{\cG}_t^{(m)}) &:= \Big\{ (\otimes_{k=1}^K \wh{\bA}_{m,k})^\trans (\otimes_{k=1}^K \wh{\bA}_{m,k}) \Big\}^{-1} \\
    &\;\quad
    \cdot (\otimes_{k=1}^K \wh{\bA}_{m,k})^\trans \Big\{ \vec(\cX_t^{m}) - (\otimes_{k=1}^K \wh{\bB}_{m,k}) \vec(\wh\cF_t^{(m)}) \Big\} .
\end{split}
\end{equation}

Finally, the global and local common components are respectively estimated by
\[
\wh\cX_{G,t}^{(m)}:= \wh\cG_t^{(m)} \times_{k=1}^{K_m} \wh{\bA}_{m,k}, \quad
\wh\cX_{F,t}^{(m)}:= \wh\cF_t^{(m)} \times_{\ell=1}^{K_m} \wh{\bB}_{m,\ell} .
\]

\begin{remark}
\label{remark: hat_Sigma_rewrite}
Suppose all tensor orders are the same, i.e., $K_1=\dots=K_M$. For any $k\in[K_1]$, define $p_{n,\text{-}k} =p_n/p_{n,k}$ and the matrix
\[
\wh\bOmega_{k,n,i,j}^{(m)} := \frac{1}{T} \sum_{t=1}^T \mat_k(\cX_t^{(m)})_{\cdot j} \mat_k(\cX_t^{(n)})_{\cdot i}^\trans .
\]
Then we may rewrite $\wh\bSigma_{m,k}$ defined in \eqref{eqn: def_hat_Sigma_mk} as
\begin{align*}
    \wh\bSigma_{m,k} &= \frac{1}{T^2} \sum_{\substack{n=1\\ n\neq m}}^M \sum_{t,s=1}^T \mat_k(\cX_t^{(m)}) \mat_k(\cX_s^{(m)})^\trans \sum_{\bi \in [p_{n,1}] \times \cdots \times[p_{n,K_n}]} (\cX_t^{(n)})_{\bi} (\cX_s^{(n)})_{\bi} \\
    &=
    \frac{1}{T^2} \sum_{\substack{n=1\\ n\neq m}}^M \sum_{t,s=1}^T \Big\{ \vec(\cX_s^{(n)})^\trans \vec(\cX_t^{(n)}) \Big\} \otimes \Big\{ \mat_k(\cX_t^{(m)}) \mat_k(\cX_s^{(m)})^\trans \Big\} \\
    &=
    \frac{1}{T^2} \sum_{\substack{n=1\\ n\neq m}}^M \sum_{t,s=1}^T \tr\Big\{ \sum_{i=1}^{p_{n,\text{-}k}} \mat_k(\cX_s^{(n)})_{\cdot i} \mat_k(\cX_t^{(n)})_{\cdot i}^\trans \Big\} \Big\{ \mat_k(\cX_t^{(m)}) \mat_k(\cX_s^{(m)})^\trans \Big\} \\
    &=
    \frac{1}{T^2} \sum_{\substack{n=1\\ n\neq m}}^M \sum_{i=1}^{p_{n,\text{-}k}} \sum_{t,s=1}^T \Big\{ \mat_k(\cX_t^{(n)})_{\cdot i}^\trans  \mat_k(\cX_s^{(n)})_{\cdot i} \Big\} \mat_k(\cX_t^{(m)}) \mat_k(\cX_s^{(m)})^\trans \\
    &=
    \frac{1}{T^2} \sum_{\substack{n=1\\ n\neq m}}^M \sum_{i,j=1}^{p_{n,\text{-}k}} \sum_{t,s=1}^T \mat_k(\cX_t^{(m)})_{\cdot j} \mat_k(\cX_t^{(n)})_{\cdot i}^\trans \mat_k(\cX_s^{(n)})_{\cdot i} \mat_k(\cX_s^{(m)})_{\cdot j}^\trans \\
    &=
    \sum_{\substack{n=1\\ n\neq m}}^M \sum_{i,j=1}^{p_{n,\text{-}k}} \wh\bOmega_{k,n,i,j}^{(m)} \wh\bOmega_{k,n,i,j}^{(m) \trans} ,
\end{align*}
which provides an equivalent way to compute $\wh\bSigma_{m,k}$. When $K_1=\dots=K_M=2$, the above $\wh\bOmega_{k,n,i,j}^{(m)}$ coincides with the notation used in \cite{Zhangetal2025} to estimate the global factor loading matrix. This indicates that under the special case of multilevel matrix factor models, our estimator $\wh\bA_{m,k}$ is the same as that by \cite{Zhangetal2025}.
\end{remark}

\section{Assumptions and Theoretical Results}\label{sec: theorem}

\subsection{Main assumptions}\label{subsec: assumption}

For each $m\in[M]$, define $p_{m,\text{-}k} =p_m/p_{m,k}$, $r_m= \prod_{k=1}^{K_m} r_{m,k}$, and $r_{m,\text{-}k} =r_m/r_{m,k}$. In the following, we present assumptions on the factor structure, noise, and dependence structure.

\begin{assumption}[Loading matrices]\label{ass: loadings}
For any $m\in[M]$, $k\in[K_m]$, there exists some constant $c$ fulfilling the follows.
\begin{enumerate}[itemsep=0pt, label = (\alph*), left = 0pt]
    \item $\|\bA_{m,k}\|_{\max}, \|\bB_{m,k}\|_{\max} \leq c <\infty$.
    \item $\big\| p_{m,k}^{-1} (\bA_{m,k}, \bB_{m,k})^\trans (\bA_{m,k}, \bB_{m,k}) -\bI_{r_{m,k}+u_{m,k}} \big\| \leq c p_{m,k}^{-1/2}$ for all $p_{m,k}\in \N$.
\end{enumerate}
\end{assumption}

\begin{assumption}[Core factor]\label{ass: core_factor}
For all $m\in[M]$, we assume that:
\begin{enumerate}[itemsep=0pt, label = (\alph*), left = 0pt]
    \item Any element in $\big\{ \cG_t^{(m)}, \cF_t^{(m)} \big\}$ has bounded fourth moment.
    \item For any $k\in[K_m]$, as $T\to\infty$, we have
    \[
    \frac{1}{T} \sum_{t=1}^T \mat_k(\cF_t^{(m)}) \mat_k(\cF_t^{(m)})^\trans \xrightarrow{P} \bSigma_{F,m,k} ,
    \]
    where $\bSigma_{F,m,k}$ is a positive definite matrix with finite eigenvalues.
    \item For any $n\in [M]$, $k\in[K_m]$, any set of matrices $\{ \bA_1, \dots, \bA_{K_n}\}$, where each $\bA_i\in \R^{p_{n,i} \times r_{n,i}}$ has full column rank and all eigenvalues bounded away from zero and infinity, there exists some positive constant $c$ such that as $T\to\infty$, for any $j\in[r_{m,k}]$:
    \[
    \sigma_{j}\bigg( \E\Big[ \mat_k(\cG_t^{(m)}) \otimes \Big\{ \vec(\cG_t^{(n)})^\trans \big( \otimes_{h=1}^{K_n} \bA_{h} \big)^\trans \Big\} \Big] \bigg) \asymp c .
    \]
\end{enumerate}
\end{assumption}

\begin{assumption}[Idiosyncratic noise]\label{ass: noise}
For all $m\in[M]$, $t\in[T]$, we assume that:
\begin{enumerate}[itemsep=0pt, label = (\alph*), left = 0pt]
    \item Any element in $\cE_t^{(m)}$ has zero mean and bounded fourth moment.
    \item For all $k\in[K_m]$, $t\in[T]$, $i\in[p_{m,k}]$, $j\in[p_{m,\text{-}k}]$, there exists some constant $c$ such that
    \[
    \sum_{s=1}^T \sum_{l=1}^{p_{m,k}} \sum_{h=1}^{p_{m,\text{-}k}} \Big| \E\Big\{ \mat_k(\cE_t^{(m)})_{ij} \mat_k(\cE_s^{(m)})_{lh} \Big\} \Big| \leq c .
    \]
    \item For all $k\in[K_m]$, $i,l_1\in[p_{m,k}]$, $j,h_1\in[p_{m,\text{-}k}]$, there exists some constant $c$ such that
    {\small
    \begin{align*}
        & \sum_{s=1}^T \sum_{l_2=1}^{p_{m,k}} \sum_{h_2=1}^{p_{m,\text{-}k}} \Big| \cov\Big\{ \mat_k(\cE_t^{(m)})_{ij} \mat_k(\cE_t^{(m)})_{l_1 j} , \mat_k(\cE_s^{(m)})_{ih_2} \mat_k(\cE_s^{(m)})_{l_2 h_2} \Big\} \Big| \leq c, \\
        & \sum_{s=1}^T \sum_{l_2=1}^{p_{m,k}} \sum_{h_2=1}^{p_{m,\text{-}k}} \Big| \cov\Big\{ \mat_k(\cE_t^{(m)})_{ij} \mat_k(\cE_t^{(m)})_{i h_1} , \mat_k(\cE_s^{(m)})_{l_2 j} \mat_k(\cE_s^{(m)})_{l_2 h_2} \Big\} \Big| \leq c, \\
        & \sum_{s=1}^T \sum_{l_2=1}^{p_{m,k}} \sum_{h_2=1}^{p_{m,\text{-}k}} \Big| \cov\Big\{ \mat_k(\cE_t^{(m)})_{ij} \mat_k(\cE_t^{(m)})_{l_1 h_1} , \mat_k(\cE_s^{(m)})_{ij} \mat_k(\cE_s^{(m)})_{l_2 h_2} \Big\} \Big| \leq c, \\
        & \sum_{s=1}^T \sum_{l_2=1}^{p_{m,k}} \sum_{h_2=1}^{p_{m,\text{-}k}} \Big| \cov\Big\{ \mat_k(\cE_t^{(m)})_{l_1 j} \mat_k(\cE_t^{(m)})_{i h_1} , \mat_k(\cE_s^{(m)})_{l_2 j} \mat_k(\cE_s^{(m)})_{i h_2} \Big\} \Big| \leq c .
    \end{align*}
    }
\end{enumerate}
\end{assumption}

\begin{assumption}[Dependence and moments across threads]\label{ass: dependence_moment}
For all $m\in[M]$: 
\begin{enumerate}[itemsep=0pt, label = (\alph*), left = 0pt]
    \item Any element in $\cX_t^{(m)}$ is uncorrelated to those in $\cE_t^{(n)}$ for all $t\in[T]$, $n\neq m$.
    \item For all $n\neq m$, there exists some constant $c<\infty$ such that
    \begin{align*}
    & \Bigg\| \var\Bigg\{ \frac{1}{\sqrt{T}} \sum_{t=1}^T \vec(\cX_t^{(m)}) \otimes \vec(\cE_t^{(n)}) \Bigg\} \Bigg\|_{\max} \leq c , \\
    & \Bigg\| \var\Bigg\{ \frac{1}{\sqrt{T}} \sum_{t=1}^T \vec(\cG_t^{(m)}) \otimes \vec(\cF_t^{(n)}) \Bigg\} \Bigg\|_{\max} \leq c , \\
    & \Bigg\| \var\Bigg\{ \frac{1}{\sqrt{T}} \sum_{t=1}^T \vec(\cF_t^{(m)}) \otimes \vec(\cF_t^{(n)}) \Bigg\} \Bigg\|_{\max} \leq c , \\
    & \Bigg\| \var\Bigg\{ \frac{1}{\sqrt{T}} \sum_{t=1}^T \vec(\cG_t^{(m)}) \otimes \vec(\cG_t^{(n)}) \Bigg\} \Bigg\|_{\max} \leq c .
    \end{align*}
\end{enumerate}
\end{assumption}

\begin{assumption}[Dependence and moments within threads]\label{ass: dependence_thread}
For all $m\in[M]$, $k\in[K_m]$, we assume that for any deterministic vectors $\bv\in\R^{p_{m,k}}$, $\bw\in\R^{p_{m,\text{-}k}}$, $\bg\in\R^{r_{m,k}}$, and $\bh\in\R^{r_{m,\text{-}k}}$ with $\|\bv\|, \|\bw\|, \|\bg\|, \|\bh\| =1$, there exists some constant $c$ such that: 
\begin{enumerate}[itemsep=0pt, label = (\alph*), left = 0pt]
    \item $\E\Big\| T^{-1/2} \sum_{t=1}^T \mat_k(\cF_t^{(m)}) \big\{ \bv^\trans \mat_k(\cE_t^{(m)}) \bw \big\} \Big\|_F^2 \leq c$.
    \item $\E\Big\| T^{-1/2} \sum_{t=1}^T \mat_k(\cF_t^{(m)}) \big\{ \bg^\trans \mat_k(\cG_t^{(m)}) \bh \big\} \Big\|_F^2 \leq c$.
    \item $\E\Big\| T^{-1} \sum_{t=1}^T \mat_k(\cG_t^{(m)}) \big\{ \bv^\trans \mat_k(\cE_t^{(m)}) \bw \big\} \Big\|_F^2 \leq c$.
\end{enumerate}

\end{assumption}

\Cref{ass: loadings} is standard to describing the factor loading matrices. Part~(b) can be seen as an identification condition. It also indicates that the magnitude of each column of the loading matrices grows proportionally to the dimension $p_{m,k}$---hence factors being pervasive---at the rate $p_k^{-1/2}$, which is also seen in Assumption~2(ii) in \cite{barigozzi2024moving}. The asymptotic orthogonality between $\bA_{m,k}$ and $\bB_{m,k}$ is arguably very mild, compared to the exact orthogonality between global and local factors in Assumption~A1 in \cite{Huetal2025}. Here, our rate of convergence together with the asymptotic orthogonality actually allow the local loading estimator to have faster rate of convergence than the global loading estimator; see \Cref{thm: local_loading}. It is also worth to remark that weak factors as in \cite{Wangetal2019} (\text{cf.} Condition~4 within) are possible, and we decide to consider strong factors only for improving the reading experience.

Assumption~\ref{ass: core_factor}(a) constrains the moments of core factors, while part~(b) describes the local factors and is very common in the factor model literature. Part~(c) complements \Cref{ass: identification} by pinpointing the behaviour of global factors, which is not particularly strong; see the discussion in \Cref{subsec: assump_fac_satisfied} on how part~(c) can be satisfied.
We also note that Assumption~\ref{ass: core_factor}(c) is not the weakest possible in terms of dependency in the global factors, since it requires $\cG_t^{(m)}$ to be correlated to all $\cG_t^{(n)}$ for $n\neq m$. It is possible to allow $\cG_t^{(m)}$ to be independent of global factors in some threads, as long as there is at least one $\cG_t^{(n)}$ with $n\neq m$ that it is correlated to. However, This relaxation could come at the cost of very restricted dimensionality assumptions, and hence is not pursued here.

\Cref{ass: noise} includes a set of regularity conditions for the idiosyncratic noise. Parts~(b) and (c) are essentially the same as Assumption~3 in \cite{barigozzi2022statistical} and Assumption~D in \cite{yu2022projected} for matrix factor models. We point out that requiring elements to have bounded fourth moment in Assumption~\ref{ass: noise} is weaker than many literature restricting eighth moment to be bounded \citep[e.g.][]{Choietal2018, ChenFan2023, Huetal2025}.

Lastly, Assumptions~\ref{ass: dependence_moment} and \ref{ass: dependence_thread} depict general moment conditions between and within threads, respectively. \Cref{ass: dependence_moment}(a) strengthens \Cref{ass: identification} such that the correlation among different threads is all pooled inside global factors, whereas \Cref{ass: dependence_moment}(b) is comparable to, for example, Assumption~4 in \cite{barigozzi2022statistical}. \Cref{ass: dependence_thread} significantly relax the dependence structure in similar model framework such as multilevel factor models. Overall, it implies the global component, local component, and noise can be pairwise correlated. A more salient feature according to part~(c) is that the correlation between global factor and noise can be quite strong, while still ensuring satisfactory theoretical results.

\subsection{Theoretical results}\label{subsec: theorem}

To appreciate the performance of parameter estimators, we first show the rate of convergence for the global loading matrix estimator as follows.

\begin{theorem}[Asymptotic consistency of global loading estimators]\label{thm: global_loading}
Under Assumptions~\ref{ass: identification}--\ref{ass: dependence_moment}, it holds for any $m\in[M]$, $k\in[K_m]$ that
\begin{align*}
    \frac{1}{p_{m,k}} \Big\| \wh{\bA}_{m,k} -\bA_{m,k} \Big\|_F^2 = \cO_P\Big( \frac{1}{T} \Big) .
\end{align*}
\end{theorem}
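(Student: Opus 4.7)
The plan is to follow the standard PCA-plus-perturbation recipe used in \cite{Lametal2011} and its tensor extensions. The first step is to pin down the algebraic structure of the population target. By Assumption~\ref{ass: identification}(b) and \ref{ass: dependence_moment}(a), cross terms from $\cF_t^{(n)}$ and $\cE_t^{(n)}$ drop out, so that
\[
\bOmega_{k,n,\bi}^{(m)} = \bA_{m,k}\, \mathbf{D}_{n,\bi}^{(m,k)}
\]
for some $r_{m,k}\times p_{m,\text{-}k}$ matrix $\mathbf{D}_{n,\bi}^{(m,k)}$, and hence $\bSigma_{m,k} = \bA_{m,k}\,\mathbf{K}_{m,k}\,\bA_{m,k}^{\trans}$ with $\mathbf{K}_{m,k}=\sum_{n\neq m}\sum_{\bi}\mathbf{D}_{n,\bi}^{(m,k)}\mathbf{D}_{n,\bi}^{(m,k)\trans}$. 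Applying Assumption~\ref{ass: core_factor}(c) to each $n\neq m$ with the admissible choice $\bA_{h}=\bA_{n,h}$ (whose eigenvalue profile is legitimate by Assumption~\ref{ass: loadings}) gives $\lambda_{r_{m,k}}(\mathbf{K}_{m,k})\asymp \lambda_{1}(\mathbf{K}_{m,k})\asymp 1$. Combined with Assumption~\ref{ass: loadings}(b), namely $\bA_{m,k}^{\trans}\bA_{m,k}=p_{m,k}\bI_{r_{m,k}}+O(\sqrt{p_{m,k}})$, the top $r_{m,k}$ eigenvalues of $\bSigma_{m,k}$ are of order $p_{m,k}$ with a gap of the same order above the remaining (zero) eigenvalues, and the column space of $\bA_{m,k}$ coincides with the top-$r_{m,k}$ eigenspace up to an asymptotic orthogonal rotation, which I absorb into $\bA_{m,k}$ in line with the theorem statement.

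The second, heavier step is to show
\[
\|\wh\bSigma_{m,k}-\bSigma_{m,k}\|_{\op} = O_P\!\left(\frac{p_{m,k}}{\sqrt{T}}\right).
\]
I would split
\[
\wh\bSigma_{m,k}-\bSigma_{m,k} = \sum_{n\neq m}\sum_{\bi}\bigl[(\wh\bOmega-\bOmega)\bOmega^{\trans}+\bOmega(\wh\bOmega-\bOmega)^{\trans}+(\wh\bOmega-\bOmega)(\wh\bOmega-\bOmega)^{\trans}\bigr],
\]
and handle the three pieces separately. For each $(k,n,\bi)$, expanding $\cX_t^{(m)}$ and $\cX_t^{(n)}$ into their global, local and noise parts produces nine cross-product classes; those involving $\cF_t^{(m)}$--$\cF_t^{(n)}$, $\cF_t^{(m)}$--$\cG_t^{(n)}$ or any $\cE^{(n)}$ have zero mean by Assumptions~\ref{ass: identification}(b) and \ref{ass: dependence_moment}(a), while the $\cG^{(m)}$--$\cG^{(n)}$ class reproduces $\bOmega_{k,n,\bi}^{(m)}$. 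The fluctuations of each remaining class, after summing over $\bi$ and $n$, are controlled by the across-thread variance bounds in Assumption~\ref{ass: dependence_moment}(b), the within-thread bounds in Assumption~\ref{ass: dependence_thread}, and the noise bounds in Assumption~\ref{ass: noise}. Feeding this into the $\sin\Theta$ bound for the top-$r_{m,k}$ eigen-projection yields
\[
\|\wh{\bA}_{m,k} - \bA_{m,k}\|_F^{2}\;\lesssim\; p_{m,k}\cdot \frac{\|\wh\bSigma_{m,k} - \bSigma_{m,k}\|_{\op}^{2}}{\lambda_{r_{m,k}}(\bSigma_{m,k})^{2}} = O_P\!\left(\frac{p_{m,k}}{T}\right),
\]
and dividing by $p_{m,k}$ gives the claim. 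The leading factor $p_{m,k}$ arises from the normalisation $\wh\bA_{m,k}=\sqrt{p_{m,k}}\,\wh{\bU}$ for the top eigenvectors $\wh{\bU}$ of $\wh\bSigma_{m,k}$.

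The main obstacle is the bookkeeping in the second step: the outer double sum runs over $\asymp\sum_{n\neq m}p_n$ indices, so entrywise bounds on $\wh\bOmega-\bOmega$ are too crude. The trick, foreshadowed by Remark~\ref{remark: hat_Sigma_rewrite}, is to collapse $\sum_{\bi}\wh\bOmega\,\wh\bOmega^{\trans}$ into a double time sum weighted by the scalar inner product $\langle\vec(\cX_t^{(n)}),\vec(\cX_s^{(n)})\rangle$; the centred version then becomes a sandwich of sample cross-covariances whose variance is handled directly by the $\max$-norm bounds on $\var\{T^{-1/2}\sum_t\vec(\cX_t^{(m)})\otimes\vec(\cE_t^{(n)})\}$ and its relatives in Assumption~\ref{ass: dependence_moment}(b). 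Extra care is needed for the $\cG^{(m)}$--$\cE^{(m)}$ coupling inside a single thread, where Assumption~\ref{ass: dependence_thread}(c) allows strong dependence: independence arguments are unavailable there, and part~(c) is precisely what keeps that class at the target rate.
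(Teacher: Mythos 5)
Your architecture is the same as the paper's: establish the population eigen-gap of $\bSigma_{m,k}$ from Assumptions~\ref{ass: loadings}(b) and \ref{ass: core_factor}(c), control $\wh\bSigma_{m,k}-\bSigma_{m,k}$ through the nine-term expansion of $\wh\bOmega_{k,n,\bi}^{(m)}$ using the variance bounds of Assumption~\ref{ass: dependence_moment}(b), and finish with an eigen-space perturbation inequality (the paper uses Lemma~3 of \cite{Lametal2011}). However, your scaling of the two key intermediate quantities is wrong, even though the errors cancel in the final ratio. Writing $\bSigma_{m,k}=\bA_{m,k}\mathbf{K}_{m,k}\bA_{m,k}^{\trans}$, the kernel $\mathbf{K}_{m,k}$ contains the sandwich $\bA_{m,\text{-}k}^{\trans}\bA_{m,\text{-}k}\asymp p_{m,\text{-}k}\bI$ inherited from $\bOmega_{k,n,\bi}^{(m)}\bOmega_{k,n,\bi}^{(m)\trans}$, and the sum over $\bi$ together with the unnormalised $\bA_{n,h}$ contributes a further factor of $\sum_{n\neq m}p_n$ (this is exactly how Assumption~\ref{ass: core_factor}(c) is deployed in the paper: the relevant singular values scale like $\sqrt{p_n}$ once the loadings are restored to their pervasive normalisation). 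Hence $\lambda_{r_{m,k}}(\bSigma_{m,k})\asymp p_m\sum_{n\neq m}p_n$, not $p_{m,k}$, and the perturbation you must --- and can --- prove is $\|\wh\bSigma_{m,k}-\bSigma_{m,k}\|=\cO_P\big(p_m\sum_{n\neq m}p_n/\sqrt{T}\big)$; the bound $\cO_P(p_{m,k}/\sqrt{T})$ you state is false and unprovable. The ratio is still $\cO_P(1/\sqrt{T})$, so the theorem survives, but as written both of your intermediate claims would fail if pursued.

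Your closing paragraph also worries about a within-thread $\cG^{(m)}$--$\cE^{(m)}$ coupling and invokes Assumption~\ref{ass: dependence_thread}(c). No such term arises in this estimator: $\wh\bOmega_{k,n,\bi}^{(m)}$ multiplies $\mat_k(\cX_t^{(m)})$ only by scalar entries of $\cX_t^{(n)}$ with $n\neq m$, so every one of the nine cross-product classes pairs a component of thread $m$ with a component of thread $n$, and all are handled by Assumption~\ref{ass: dependence_moment}(b) applied symmetrically in $m$ and $n$. Assumption~\ref{ass: dependence_thread} is not among the hypotheses of this theorem and is not needed here; it enters only for the local loading and factor estimators.
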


\Cref{thm: global_loading} is an important step for the results hereafter. The approach to obtain $\wh{\bA}_{m,k}$ leverages the condition that the global factor is uncorrelated to local factors and noise in different threads, which is akin to the approach in \cite{Lametal2011} and \cite{Wangetal2019} assuming the noise is white. Hence it is by no accident that our rate of $1/T$ coincides with \text{e.g.} Theorem~1 in \cite{Wangetal2019} under strong factors and the same normalization.

Note that the rate of convergence in \Cref{thm: global_loading} does not involve any cross-sectional dimension. First, in a vector factor model ($K_m=1$), such a $1/T$ rate is at least as good as the rate of convergence for the PCA-type estimator in \cite{BaiNg2002}, or even more superior when $p_{m,1}$ grows in a slower rate than $T$. For higher-order threads, i.e., $K_m>1$, it can be intuited by the difficulty in relating tensor times series with different orders, and the global factor is only identified through the correlation among threads.

In multilevel matrix factor models, \cite{Zhangetal2025} estimate the global loading matrix in a similar way and essentially the same result as our \Cref{thm: global_loading}. However, the same approach is also used in estimating the local loading matrix, resulting in the $1/T$ rate again. This overlooks the opportunity to improve the result by involving information from other modes (rows and columns in a matrix-valued data). Our PCA-type loading estimator based on \eqref{eqn: def_hat_Sigma_B_mk} resolves this. Some results are now in order, providing a theoretical guarantee for the remaining parameter estimators.

\begin{theorem}[Asymptotic consistency of local loading estimators]\label{thm: local_loading}
Let Assumptions~\ref{ass: identification}--\ref{ass: dependence_thread} hold. The following claims are true.
\begin{enumerate}[itemsep=0pt, label = (\alph*), left = 0pt]
    \item For any $m\in[M]$, $k\in[K_m]$ with $K_m> 1$, let $\wh{\bD}_{m,k}$ be the diagonal matrix consisting of the first $u_{m,k}$ eigenvalues of $\wh\bSigma_{B,m,k}$ defined in \eqref{eqn: def_hat_Sigma_B_mk}. Define
    \begin{align*}
        \wh{\bH}_{m,k} &= \frac{1}{Tp_m} \sum_{t=1}^T \mat_k(\cF_t^{(m)}) \bB_{m,\text{-}k}^\trans \wh{\bA}_{m,\text{-}k,\perp} \\
        &\;\quad
        \cdot \wh{\bA}_{m,\text{-}k,\perp}^\trans \bB_{m,\text{-}k} \mat_k(\cF_t^{(m)})^\trans \bB_{m,k}^\trans \wh{\bB}_{m,k} \wh{\bD}_{B,m,k}^{-1} .
    \end{align*}
    Then $\wh{\bH}_{m,k}$ is asymptotically invertible such that $\wh{\bH}_{m,k} \wh{\bH}_{m,k}^\trans = \bI + o_P(1)$. Moreover, the PCA-type estimator $\wh{\bB}_{m,k}$ satisfies
    \[
    \frac{1}{p_{m,k}} \Big\| \wh{\bB}_{m,k} - \bB_{m,k} \wh{\bH}_{m,k} \Big\|_F^2 = \cO_P\Big( \frac{1}{p_{m,k}^2} + \frac{1}{T p_{m,k}} + \frac{1}{T p_{m,\text{-}k}} \Big) .
    \]
    \item For any $m\in[M]$ with $K_m= 1$, let $\wt{\bD}_{m,1}$ be the diagonal matrix consisting of the first $u_{m,1}$ eigenvalues of $\wt\bSigma_{B,m,1}$ defined in \eqref{eqn: def_tilde_Sigma_B_m1}. Define
    \[
    \wt{\bH}_{m,1} = \frac{1}{Tp_{m,1}} \sum_{t=1}^T \mat_1(\cF_t^{(m)}) \mat_1(\cF_t^{(m)})^\trans \bB_{m,1}^\trans \wh{\bA}_{m,1,\perp} \wh{\bA}_{m,1,\perp}^\trans \bB_{m,1}^\trans \wh{\bB}_{m,1} \wt{\bD}_{B,m,1}^{-1} .
    \]
    Then $\wt{\bH}_{m,1}$ is asymptotically invertible, and the estimator $\wh{\bB}_{m,1}$ based on \eqref{eqn: def_tilde_Sigma_B_m1} satisfies
    \[
    \frac{1}{p_{m,1}} \Big\| \wh{\bB}_{m,1} - \bB_{m,1} \wt{\bH}_{m,1} \Big\|_F^2 = \cO_P\Big( \frac{1}{p_{m,1}^2} + \frac{1}{T} \Big) .
    \]
\end{enumerate}
\end{theorem}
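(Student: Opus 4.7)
The plan is to mimic the classical PCA perturbation argument used for factor models (as in the proof of Theorem~3.1 and in works such as \cite{Bai2003,yu2022projected}), but with the twist that the data is pre-multiplied (or sandwiched, in part (b)) by the projector $\wh{\bA}_{m,\text{-}k,\perp}\wh{\bA}_{m,\text{-}k,\perp}^\trans$. The starting identity will be the eigen-equation $\wh\bSigma_{B,m,k}\wh{\bB}_{m,k}=\wh{\bB}_{m,k}\wh{\bD}_{B,m,k}$ together with the normalization $\wh{\bB}_{m,k}^\trans\wh{\bB}_{m,k}/p_{m,k}=\bI_{u_{m,k}}$. Substituting the model unfolding \eqref{eqn: unfold_Xt} into $\wh\bSigma_{B,m,k}$ expands it into nine terms; one of them is the signal matrix $p_m^{-1}\bB_{m,k}\bigl(T^{-1}\sum_t \mat_k(\cF_t^{(m)})\bB_{m,\text{-}k}^\trans\wh{\bA}_{m,\text{-}k,\perp}\wh{\bA}_{m,\text{-}k,\perp}^\trans\bB_{m,\text{-}k}\mat_k(\cF_t^{(m)})^\trans\bigr)\bB_{m,k}^\trans$, and the definition of $\wh{\bH}_{m,k}$ is exactly tailored so that $\wh{\bB}_{m,k}-\bB_{m,k}\wh{\bH}_{m,k}$ equals a sum of eight remainder terms, each divided by $\wh{\bD}_{B,m,k}$ (whose inverse is $\cO_P(1)$ by Assumptions~\ref{ass: loadings} and \ref{ass: core_factor}(b)).

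\textbf{Key rates for part (a).} I would first show the ``leakage'' estimate
\[
\tfrac{1}{p_{m,\text{-}k}}\bigl\|\bA_{m,\text{-}k}^\trans\wh{\bA}_{m,\text{-}k,\perp}\bigr\|_F^2=\cO_P(1/T),
\]
obtained by writing $\wh{\bA}_{m,\text{-}k,\perp}=\otimes_{j\neq k}\wh{\bA}_{m,j,\perp}$, using the Kronecker-product identity for inner products, and invoking Theorem~\ref{thm: global_loading} on each $\wh{\bA}_{m,j,\perp}$. This leakage estimate controls all five remainder terms that involve at least one global-loading factor $\bA_{m,k}\mat_k(\cG_t^{(m)})\bA_{m,\text{-}k}^\trans$, yielding contributions of order $1/T$ scaled by appropriate powers of $p_{m,k}$ and $p_{m,\text{-}k}$. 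Next, for the two pure local$\times$noise cross terms I would use Assumption~\ref{ass: dependence_thread}(a) together with Assumption~\ref{ass: loadings} on the magnitude of $\bB_{m,\text{-}k}$, producing the $1/(Tp_{m,k})$ and $1/(Tp_{m,\text{-}k})$ rates. Finally, for the noise$\times$noise term, I would appeal to Assumption~\ref{ass: noise}(b)-(c), which yields the $1/p_{m,k}^2$ bound after normalizing by $p_m$. Summing the three dominant orders gives the claimed rate. Asymptotic invertibility of $\wh{\bH}_{m,k}$ follows by showing that $\wh{\bH}_{m,k}\wh{\bH}_{m,k}^\trans$ converges (up to $o_P(1)$) to the product of two inverse-diagonal and two positive-definite factors that cancel, exactly as in the Bai-type argument.

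\textbf{Main obstacle.} The delicate point is that Assumption~\ref{ass: dependence_thread}(c) only controls the global-by-noise cross term at the slower $T^{-1}$ rate, whereas the PCA denominator $p_m^{-1}$ alone would not suffice; the leakage bound $\|\bA_{m,\text{-}k}^\trans\wh{\bA}_{m,\text{-}k,\perp}\|_F^2/p_{m,\text{-}k}=\cO_P(1/T)$ is precisely what transforms such terms into something asymptotically negligible compared to the three named rates. Hence the whole argument hinges on carefully propagating the convergence rate of the global loadings into the projector $\wh{\bA}_{m,\text{-}k,\perp}$ through the Kronecker structure, without losing a factor of $r_{m,\text{-}k}$. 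A secondary annoyance is that the nine-term expansion contains mixed terms where \emph{two} different sources are both projected and cross-correlated; for these I expect to use repeated Cauchy–Schwarz combined with Assumptions~\ref{ass: dependence_moment}(b) and \ref{ass: dependence_thread}(b) to reduce everything to quantities that are already bounded.

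\textbf{Part (b).} When $K_m=1$ the Kronecker product disappears, so there is no $\bA_{m,\text{-}k}$ at all and the projector appears symmetrically on both sides of $\mat_1(\cX_t^{(m)})\mat_1(\cX_t^{(m)})^\trans$. I would expand $\mat_1(\cX_t^{(m)})=\bA_{m,1}\mat_1(\cG_t^{(m)})+\bB_{m,1}\mat_1(\cF_t^{(m)})+\mat_1(\cE_t^{(m)})$; the sandwiching immediately kills the pure global term up to the single leakage $\wh{\bA}_{m,1,\perp}^\trans\bA_{m,1}$, whose squared norm over $p_{m,1}$ is $\cO_P(1/T)$ by Theorem~\ref{thm: global_loading}, feeding into a $1/T$ contribution. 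The pure local signal produces the dominant eigenstructure and defines $\wt{\bH}_{m,1}$, while noise-involving terms produce the $1/p_{m,1}^2$ bias exactly as in part (a). Combining gives the claimed $1/p_{m,1}^2+1/T$ rate; invertibility of $\wt{\bH}_{m,1}$ follows from the same diagonalization argument as before.
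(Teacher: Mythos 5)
Your proposal follows essentially the same route as the paper's proof: starting from the eigen-equation $\wh{\bB}_{m,k}=\wh\bSigma_{B,m,k}\wh{\bB}_{m,k}\wh{\bD}_{B,m,k}^{-1}$, expanding $\wh\bSigma_{B,m,k}$ into the nine cross terms so that $\wh{\bH}_{m,k}$ absorbs the signal block, controlling the global-loading leakage through the convergence of $\wh{\bA}_{m,j,\perp}$ (Theorem~\ref{thm: global_loading} plus the Kronecker structure), and bounding the local-by-noise and noise-by-noise terms via Assumptions~\ref{ass: dependence_thread} and \ref{ass: noise}. The only minor imprecision is the attribution of rates: in the paper the local$\times$noise cross terms contribute only $1/(Tp_{m,\text{-}k})$, while the $1/(Tp_{m,k})$ piece arises from the noise$\times$noise and global-related terms after right-multiplying by $\bB_{m,k}$ and using the asymptotic orthogonality $\|\bA_{m,k}^\trans\bB_{m,k}\|_F=\cO(\sqrt{p_{m,k}})$ — but this does not change the final rate or the structure of the argument.
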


\begin{theorem}[Asymptotic consistency of core factor estimators and component estimators]\label{thm: core_factor}
Let assumptions in \Cref{thm: local_loading} hold. Then for any $m\in[M]$, $t\in[T]$:
\begin{enumerate}[itemsep=0pt, label = (\alph*), left = 0pt]
    \item The local factor estimator defined in \eqref{eqn: def_hat_Ft} satisfies that
    \begin{align*}
        & \text{for } K_m> 1, \,
        \Big\| \vec(\wh{\cF}_t^{(m)}) - (\otimes_{k=1}^K \wh{\bH}_{m,k}^\trans) \vec(\cF_t^{(m)}) \Big\|^2 = \cO_P\Big( \sum_{k=1}^{K_m} \frac{1}{p_{m,k}^2} + \frac{1}{T} + \frac{1}{p_m} \Big) ; \\
        & \text{for } K_m =1, \,
        \Big\| \vec(\wh{\cF}_t^{(m)}) - \wt{\bH}_{m,1}^\trans \vec(\cF_t^{(m)}) \Big\|^2 = \cO_P\Big( \frac{1}{T} + \frac{1}{p_{m,1}} \Big) .
    \end{align*}
    \item The global factor estimator defined in \eqref{eqn: def_hat_Gt} is consistent such that 
    \begin{align*}
        \big\| \vec(\wh{\cG}_t^{(m)}) - \vec(\cG_t^{(m)}) \big\|^2 = \cO_P\Big( \sum_{k=1}^{K_m} \frac{1}{p_{m,k}^2} + \frac{1}{T} + \frac{1}{p_m} \Big) .
    \end{align*}
    \item Both global and local component estimators are consistent such that
    \begin{align*}
        \frac{1}{p_m} \big\| \vec(\wh\cX_{G,t}^{(m)}) -\vec(\cX_{G,t}^{(m)}) \big\|^2 &= \cO_P\Big( \sum_{k=1}^{K_m} \frac{1}{p_{m,k}^2} + \frac{1}{T} + \frac{1}{p_m} \Big) , \\
        \frac{1}{p_m} \big\| \vec(\wh\cX_{F,t}^{(m)}) -\vec(\cX_{F,t}^{(m)}) \big\|^2 &= \cO_P\Big( \sum_{k=1}^{K_m} \frac{1}{p_{m,k}^2} + \frac{1}{T} + \frac{1}{p_m} \Big) .
    \end{align*}
\end{enumerate}
\end{theorem}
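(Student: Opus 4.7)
The plan is to substitute model \eqref{eqn: model} into the explicit formulas \eqref{eqn: def_hat_Ft} and \eqref{eqn: def_hat_Gt}, then decompose each resulting expression into a leading piece that matches the desired target (possibly up to the rotation $\otimes_{k=1}^{K_m}\wh{\bH}_{m,k}^\trans$ for the local factor) together with several residual pieces. Each residual will be controlled by \Cref{thm: global_loading,thm: local_loading} together with \Cref{ass: loadings,ass: noise,ass: dependence_thread}. Two structural facts will be used repeatedly: by construction $\wh{\bA}_{m,k}^\trans \wh{\bA}_{m,k} = p_{m,k}\bI_{r_{m,k}}$ and $\wh{\bA}_{m,k,\perp}^\trans \wh{\bA}_{m,k,\perp}=\bI$; and \Cref{thm: global_loading} combined with $\wh{\bA}_{m,k,\perp}^\trans \wh{\bA}_{m,k}=0$ gives $\wh{\bA}_{m,k,\perp}^\trans \bA_{m,k} = \wh{\bA}_{m,k,\perp}^\trans(\bA_{m,k}-\wh{\bA}_{m,k})$ with Frobenius norm squared $\cO_P(p_{m,k}/T)$.

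For part (a) with $K_m>1$, I would expand $(\otimes_{k=1}^{K_m}\wh{\bA}_{m,k,\perp})^\trans \vec(\cX_t^{(m)})$ into a global piece, a local piece, and a noise piece. The global piece is $\otimes_k(\wh{\bA}_{m,k,\perp}^\trans \bA_{m,k})\vec(\cG_t^{(m)})$, whose squared norm is $\cO_P(p_m/T^{K_m})$ by the Frobenius bound above. For the local piece, write $\otimes_k \bB_{m,k} = \otimes_k(\wh{\bB}_{m,k}\wh{\bH}_{m,k}^{-1}) + \text{rem}$ where $\text{rem}$ is controlled by a Kronecker telescoping identity and \Cref{thm: local_loading}; this yields $\mathrm{(L)} = \wh{\bC}_m(\otimes_k \wh{\bH}_{m,k}^{-1})\vec(\cF_t^{(m)}) + \text{small}$. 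Premultiplying by $(\wh{\bC}_m^\trans \wh{\bC}_m)^{-1}\wh{\bC}_m^\trans$ and replacing $\wh{\bH}_{m,k}^{-1}$ by $\wh{\bH}_{m,k}^\trans$ (using $\wh{\bH}_{m,k}\wh{\bH}_{m,k}^\trans=\bI+o_P(1)$) delivers the desired leading term $(\otimes_k \wh{\bH}_{m,k}^\trans)\vec(\cF_t^{(m)})$. For this reduction I need $\sigma_{\min}(\wh{\bC}_m)^2 \asymp p_m$, which follows from $\wh{\bB}_{m,k}^\trans(\bI-\wh{\bA}_{m,k}\wh{\bA}_{m,k}^\trans/p_{m,k})\wh{\bB}_{m,k} = p_{m,k}\bI + o_P(p_{m,k})$, a consequence of \Cref{ass: loadings}(b) plus both loading consistency theorems. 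For the noise piece, I reduce $\wh{\bC}_m^\trans(\otimes_k\wh{\bA}_{m,k,\perp})^\trans \vec(\cE_t^{(m)})$ to $(\otimes_k \wh{\bB}_{m,k})^\trans \vec(\cE_t^{(m)})$ modulo lower-order terms; by \Cref{ass: noise} this has expected squared norm $\cO(p_m)$, and division by $\sigma_{\min}(\wh{\bC}_m)^2\asymp p_m$ contributes $\cO_P(1/p_m)$. Summing these sources yields the claimed rate $\sum_k 1/p_{m,k}^2 + 1/T + 1/p_m$. The $K_m=1$ case is analogous but simpler because $\wh{\bA}_{m,\text{-}1,\perp}$ is vacuous and \eqref{eqn: def_tilde_Sigma_B_m1} enters directly.

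For part (b), the exact identity $(\otimes_k \wh{\bA}_{m,k})^\trans (\otimes_k \wh{\bA}_{m,k}) = p_m \bI$ gives
\[
\vec(\wh{\cG}_t^{(m)}) - \vec(\cG_t^{(m)}) = \tfrac{1}{p_m}(\otimes_k \wh{\bA}_{m,k})^\trans \Bigl\{ \bigl[(\otimes_k \bA_{m,k}) - (\otimes_k \wh{\bA}_{m,k})\bigr]\vec(\cG_t^{(m)}) + \vec(\cE_t^{(m)}) + \mathrm{LE} \Bigr\},
\]
where $\mathrm{LE} := (\otimes_k \bB_{m,k})\vec(\cF_t^{(m)}) - (\otimes_k \wh{\bB}_{m,k})\vec(\wh{\cF}_t^{(m)})$. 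Kronecker telescoping plus \Cref{thm: global_loading} bounds the first summand by $\cO_P(1/T)$; the noise summand is $\cO_P(1/p_m)$ by the same trace argument as in part (a); and $\mathrm{LE}$ further splits as $\bigl[(\otimes_k \bB_{m,k}) - (\otimes_k \wh{\bB}_{m,k})(\otimes_k \wh{\bH}_{m,k}^\trans)\bigr]\vec(\cF_t^{(m)}) + (\otimes_k \wh{\bB}_{m,k})\bigl[(\otimes_k \wh{\bH}_{m,k}^\trans)\vec(\cF_t^{(m)}) - \vec(\wh{\cF}_t^{(m)})\bigr]$, the first of which is another Kronecker-telescope bound from \Cref{thm: local_loading} and the second of which is part (a) scaled by $\|\otimes_k \wh{\bB}_{m,k}\|_{\op}^2\asymp p_m$ (the outer $1/p_m^2$ then recovers the part-(a) rate). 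Finally, part (c) follows from the split
\[
\vec(\wh{\cX}_{G,t}^{(m)}) - \vec(\cX_{G,t}^{(m)}) = (\otimes_k \wh{\bA}_{m,k})\bigl[\vec(\wh{\cG}_t^{(m)})-\vec(\cG_t^{(m)})\bigr] + \bigl[(\otimes_k \wh{\bA}_{m,k}) - (\otimes_k \bA_{m,k})\bigr]\vec(\cG_t^{(m)}),
\]
and the analogous split for the local component: the first summand contributes $p_m$ times the rate from part (b) (or part (a), after invoking $\wh{\bH}_{m,k}\wh{\bH}_{m,k}^\trans=\bI+o_P(1)$), and the Kronecker telescope of the second gives $\cO_P(p_m/T)$. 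Dividing both sides' squared norms by $p_m$ yields the rates in (c).

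The main obstacle I anticipate is the bookkeeping for the Kronecker-product telescope inequalities: each factor replaced contributes its own loading rate (of order $\sqrt{p_{m,k}/T}$ for $\wh{\bA}_{m,k}$ or the mixed rate from \Cref{thm: local_loading} for $\wh{\bB}_{m,k}\wh{\bH}_{m,k}^\trans$), while the other Kronecker slots contribute pervasive factors of order $\sqrt{p_{m,\text{-}k}}$, and one must check that the worst cross term is absorbed into $\sum_k 1/p_{m,k}^2 + 1/T + 1/p_m$ rather than producing a spurious extra term of the form $p_{m,k}/(T p_{m,\text{-}k})$. A related delicate point is that \Cref{thm: local_loading} only records $\wh{\bH}_{m,k}\wh{\bH}_{m,k}^\trans = \bI + o_P(1)$ qualitatively; upgrading $\wh{\bH}_{m,k}^{-1}$ to $\wh{\bH}_{m,k}^\trans$ without inflating any error requires a quantitative rate, which I expect to coincide with the Frobenius rate in \Cref{thm: local_loading} and should be extractable from its proof.
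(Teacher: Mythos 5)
Your proposal is correct and follows essentially the same route as the paper: substitute the model into \eqref{eqn: def_hat_Ft} and \eqref{eqn: def_hat_Gt}, isolate the rotated leading term, and bound the residuals (loading-approximation error, cross-contamination from the other component, and noise) via \Cref{thm: global_loading}, \Cref{thm: local_loading}, and Assumptions~\ref{ass: loadings} and \ref{ass: noise}, with the local-component bound recycled inside parts (b) and (c) exactly as you describe. The one place where the paper's algebra is cleaner is the leading term of part (a): since $\wh{\bC}_m=(\otimes_{k}\wh{\bA}_{m,k,\perp})^\trans(\otimes_{k}\wh{\bB}_{m,k})$, the identity $(\wh{\bC}_m^\trans\wh{\bC}_m)^{-1}\wh{\bC}_m^\trans(\otimes_{k}\wh{\bA}_{m,k,\perp})^\trans(\otimes_{k}\wh{\bB}_{m,k}\wh{\bH}_{m,k}^\trans)=\otimes_{k}\wh{\bH}_{m,k}^\trans$ holds exactly, so telescoping $\otimes_{k}\bB_{m,k}$ against $\otimes_{k}\wh{\bB}_{m,k}\wh{\bH}_{m,k}^\trans$ --- the quantity \Cref{thm: local_loading} actually controls, up to a quantitative bound on $\bI-\wh{\bH}_{m,k}\wh{\bH}_{m,k}^\trans$ that is indeed extractable from its proof at the rate you expect --- produces the target rotation without ever inverting $\wh{\bH}_{m,k}$, which dissolves the delicate point you flag in your final paragraph.
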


\Cref{thm: local_loading}(a) spells out the rate of convergence for the local loading estimator under $K_m>1$, and part~(b) under $K_m=1$. As mentioned previously, this result is generally better than our \Cref{thm: global_loading}. In fact, in a balanced-dimension scenario with $T\asymp p_{m,1}\asymp \dots \asymp p_{m,K_m}$, the rate in (a) becomes $1/T^2$ which is better than the result in Theorem~3 in \cite{Zhangetal2025}. Until $p_{m,k}$ is too small such that $p_{m,k}^2 \asymp T$, both rates in \Cref{thm: local_loading}(a) and (b) boil down to $1/T$.

Compared with the classical rate of $1/p_{m,k}^2 + 1/Tp_{m,\text{-}k}$ for PCA-type estimators \citep[e.g.][]{barigozzi2022statistical}, \Cref{thm: local_loading}(a) contains an additional $1/Tp_{m,k}$. This is inherited from the error of the global loading estimator, and would have no impact to the result when $p_{m,1}\asymp \dots \asymp p_{m,K_m}$ and $T$ is of order larger than or equal to $p_{m,1}$, which is often the case in practice.

\Cref{thm: core_factor} spells out the consistency of local/global factor/component estimators, and indicate the same rates of convergence. Although we have a superior result in \Cref{thm: local_loading}, it cannot be translated to \Cref{thm: core_factor} which suffers from the $1/T$ rate in \Cref{thm: global_loading}. This once again reflects the difficulty in parameter estimation under our model. To compare with multilevel factor models, for the vector case such that $K_m=2$ for all $m\in[M]$, our result is the same as Proposition~2 in \cite{Choietal2018}. Under the matrix case when $K_m=2$ and $p_{m,1}\asymp p_{m,2}$ for all $m\in[M]$, our derived asymptotic rates would be the same as Theorems~4 and 5 in \cite{Zhangetal2025}. Thus, our presented results can be also viewed as pointing out the behaviour of parameter estimators in a multilevel factor model for tensor time series with general orders.

\section{Further Discussion}
\label{sec: further_discussion}

\subsection{Explicit form of Global Factors}
\label{subsec: explicit_global_fac}

In \Cref{subsec: model}, JXC following a multi-order tensor factor model means each $\cX_t^{(m)}$ has the representation \eqref{eqn: model}. Although the unspecified form of the global factor is more general, it can be of interest to assume explicit connections among $\big\{ \cG_t^{(1)}, \dots, \cG_t^{(M)} \big\}$. If the tensor orders are the same over $m\in[M]$, a direct choice is $\cG_t^{(1)} =\dots =\cG_t^{(M)}$. Yet with the potentially different orders, the setup is not so straightforward.

With the notation of tensor map in \Cref{sec: intro_map}, we consider a natural choice that $\cG_t^{(m)} = \map(\cG_t, V_m)$ for some order-$K_M$ tensor $\cG_t$ with dimension $r_1\times \dots\times r_{K_M}$, and some pre-specified channel $V_m$.
Essentially, $\cG_t^{(m)} = \map(\cG_t, V_m)$ is some re-organisation of $\cG_t$.
This can be useful if the meaning of modes are related across threads with different orders. For example, one time series contains the fMRI scans of brains over time, while the other reports vital signs of patients such as heart rate, blood pressure, and respiratory rate. Then the sign series is potentially driven by the vectorized core factor for the fMRI series.

Without loss of generality, let $V_M=\big\{\{1\}, \dots, \{K_M\} \big\}$. We may read \eqref{eqn: model} as
\begin{equation}
\label{eqn: model_map_Gt}
\cX_t^{(m)} = \map(\cG_t, V_m) \times_{k=1}^{K_m} \bA_{m,k} + \cF_t^{(m)} \times_{\ell=1}^{K_m} \bB_{m,\ell} + \cE_t^{(m)} .
\end{equation}
By the definition of map operator, $\map(\cG_t, V_m)$ is an order-$K_m$ tensor with each mode-$j$ dimension $r_{m,j} = \prod_{\theta_{m,j-1}+1 <i \leq \theta_{m,j}} r_i$. As a simple example, consider $M=2$ and $(K_1, K_2)=(1,2)$, i.e., the observed JXC consists of a vector time series and a matrix one. Note that $V_1 =\big\{ \{1,2\} \big\}$ and $\map(\cG_t, V_1)=\reshape(\cG_t, \{1,2\}) =\vec(\cG_t)$. Then for each $t\in[T]$, we have
\begin{align*}
    & \cX_t^{(1)} = \bA_{1,1} \vec(\cG_t) + \bB_{1,1} \cF_t^{(1)} + \cE_t^{(1)}
    \quad \in \R^{p_{1,1}} ,\\
    & \cX_t^{(2)} = \bA_{2,1} \cG_t \bA_{2,2}^\trans + \bB_{2,1} \cF_t^{(2)} \bB_{2,2}^\trans + \cE_t^{(2)} 
    \quad \in \R^{p_{2,1} \times p_{2,2}} .
\end{align*}

One advantage of the form \eqref{eqn: model_map_Gt} is that \Cref{ass: core_factor}(c) can be easily satisfied; see Remark~\ref{remark: explicit_fac_satisfy_assump}.
Furthermore, even though all methods and results for \eqref{eqn: model} hold for \eqref{eqn: model_map_Gt}, \Cref{thm: core_factor} implies the global factor estimator has the rate $\sum_{k\in[K_m]} 1/p_{m,k}^2 + 1/T + 1/p_m$. Thus, we can estimate the global factor on $m$th thread such that $\sum_{k\in[K_m]} 1/p_{m,k}^2 + 1/p_m$ is minimised, and invert the map operator to estimate $\cG_t$.


\subsection{Estimation of number of factors}
\label{subsec: est_num_fac}

In model~\eqref{eqn: model}, the numbers of global and local factors are unknown, which need to be either pre-specified based on some prior information or appropriately estimated. For the latter, the representation in \eqref{eqn: Xt_rewrite_tfm} enlightens a feasible approach. In particular, $\{\cX_t^{(m)}\}$ follows a classical tensor factor model with each mode-$k$ loading matrix $(\bA_{m,k}, \bB_{m,k})$ with column rank $r_{m,k} +u_{m,k}$ (see also \Cref{ass: loadings}(ii) in \Cref{subsec: assumption}). Hence any consistent factor number estimators can be employed, such as the correlation-thresholding estimator by \cite{Lam2021}, the information-criteria approach by \cite{Hanetal2022}, etc. Denote such estimated factor numbers by $\wh{s}_{m,k}$ which effectively estimates $r_{m,k} +u_{m,k}$. 

Inspired by the eigenvalue-ratio estimator in \cite{Zhangetal2025}, a plausible estimator on the global factor number can be obtained by minimising $\lambda_{i+1}(\wh\bSigma_{m,k}) /\lambda_i (\wh\bSigma_{m,k})$ over $i\in[p_{m,k}]$. However, no theoretical guarantee is made, which is the similar case for the eigenvalue-ratio estimator briefly mentioned in \cite{Wangetal2019}. In fact, Example~1 in \cite{Zhangetal2024} demonstrate that using the direct ratio between consecutive eigenvalues could lead to inconsistent estimator. To circumvent this, suppose $r_{\max,m,k}$ is some upper bound of the factor numbers, we define
\begin{equation}
\label{eqn: est_num_fac_define}
\wh{r}_{m,k} := \argmin_{1\leq i\leq r_{\max,m,k}} \frac{\lambda_{i+1}(\wh\bSigma_{m,k}) + \xi_{m,k}}{ \lambda_i (\wh\bSigma_{m,k}) +\xi_{m,k}} , \quad \text{where} \quad
\xi_{m,k} \asymp \frac{p_m}{\sqrt{T}} \Bigg( \sum_{\substack{n=1\\ n\neq m}}^M p_n \Bigg) .
\end{equation}
In practice, we can take $r_{\max,m,k}=\lfloor p_{m,k}/2 \rfloor$ and $\xi_{m,k} = p_m(\sum_{n\in[M] \setminus \{m\}} p_n) /(5\sqrt{T})$. In contrast to the practical estimator in \cite{Zhangetal2024}, we show $\wh{u}_{m,k}$ is consistent as follows.

\begin{theorem}\label{thm: eigenratio}
Let all assumptions in \Cref{thm: global_loading} hold. Then for any $m\in[M]$, $k\in[K_m]$, the estimator defined in \eqref{eqn: est_num_fac_define} is consistent such that
\[
\P(\wh{r}_{m,k} = r_{m,k} ) \xrightarrow{P} 1.
\]
\end{theorem}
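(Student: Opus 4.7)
The plan is to show the criterion in \eqref{eqn: est_num_fac_define} is asymptotically minimised at $i=r_{m,k}$, by proving its value vanishes in probability there while staying bounded below by a positive constant at every other $i\in[r_{\max,m,k}]$.

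The first step pins down the population eigenvalues of $\bSigma_{m,k}$. Since $\bOmega_{k,n,\bi}^{(m)} = \bA_{m,k} \bW_{n,\bi}^{(m,k)}$ for some matrix $\bW_{n,\bi}^{(m,k)} \in \R^{r_{m,k}\times p_{m,\text{-}k}}$, one may factor $\bSigma_{m,k} = \bA_{m,k}\bM_{m,k}\bA_{m,k}^\trans$ with $\bM_{m,k}\in \R^{r_{m,k}\times r_{m,k}}$ positive semidefinite. Summing over $\bi \in \prod_h [p_{n,h}]$ contracts the $\bA_{n,h}$ factors via the near-orthogonality in Assumption~\ref{ass: loadings}(b), producing a scale $p_n\cdot p_{m,\text{-}k}$ per thread; Assumption~\ref{ass: core_factor}(c) guarantees that all $r_{m,k}$ singular values of each per-thread contribution are nondegenerate, whence (noting that $M$ is fixed and using the PSD bound $\lambda_j(\sum_n\bM_n)\geq \max_n\lambda_j(\bM_n)$) the $r_{m,k}$ eigenvalues of $\bM_{m,k}$ are of the common order $p_{m,\text{-}k}\sum_{n\neq m}p_n$. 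Combined with $\bA_{m,k}^\trans\bA_{m,k}\asymp p_{m,k}\bI_{r_{m,k}}$ from Assumption~\ref{ass: loadings}(b), this yields
\begin{equation*}
    \lambda_i(\bSigma_{m,k}) \asymp p_m \sum\nolimits_{n\neq m} p_n \text{ for } i\leq r_{m,k}, \qquad \lambda_i(\bSigma_{m,k}) = 0 \text{ for } i> r_{m,k}.
\end{equation*}

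The second step is the operator-norm perturbation bound $\|\wh\bSigma_{m,k} - \bSigma_{m,k}\|_{\op} = \cO_P(\xi_{m,k})$ extracted from the proof of Theorem~\ref{thm: global_loading}; this is precisely the input to the Davis--Kahan step there, as $\xi_{m,k}/\lambda_{r_{m,k}}(\bSigma_{m,k}) \asymp 1/\sqrt{T}$ matches the $1/T$ rate for $\wh\bA_{m,k}$. Weyl's inequality then gives $|\lambda_i(\wh\bSigma_{m,k}) - \lambda_i(\bSigma_{m,k})| = \cO_P(\xi_{m,k})$ uniformly in $i$. Writing $R_i$ for the ratio in \eqref{eqn: est_num_fac_define}, three regimes now follow. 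For $i<r_{m,k}$: both $\lambda_i(\wh\bSigma_{m,k})$ and $\lambda_{i+1}(\wh\bSigma_{m,k})$ are of exact order $p_m\sum p_n \gg \xi_{m,k}$, so $R_i = \lambda_{i+1}(\bSigma_{m,k})/\lambda_i(\bSigma_{m,k}) + o_P(1)$, bounded below by a positive constant since the top $r_{m,k}$ population eigenvalues are of common order. For $i=r_{m,k}$: the numerator is $\cO_P(\xi_{m,k})$ while the denominator is of order $p_m\sum p_n$, giving $R_{r_{m,k}} = \cO_P(1/\sqrt{T}) \xrightarrow{P} 0$. For $i>r_{m,k}$: both eigenvalues are $\cO_P(\xi_{m,k})$, and the additive $\xi_{m,k}$ in the denominator gives $R_i \geq \xi_{m,k}/(\cO_P(\xi_{m,k}) + \xi_{m,k})$, bounded below by a positive constant with probability tending to one, uniformly in $i$ since $\lambda_i(\wh\bSigma_{m,k}) \leq \lambda_{r_{m,k}+1}(\wh\bSigma_{m,k}) = \cO_P(\xi_{m,k})$. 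Combining these regimes yields $\P(\wh r_{m,k} = r_{m,k}) \to 1$.

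The main technical obstacle is establishing the population lower bound $\lambda_{r_{m,k}}(\bSigma_{m,k}) \gtrsim p_m \sum_{n\neq m}p_n$: it requires carefully tracking Assumption~\ref{ass: core_factor}(c) through the summations over $n$ and $\bi$ to rule out accidental cancellations across threads, with the choice $\xi_{m,k}\asymp p_m\sum_{n\neq m}p_n/\sqrt{T}$ calibrated so as to sit strictly between the $r_{m,k}$-th signal eigenvalue and the noise floor. Once this (together with the operator-norm perturbation borrowed from the proof of Theorem~\ref{thm: global_loading}) is in hand, the remaining Weyl/ratio arithmetic is routine.
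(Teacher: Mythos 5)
Your proposal is correct and follows essentially the same route as the paper: the population spectrum $\lambda_i(\bSigma_{m,k})\asymp p_m\sum_{n\neq m}p_n$ for $i\leq r_{m,k}$ and $\lambda_i(\bSigma_{m,k})=0$ for $i>r_{m,k}$ (the paper's Lemma on the eigenvalues of $\bSigma_{m,k}$, which it also derives by contracting the $\bA_{n,h}$ factors through Assumptions~\ref{ass: loadings}(b) and \ref{ass: core_factor}(c)), combined with the perturbation bound $\|\wh\bSigma_{m,k}-\bSigma_{m,k}\|=\cO_P(\xi_{m,k})$, Weyl's inequality, and the same three-regime analysis of the perturbed ratio. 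No gaps.
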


The form in \eqref{eqn: est_num_fac_define} is akin to the perturbed eigenvalue-ratio estimator in Theorem~6 in \cite{Pelger2019}. Intuitively, $\xi_{m,k}$ helps to stabilise the estimator. If one takes $\xi_{m,k}=0$, then we can only show $\P(\wh{r}_{m,k} \geq r_{m,k} ) \xrightarrow{P} 1$. With both $\wh{s}_{m,k}$ and $\wh{r}_{m,k}$ computed, the local factor number can thus be estimated by $\wh{u}_{m,k} := \wh{s}_{m,k} -\wh{r}_{m,k}$, whose consistency is straightforward provided the consistent estimation discussed previously.

As briefly discussed in \Cref{remark: absent_component}, there might be no local components for some threads. Presuming the existence of global components, $\wh{u}_{m,k}=0$ would hint on the absence of local component. Following this, estimators for the local loading matrix and local factor can be trivially set as zero in Sections~\ref{subsec: est_loading} and \ref{subsec: est_fac}, and other procedures therein remain valid. We leave the refinement of this to future research.

\subsection{Fast implementation}
\label{subsec: bootstrap_implement}

When threads in JXC have large dimensions and higher orders, obtaining \eqref{eqn: def_hat_Sigma_mk} can be computationally heavy. To circumvent this, recall from \Cref{thm: global_loading} that only $T$ is involved in the rate of convergence for the global loading matrix estimator. This hints on reformulating \eqref{eqn: def_hat_Sigma_mk} by using a random subset of multi-indices. Indeed, this can be formalized as follows. For each $m\in[M]$, $k\in[K_m]$, define
\begin{equation}
\label{eqn: def_hat_Sigma_mk_bootstrap}
\begin{split}
    \wh\bSigma_{m,k}^S &:= \sum_{\substack{n=1\\ n\neq m}}^M \sum_{\bi \in \cS_n} \wh\bOmega_{k,n,\bi}^{(m)} \wh\bOmega_{k,n,\bi}^{(m) \trans} ,
\end{split}
\end{equation}
where $\cS_n \subseteq [p_{n,1}] \times \cdots \times[p_{n,K_n}]$ is a random set drawn independently of JXC, and $\wh\bOmega_{k,n,\bi}^{(m)}$ is the same as in \eqref{eqn: def_hat_Sigma_mk}. Then the global loading matrix can be estimated by $\wh{\bA}_{m,k}^S$, defined as $\sqrt{p_{m,k}}$ times the $r_{m,k}$ leading eigenvectors of $\wh\bSigma_{m,k}^S$. More importantly, the performance of such estimator can be made mathematically rigorous, as shown below.

\begin{theorem}
\label{prop: global_loading_bootstrap}
Let all assumptions in \Cref{thm: global_loading} hold. Fix $m$, and for any $n\in[M] \setminus\{m\}$, let $\cS_n^\dagger \subseteq [p_n]$ denote the set of row indices of the matrix $\otimes_{k=1}^{K_n} \bA_{n,k}$ that corresponds to the rows of $\{\bA_{n,1}, \dots, \bA_{n,K_n}\}$ with indices in $\cS_n$ from \eqref{eqn: def_hat_Sigma_mk_bootstrap}. For any $j\in[r_n]$, assume that
\[
\sigma_{j}\big\{ (\otimes_{k=1}^{K_n} \bA_{n,k} )_{\cS_n^\dagger} \big\} \asymp |\cS_n^\dagger |^{1/2} ,
\]
where $(\otimes_{k=1}^{K_n} \bA_{n,k} )_{\cS_n^\dagger}$ represents the sub-matrix of $\otimes_{k=1}^{K_n} \bA_{n,k}$ with rows restricted on $\cS_n^\dagger$. Then it holds for any $k\in[K_m]$ that
\begin{align*}
    \frac{1}{p_{m,k}} \Big\| \wh{\bA}_{m,k}^S -\bA_{m,k} \Big\|_F^2 = \cO_P\Big( \frac{1}{T} \Big) .
\end{align*}
\end{theorem}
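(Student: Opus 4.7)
The plan is to mirror the proof of \Cref{thm: global_loading} with every sum over $\bi \in [p_{n,1}]\times\cdots\times[p_{n,K_n}]$ replaced by a sum over $\bi \in \cS_n$, and to verify that the new singular-value hypothesis is precisely the input needed to preserve the original signal-to-noise balance. First, I decompose $\wh\bSigma_{m,k}^S = \bSigma_{m,k}^S + (\wh\bSigma_{m,k}^S - \bSigma_{m,k}^S)$ with $\bSigma_{m,k}^S := \sum_{n\neq m}\sum_{\bi\in\cS_n} \bOmega_{k,n,\bi}^{(m)}\bOmega_{k,n,\bi}^{(m)\trans}$, and invoke \Cref{ass: identification}(b) together with \Cref{ass: dependence_moment}(a) to eliminate cross-thread contributions from local factors and noise. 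This yields the rank-constrained form $\bSigma_{m,k}^S = \bA_{m,k}\bM_{m,k}^S \bA_{m,k}^\trans$ as in the original proof, with $\bM_{m,k}^S$ now depending on the subsample only through the sub-matrix $(\otimes_{h=1}^{K_n}\bA_{n,h})_{\cS_n^\dagger}$.

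The key new step is the lower bound $\lambda_{r_{m,k}}(\bSigma_{m,k}^S)\asymp p_m\sum_{n\neq m}|\cS_n^\dagger|$. Writing $\bOmega_{k,n,\bi}^{(m)} = \bA_{m,k}\bQ_{n,\bi}\bA_{m,\text{-}k}^\trans$ with $\bQ_{n,\bi}\in\R^{r_{m,k}\times r_{m,\text{-}k}}$ linear in the row $(\otimes_h\bA_{n,h})_{\bi,\cdot}$, I would use \Cref{ass: loadings}(b) to replace $\bA_{m,\text{-}k}^\trans\bA_{m,\text{-}k}$ by $p_{m,\text{-}k}\bI$ plus a vanishing remainder, and then expand the row-by-row cross product to obtain the identity
\[
\sum_{\bi\in\cS_n}\bQ_{n,\bi}\bQ_{n,\bi}^\trans = \bPhi_n\bigl(\bI_{r_{m,\text{-}k}}\otimes\bH_n\bigr)\bPhi_n^\trans,
\]
where $\bPhi_n := \E[\mat_k(\cG_t^{(m)})\otimes\vec(\cG_t^{(n)})^\trans]$ and $\bH_n := (\otimes_h\bA_{n,h})_{\cS_n^\dagger}^\trans(\otimes_h\bA_{n,h})_{\cS_n^\dagger}$. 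The hypothesis $\sigma_j\{(\otimes_h\bA_{n,h})_{\cS_n^\dagger}\}\asymp|\cS_n^\dagger|^{1/2}$ sandwiches $\bH_n$ between constant multiples of $|\cS_n^\dagger|\bI_{r_n}$; combined with the non-degeneracy of $\bPhi_n$ supplied by \Cref{ass: core_factor}(c) (applied with $\{p_{n,h}^{-1/2}\bA_{n,h}\}$ as surrogate loadings) and \Cref{ass: loadings}(b) for $\bA_{m,k}$, summing over $n\neq m$ produces the claimed rate.

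For the stochastic term $\wh\bSigma_{m,k}^S - \bSigma_{m,k}^S$, every contribution to the bound on $\|\wh\bSigma_{m,k}-\bSigma_{m,k}\|_\op$ in the proof of \Cref{thm: global_loading} is a triangle-inequality aggregate indexed by $\bi$, so restricting to $\bi \in \cS_n$ replaces each dimensional factor $p_n$ by $|\cS_n^\dagger|$, and Assumptions~\ref{ass: noise}, \ref{ass: dependence_moment}(b) and \ref{ass: dependence_thread} then give $\|\wh\bSigma_{m,k}^S - \bSigma_{m,k}^S\|_\op = \cO_P(p_m\sum_{n\neq m}|\cS_n^\dagger|\,T^{-1/2})$. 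The noise-to-signal ratio is $\cO_P(T^{-1/2})$ irrespective of the sizes $|\cS_n^\dagger|$, and a Davis--Kahan sin-$\Theta$ argument identical in structure to the one in \Cref{thm: global_loading} delivers $p_{m,k}^{-1}\|\wh\bA_{m,k}^S-\bA_{m,k}\|_F^2 = \cO_P(T^{-1})$. The main obstacle is the identity above: the sub-matrix $(\otimes_h\bA_{n,h})_{\cS_n^\dagger}$ is not itself a Kronecker product, so \Cref{ass: core_factor}(c) does not apply to it directly, and the singular-value hypothesis is introduced precisely to make $|\cS_n^\dagger|^{-1/2}(\otimes_h\bA_{n,h})_{\cS_n^\dagger}$ spectrally equivalent to a well-conditioned Kronecker-type factor, so that the population-level non-degeneracy of $\bPhi_n$ still transfers to the subsampled signal.
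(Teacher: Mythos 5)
Your proposal is correct and follows essentially the same route as the paper: decompose $\wh\bSigma_{m,k}^S$ into its population part plus a fluctuation, rerun Lemmas~\ref{lemma: rate_Omega}--\ref{lemma: rate_Sigma} with each dimensional factor $p_n$ replaced by $|\cS_n^\dagger|$, and establish $\lambda_{r_{m,k}}(\bSigma_{m,k}^S)\asymp p_m\sum_{n\neq m}|\cS_n^\dagger|$ by exactly the modification of \eqref{eqn: eigenvalue_Sigma_step_2} that the paper makes, before closing with the same perturbation bound. Your explicit factorisation $\bPhi_n(\bI_{r_{m,\text{-}k}}\otimes\bH_n)\bPhi_n^\trans$ and the sandwich of $\bH_n$ between multiples of $|\cS_n^\dagger|\bI_{r_n}$ is simply an unpacked (and slightly more careful) version of the paper's one-line appeal to \Cref{ass: core_factor}(c) together with the new singular-value condition.
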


The additional assumption in \Cref{prop: global_loading_bootstrap} can be regarded as a rank condition to ensure $\wh\bSigma_{m,k}^S$ is well behaved. By \Cref{ass: loadings}(b), it is automatically fulfilled if $\cS_n^\dagger =[p_n]$, i.e., $\wh\bSigma_{m,k}^S =\wh\bSigma_{m,k}$. Through our numerical studies in \Cref{subsec: simulation}, this additional assumption can be satisfied in general. It is also worth pointing out that $\wh\bSigma_{m,k}^S$ can replace $\wh\bSigma_{m,k}$ to compute the factor number estimator in \eqref{eqn: est_num_fac_define}, except that $\xi_{m,k}$ therein needs to fulfil
\[
\xi_{m,k} \asymp \frac{p_m}{\sqrt{T}} \Bigg( \sum_{\substack{n=1\\ n\neq m}}^M |\cS_n^\dagger | \Bigg) .
\]
Then the resulted estimator is still consistent.

\subsection{*How Assumption~\ref{ass: core_factor}(c) can be implied}
\label{subsec: assump_fac_satisfied}

In what follows, we showcase how \Cref{ass: core_factor}(c) can be satisfied for $K_1=\dots=K_M$, i.e., when our framework boils down to a multilevel tensor factor model. As a natural setup, the global factors are the same over threads, so we assume $\cG_t^{(1)} = \dots =\cG_t^{(M)}$ throughout this subsection.

First of all, note that the matrix in the expectation within \Cref{ass: core_factor}(c) has dimension $r_{m,k} \times (p_n r_{m,\text{-}k})$, so the assumption can be treated as a full rank condition on
{\small
\[
\E\Big[ \mat_k(\cG_t^{(m)}) \otimes \Big\{ \vec(\cG_t^{(n)})^\trans \big( \otimes_{h=1}^{K_n} \bA_{h} \big)^\trans \Big\} \Big] \E\Big[ \mat_k(\cG_t^{(m)}) \otimes \Big\{ \vec(\cG_t^{(n)})^\trans \big( \otimes_{h=1}^{K_n} \bA_{h} \big)^\trans \Big\} \Big]^\trans .
\] }
Inspired by this observation, consider the vector case ($K_1=1$). Then we may simplify
\begin{align*}
    \E\Big[ \mat_k(\cG_t^{(m)}) \otimes \Big\{ \vec(\cG_t^{(n)})^\trans \big( \otimes_{h=1}^{K_n} \bA_{h} \big)^\trans \Big\} \Big] = \E\Big\{ \cG_t^{(m)} (\cG_t^{(m)})^\trans \Big\} \bA_{1}^\trans ,
\end{align*}
which implies that \Cref{ass: core_factor}(c) is equivalent to $\E\big\{ \cG_t^{(m)} (\cG_t^{(m)})^\trans \big\}$ being positive definite with eigenvalues bounded away from zero and infinity. In general, \Cref{ass: core_factor}(c) can be replaced by simpler assumptions for any tensor orders. We present the result in the following claim, where the extra assumption on the global factors is not necessary (e.g., as previously shown for the vector case) and can be implied by Assumption~(F1) in \cite{CenLam2025} for instance. The proof of the claim is relegated to the online supplement.

\begin{claim}
\label{claim: assump_fac_simpler}
Under $K_1=\dots=K_M$ and $\cG_t^{(1)} = \dots =\cG_t^{(M)}$, if the global factors have uncorrelated elements with zero mean and bounded variance, then \Cref{ass: core_factor}(c) is satisfied.
\end{claim}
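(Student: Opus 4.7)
The plan is to compute the matrix inside the singular-value operator in \Cref{ass: core_factor}(c)---call it $\bM$---and show that $\bM\bM^\trans$ has eigenvalues of constant order. Since $\cG_t^{(m)} = \cG_t^{(n)} =: \cG_t$ under the hypotheses, put $\bg_t := \vec(\cG_t) \in \R^r$ with $r = \prod_h r_{n,h}$, and set $\bA := \otimes_{h=1}^{K_n} \bA_h$. The uncorrelated-elements assumption gives $\E[\bg_t\bg_t^\trans] = \diag(\sigma_1^2,\ldots,\sigma_r^2)$, with each $\sigma_s^2 > 0$ (hence bounded below by its finite minimum) and bounded above by hypothesis. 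By the mixed-product property,
\[
\bM \;=\; \E\big[\mat_k(\cG_t) \otimes \bg_t^\trans\big] \cdot (\bI_{r_{-k}} \otimes \bA^\trans),
\]
where $r_{-k} := r/r_{n,k}$. The first factor is extremely sparse: for each row $i \in [r_{n,k}]$ there are exactly $r_{-k}$ nonzero entries, located at the columns corresponding to the single index that places $i$ in the $k$-th mode of $\vec(\cG_t)$, because every entry of $\mat_k(\cG_t)$ is exactly one entry of $\bg_t$.

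Next I would compute $\bM\bM^\trans = \E[\mat_k(\cG_t)\otimes\bg_t^\trans]\cdot(\bI_{r_{-k}}\otimes\bA^\trans\bA)\cdot\E[\mat_k(\cG_t)\otimes\bg_t^\trans]^\trans$. Substituting the factorisation $\bA^\trans\bA = \otimes_{h=1}^{K_n}(\bA_h^\trans\bA_h)$ and using the key identity that, for two multi-indices differing only in the $k$-th coordinate, the corresponding entry of $\bA^\trans\bA$ factors as $(\bA_k^\trans\bA_k)_{i,i'}\cdot\prod_{h\neq k}(\bA_h^\trans\bA_h)_{i_h,i_h}$, a direct entry-wise calculation yields the compact form
\[
\bM\bM^\trans \;=\; \sum_{j=1}^{r_{-k}} \alpha_j \, \bD_j \, (\bA_k^\trans\bA_k) \, \bD_j,
\]
where $\alpha_j > 0$ is the product of diagonal entries of $\bA_h^\trans\bA_h$ over $h \neq k$ (determined by the non-$k$ coordinates of the multi-index encoded by $j$), and $\bD_j$ is the diagonal matrix of variances $\sigma_s^2$ over $s$ ranging as the $k$-th coordinate varies.

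Finally, I would extract the eigenvalue bounds from this PSD decomposition. An upper bound follows by the triangle inequality: each summand has operator norm at most $\alpha_j \lambda_{\max}(\bA_k^\trans\bA_k)(\max_s\sigma_s^2)^2$, and $\sum_j \alpha_j = \prod_{h\neq k}\tr(\bA_h^\trans\bA_h)$ is a bounded constant under the hypothesis on $\bA_h$. For the lower bound, since every summand is PSD, retaining any single $j_0$ gives $\bM\bM^\trans \succeq \alpha_{j_0}\bD_{j_0}(\bA_k^\trans\bA_k)\bD_{j_0}$, and the standard inequality $\lambda_{\min}(\bD\bP\bD) \geq \lambda_{\min}(\bD)^2\lambda_{\min}(\bP)$ for $\bD\succ 0$ and PSD $\bP$ delivers a strictly positive constant lower bound. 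Together these give $\sigma_j(\bM)\asymp c$ as required. The main obstacle is purely notational---tracking the index correspondences between $\mat_k(\cG_t)$, $\vec(\cG_t)$, and the entries of $\bA = \otimes_h \bA_h$---so that the sum-of-PSD form emerges cleanly; the eigenvalue estimates afterwards are elementary.
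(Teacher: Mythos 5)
Your proposal is correct and follows essentially the same route as the paper's proof: both arguments use the uncorrelatedness of the factor entries to collapse the expectation to an explicit function of the loadings, exploit the Kronecker factorization of $\otimes_h \bA_h$ across the mode-$k$ and non-mode-$k$ coordinates, and then read off constant-order singular values from the assumption that each $\bA_h$ has singular values bounded away from zero and infinity. The only difference is organizational: the paper first normalizes the variances to one (absorbing them into the loadings) and identifies $\E[\bM_t]$ exactly as $\bA_k^\trans\otimes\vec(\otimes_{h\neq k}\bA_h)^\trans$, whereas you keep general variances and bound the eigenvalues of $\bM\bM^\trans$ via the PSD sum $\sum_j\alpha_j\,\bD_j(\bA_k^\trans\bA_k)\bD_j$, which collapses to the paper's formula when $\bD_j=\bI$; both versions implicitly require the entrywise variances to be strictly positive.
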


\begin{remark}
\label{remark: explicit_fac_satisfy_assump}
In \Cref{subsec: explicit_global_fac}, different global factors are re-organisation of each other. Hence in \Cref{ass: core_factor}(c), $\vec(\cG_t^{(m)})$ is the same as $\vec(\cG_t^{(n)})$ up to some permutation which can be absorbed into $\otimes_{h=1}^{K_n} \bA_{h}$. Since the Kronecker product structure is preserved by the reshape operator \citep{CenLam2025_KronProd}, all arguments in the proof of \Cref{claim: assump_fac_simpler} follow. This indicates that if $\cG_t$ in \Cref{subsec: explicit_global_fac} has uncorrelated elements with zero mean and bounded variance, then \Cref{ass: core_factor}(c) holds.
\end{remark}

\section{Numerical Analysis}
\label{sec: numerical}

\subsection{Simulation}
\label{subsec: simulation}

In the following, we demonstrate the numerical performance of the estimators described in Sections~\ref{subsec: est_loading} and \ref{subsec: est_fac} using Monte Carlo experiments. As this article includes multilevel factor model as a special case, we will also make comparison with \cite{Zhangetal2025} when JXC only contains matrix time series.

To start with, each thread is generated according to \eqref{eqn: model}, except that we consider the setup in \eqref{eqn: model_map_Gt} such that the global factor for each thread is mapped from $\cG_t^{(M)}$. In detail, each entry of the global and local factors is independent autoregressive process AR(1), with coefficient $0.5$ and innovation following \text{i.i.d.} $\cN(0,1)$. The entries of $\cE_t^{(m)}$ is constructed similarly, except that the innovation is either $\cN(0,1)$ or $t_6$, which will be specified later. Each global loading matrix is generated as $\bA_{m,k} = \bU_{m,k}$ for pervasive factors and $\bA_{m,k} = p_{m,k}^{-0.1} \cdot \bU_{m,k}$ for weak factors, where $\bU_{m,k}\in \R^{p_{m,k}\times r_{m,k}}$ consists of \text{i.i.d.} $\cN(0,1)$ elements. Let $\bA_{m,k,\perp}$ be the half-orthogonal matrix representing the orthogonal complement of $\bA_{m,k}$. Then the local loading matrix is constructed as $p_{m,k}^{0.5}$ (or $p_{m,k}^{0.4}$ for weak factors) times a $p_{m,k}\times u_{m,k}$ sub-matrix of $\bA_{m,k,\perp}$. For simplicity, we only consider one factor for all modes and threads.

As the loading matrices are only identified up to an arbitrary rotation, we measure the distance between the column space spanned by the true loading matrix and the estimator. Formally, given any pair of matrices $\bX$ and $\bY$, the column space distance is computed as
\[
\cD(\bX, \bY) := \Big\| \bX(\bX^\trans \bX)^{-1} \bX^\trans -\bY(\bY^\trans \bY)^{-1} \bY^\trans \Big\| .
\]
Due to the rotational indeterminacy in the loading matrices, performance of the core factor estimator cannot be directly compared, but reflected in the estimated global and local components. We use the (relative) mean squared error to measure the performance of the component estimators:
\begin{align*}
    & \text{MSE}\big(\wh\cX_{G,t}^{(m)} \big) := \frac{ \sum_{t=1}^{T} \big\| \wh\cX_{G,t}^{(m)} - \cX_{G,t}^{(m)} \big\|_F^2}{ \sum_{t=1}^{T} \big\| \cX_{G,t}^{(m)} \big\|_F^2} , \quad
    \text{MSE}\big(\wh\cX_{F,t}^{(m)} \big) := \frac{\sum_{t=1}^{T} \big\| \wh\cX_{F,t}^{(m)} - \cX_{F,t}^{(m)} \big\|_F^2}{ \sum_{t=1}^{T} \big\| \cX_{F,t}^{(m)} \big\|_F^2} .
\end{align*}

We experiment a variety of settings, detailed as follows. Each setting is repeated 400 times, unless otherwise stated.

\vspace{12pt}
\noindent\textbf{Setting~A:}
\begin{enumerate}[itemsep=-1pt, parsep=0pt, topsep=0pt, label = (A.\arabic*), left=1pt]
    \item Number of threads $M=3$ with $(K_1,K_2,K_3)=(1,1,2)$. Fix the cross-sectional dimensions $(p_{1,k}, p_{2,h}, p_{3,l}) =(30, 30, 10)$, and the sample size is experimented through $T\in\{100, 400\}$. Factors are strong and the noise innovation is Gaussian.
    \item Same as (A.1), but $(K_1,K_2,K_3)=(1,2,3)$ and $(p_{1,k}, p_{2,h}, p_{3,l}) =(30, 15, 10)$.
    \item Same as (A.1), except that factors are weak.
    \item Same as (A.1), except that the noise innovation is $t$ distributed.
\end{enumerate}
\vspace{12pt}

\noindent\textbf{Setting~B:}
\begin{enumerate}[itemsep=-1pt, parsep=0pt, topsep=0pt, label = (B.\arabic*), left=1pt]
    \item Number of threads $M=2$ with $(K_1,K_2)=(1,2)$. Fix dimension $(p_{1,k}, p_{2,h}) =(30, 10)$, and the sample size is experimented through $T\in\{100, 400\}$. Factors are strong and the noise innovation is Gaussian.
    \item Same as (B.1), except that $(p_{1,k}, p_{2,h}) =(100, 20)$.
    \item Same as (B.1), except that $(K_1,K_2)=(1,3)$ and $(p_{1,k}, p_{2,h}) =(30, 10)$.
    \item Same as (B.1), except that $(K_1,K_2)=(2,3)$ and $(p_{1,k}, p_{2,h}) =(10, 10)$.
\end{enumerate}
\vspace{12pt}

Settings~A and B consider $M=3$ and $M=2$, respectively. In summary, (A.1) showcases the scenario when some threads can have the same order but different from another thread, while the JXC in (A.2) has threads with different orders. (A.3) and (A.4) respectively illustrate the sensitivity of (A.1) under weak factors and heavy-tailed noise. In Setting~B, we demonstrate scenarios with different combination of thread orders. (B.2) also investigates the effect of increasing cross-sectional dimension to the estimator performance.
For computational concern, we estimate the global loading matrix using $\wh{\bA}_{m,k}^S$ discussed in \Cref{subsec: bootstrap_implement}, with $\sum_{n\in[M] \setminus\{m\}}|\cS_n^\dagger| =50$ in Setting~A, and $\sum_{n\in[M] \setminus\{m\}}|\cS_n^\dagger| =30$ in Setting~B which is the maximum possible in some settings.

\begin{table}
\caption{Results of Settings~(A.1)--(A.4). Each cell shows the mean of measure over 400 runs for the corresponding setting.}
\label{tab: simulation_setting_A}
\begin{tabular}{@{}crrrrrrrrrrrrrrrrrrrrrrrrr@{}}
\hline 
Setting A & \multicolumn{2}{c}{(A.1)} && \multicolumn{2}{c}{(A.2)} && \multicolumn{2}{c}{(A.3)} && \multicolumn{2}{c}{(A.4)}  \\
$T$
&  $100$ & $400$ && $100$ & $400$ && $100$ & $400$
&& $100$ & $400$ \\
\hline 
$\cD(\bA_{1,1}, \wh{\bA}_{1,1}^S)$ &  0.189 & 0.088 && 0.196 & 0.095  && 0.237 & 0.111 && 0.216 & 0.101  \\
$\cD(\bA_{2,1}, \wh{\bA}_{2,1}^S)$ &  0.194 & 0.092 && 0.039 & 0.019  && 0.243 & 0.116 && 0.217 & 0.105 \\
$\cD(\bA_{2,2}, \wh{\bA}_{2,2}^S)$ &  - & - && 0.045 & 0.020  && - & - && - & -  \\
$\cD(\bA_{3,1}, \wh{\bA}_{3,1}^S)$ &  0.045 & 0.022 && 0.018 & 0.008  && .073 & .035 && .055 & .027 \\
$\cD(\bA_{3,2}, \wh{\bA}_{3,2}^S)$ &  0.046 & 0.022 && 0.016 & 0.008  && 0.075 & 0.035 && 0.055 & 0.027 \\
$\cD(\bA_{3,3}, \wh{\bA}_{3,3}^S)$ &  - & - && 0.018 & 0.008  && - & - && - & -  \\
\\
$\cD(\bB_{1,1}, \wh{\bB}_{1,1})$ &  0.186 & 0.087 && 0.191 & 0.093  && 0.232 & 0.109 && 0.211 & 0.100  \\
$\cD(\bB_{2,1}, \wh{\bB}_{2,1})$ &  0.192 & 0.089 && 0.038 & 0.016  && 0.240 & 0.112 && 0.216 & 0.102 \\
$\cD(\bB_{2,2}, \wh{\bB}_{2,2})$ &  - & - && 0.033 & 0.016  && - & - && - & -  \\
$\cD(\bB_{3,1}, \wh{\bB}_{3,1})$ &  0.040 & 0.020 && 0.013 & 0.006  && 0.068 & 0.034 && 0.049 & 0.024 \\
$\cD(\bB_{3,2}, \wh{\bB}_{3,2})$ &  0.040 & 0.020 && 0.013 & 0.006  && 0.068 & 0.033 && 0.049 & 0.024 \\
$\cD(\bB_{3,3}, \wh{\bB}_{3,3})$ &  - & - && 0.013 & 0.006  && - & - && - & -   \\
\\
$\text{MSE}\big(\wh\cX_{F,t}^{(1)} \big)$ &  0.095 & 0.046 && 0.106 & 0.051  && 0.152 & 0.084 && 0.125 & 0.066  \\
$\text{MSE}\big(\wh\cX_{F,t}^{(2)} \big)$ &  0.098 & 0.047 && 0.149 & 0.005  && 0.159 & 0.085 && 0.130 & 0.067 \\
$\text{MSE}\big(\wh\cX_{F,t}^{(3)} \big)$ &  0.014 & 0.011 && 0.002 & 0.001  && 0.036 & 0.028 && 0.021 & 0.016 \\
$\text{MSE}\big(\wh\cX_{G,t}^{(1)} \big)$ &  0.111 & 0.049 && 0.122 & 0.054  && 0.173 & 0.088 && 0.143 & 0.069 \\
$\text{MSE}\big(\wh\cX_{G,t}^{(2)} \big)$ &  0.114 & 0.052 && 0.229 & 0.008  && 0.181 & 0.094 && 0.149 & 0.075 \\
$\text{MSE}\big(\wh\cX_{G,t}^{(3)} \big)$ &  0.020 & 0.016 && 0.006 & 0.002  && 0.052 & 0.040 && 0.030 & 0.024  \\
\hline 
\end{tabular}
\end{table}

\begin{table}
\caption{Results of Settings~(B.1)--(B.4). Each cell shows the mean of measure over 400 runs for the corresponding setting.}
\label{tab: simulation_setting_B}
\begin{tabular}{@{}crrrrrrrrrrrrrrrrrrrrrrrrr@{}}
\hline 
Setting B & \multicolumn{2}{c}{(B.1)} && \multicolumn{2}{c}{(B.2)} && \multicolumn{2}{c}{(B.3)} && \multicolumn{2}{c}{(B.4)}  \\
$T$
&  $100$ & $400$ && $100$ & $400$ && $100$ & $400$
&& $100$ & $400$ \\
\hline 
$\cD(\bA_{1,1}, \wh{\bA}_{1,1}^S)$ &  0.203 & 0.097 && 0.202 & 0.094  && 0.212 & 0.103 && 0.057 & 0.027  \\
$\cD(\bA_{1,2}, \wh{\bA}_{1,2}^S)$ &  - & - && - & -  && - & - && 0.052 & 0.026  \\
$\cD(\bA_{2,1}, \wh{\bA}_{2,1}^S)$ &  0.046 & 0.023 && 0.032 & 0.015  && 0.022 & 0.008 && .017 & .008 \\
$\cD(\bA_{2,2}, \wh{\bA}_{2,2}^S)$ &  0.049 & 0.023 && 0.032 & 0.015  && 0.022 & 0.008 && 0.017 & 0.009  \\
$\cD(\bA_{2,3}, \wh{\bA}_{2,3}^S)$ &  - & - && - & -  && 0.021 & 0.008 && 0.024 & 0.009  \\
\\
$\cD(\bB_{1,1}, \wh{\bB}_{1,1})$ &  0.196 & 0.093 && 0.193 & 0.091  && 0.201 & 0.097 && 0.042 & 0.020  \\
$\cD(\bB_{1,2}, \wh{\bB}_{1,2})$ &  - & - && - & -  && - & - && 0.040 & 0.019  \\
$\cD(\bB_{2,1}, \wh{\bB}_{2,1})$ &  0.040 & 0.020 && 0.028 & 0.014  && 0.017 & 0.006 && 0.015 & 0.006 \\
$\cD(\bB_{2,2}, \wh{\bB}_{2,2})$ &  0.039 & 0.020 && 0.028 & 0.014  && 0.017 & 0.006 && 0.014 & 0.006  \\
$\cD(\bB_{2,3}, \wh{\bB}_{2,3})$ &  - & - && - & -  && 0.017 & 0.006 && 0.013 & 0.006  \\
\\
$\text{MSE}\big(\wh\cX_{F,t}^{(1)} \big)$ &  0.106 & 0.052 && 0.084 & 0.024  && 0.120 & 0.055 && 0.016 & 0.011  \\
$\text{MSE}\big(\wh\cX_{F,t}^{(2)} \big)$ &  0.014 & 0.011 && 0.004 & 0.003  && 0.007 & 0.001 && 0.004 & 0.001 \\
$\text{MSE}\big(\wh\cX_{G,t}^{(1)} \big)$ &  0.127 & 0.059 && 0.099 & 0.026  && 0.140 & 0.061 && 0.028 & 0.018 \\
$\text{MSE}\big(\wh\cX_{G,t}^{(2)} \big)$ &  0.021 & 0.018 && 0.006 & 0.003  && 0.125 & 0.002 && 0.009 & 0.002 \\
\hline 
\end{tabular}
\end{table}

Results for Settings~A and B are reported in Tables~\ref{tab: simulation_setting_A} and \ref{tab: simulation_setting_B}, respectively. First of all, from $T=100$ to $T=400$, the improvement of global loading matrix estimators in all settings corroborates with the rate of convergence presented in Theorem~\ref{thm: global_loading} and \Cref{prop: global_loading_bootstrap}. In Settings~(A.3) and (A.4), all estimation errors are slight inflated, compared to those in Setting~(A.1), but still remains satisfactory. This shows our parameter estimators are quite robust to different factor strengths and heavy-tailed distributed noise. Secondly, note that errors of loading matrices are similar when they are in threads with the same order and dimensionality, \text{e.g.} $\cD(\bA_{1,1}, \wh{\bA}_{1,1}^S)$ and $\cD(\bA_{2,1}, \wh{\bA}_{2,1}^S)$ in (A.1) compared to $\cD(\bA_{1,1}, \wh{\bA}_{1,1}^S)$ in (A.2). It is consistent with the fact that in \Cref{subsec: theorem}, theoretical results for parameters in one thread involve no dimensionality from other threads. The interpretation in \Cref{tab: simulation_setting_B} is analogous to this, and performance is generally better off with larger dimensionality from the results in Setting~(B.2).

As mentioned previously, our proposed framework includes the multilevel matrix factor model by \cite{Zhangetal2025} as a special case. From \Cref{remark: hat_Sigma_rewrite}, our global loading estimator described in \Cref{subsec: est_loading} is the same as theirs and hence would give the same result. In contrast, our local loading estimator is generally more superior than theirs, according to \Cref{thm: local_loading}. We verify this by numerical results shown below. For all settings on this, we fix $M=2$ with $K_1=K_2=2$, and set $T=40$. All factors are pervasive and the noise has Gaussian innovation. We consider two settings representing different noise dynamics, specified below.

\vspace{12pt}
\noindent\textbf{Setting~C:}
\begin{enumerate}[itemsep=-1pt, parsep=0pt, topsep=0pt, label = (C.\arabic*), left=1pt]
    \item Cross-sectional dimensions are the same such that $p_{1,1}=p_{1,2}=p_{2,1}=p_{2,2}$, and the noise is white such that its AR coefficient is zero. We experiment through $p_{1,1}\in \{10, 40\}$.
    \item Same as (C.1), except that the AR coefficient of noise is 0.5.
\end{enumerate}
\vspace{12pt}

\begin{table}
\caption {Results of local loading estimators under Settings~(C.1)--(C.2), using methods proposed in this paper and in \cite{Zhangetal2025}. Each cell shows the mean of measure over 400 runs for the corresponding setting.}
\label{tab: simulation_setting_C}
\begin{tabular}{@{}crrrrrrrrrrrrrrrrrrrrrrrrr@{}}
\hline 
& \multicolumn{5}{c}{Our method} && \multicolumn{5}{c}{Zhang et al.~(2025)} \\
\hline 
Setting C & \multicolumn{2}{c}{(C.1)} && \multicolumn{2}{c}{(C.2)} && \multicolumn{2}{c}{(C.1)} && \multicolumn{2}{c}{(C.2)}  \\
$p_{1,1}$
&  $10$ & $40$ && $10$ & $40$ && $10$ & $40$
&& $10$ & $40$ \\
\hline 
$\cD(\bB_{1,1}, \wh{\bB}_{1,1})$ &  0.076 & 0.022 && 0.100 & 0.037  && 0.108 & 0.042 && 0.159 & 0.071  \\
$\cD(\bB_{1,2}, \wh{\bB}_{1,2})$ &  0.083 & 0.025 && 0.106 & 0.042  && 0.110 & 0.043 && 0.160 & 0.070  \\
$\cD(\bB_{2,1}, \wh{\bB}_{2,1})$ &  0.075 & 0.027 && 0.140 & 0.047  && 0.111 & 0.045 && 0.178 & 0.070  \\
$\cD(\bB_{2,2}, \wh{\bB}_{2,2})$ &  0.086 & 0.029 && 0.113 & 0.034  && 0.113 & 0.044 && 0.178 & 0.071  \\
\\
Run Time (s) &  0.073 & 0.091 && 0.108 & 0.089  && 22.347 & 466.275 && 33.267 & 461.613  \\
\hline 
\end{tabular}
\end{table}

\Cref{tab: simulation_setting_C} presents the results for Setting~C, which clearly shows the better performance of our method both in estimation accuracy and computational cost. Note that although both methods have lower estimation errors when $p_{1,1}$ increases from 10 to 40, our method enjoys faster decay on the errors, benefitting from the explicit cross-sectional dimensions involved in the convergence rates spelt out in \Cref{thm: local_loading}. Moreover, it is not surprising that the estimator in \cite{Zhangetal2025} suffers more in Setting~(C.2), since their method relied on the idiosyncratic noise being white.

Lastly, we demonstrate the performance of our global factor number estimator defined in \eqref{eqn: est_num_fac_define}. We adopt the choice of $r_{\max,m,k}$ and $\xi_{m,k}$ below \eqref{eqn: est_num_fac_define}, and consider settings described as follows.

\vspace{12pt}
\noindent\textbf{Setting~D:}
\begin{itemize}
    \item [(D.1--2)] Same as (B.1)--(B.2), except that the each mode in the second thread has two global factors. Besides the standard setup with strong factors and Gaussian noise, we also experiment weak factors and $t_6$ noise.
\end{itemize}
\vspace{12pt}

\begin{table}
\caption{Performance of global factor number estimator in Settings~(D.1)--(D.2). Each cell is formatted as $a \,(b \mid c)$, where $a$, $b$, and $c$ respectively represent the correct, underestimation, and overestimation proportion (in $100\%$) over 400 runs for the corresponding setting. We use ``Strong'', ``Weak'', ``$\cN$'', and ``$t_6$'' to respectively denote settings with strong factors, weak factors, Gaussian noise, and heavy-tailed noise. In each setting, the first row displays results with constant $1/5$ used in $\xi_{m,k}$ and the second displays those using $\xi_{m,k}=0$.}
\label{tab: simulation_setting_D}
\begin{tabular}{@{}crrrrrrrrrrrrrrrrrrrrrrrrr@{}}
\hline 
Setting D & \multicolumn{2}{c}{(D.1)} && \multicolumn{2}{c}{(D.2)}  \\
$T$
&  $100$ & $400$ && $100$ & $400$ \\
\hline 
\multicolumn{1}{c}{Strong, $\cN$}  \\
$r_{1,1}$ &  60.8 (32.5 $\mid$ 06.7)  & 87.5 (08.5 $\mid$ 04.0) && 77.3 (15.7 $\mid$ 07.0) & 93.8 (02.3 $\mid$ 04.0)  \\
          &  56.0 (11.0 $\mid$ 33.0)  & 77.8 (01.2 $\mid$ 21.0) && 00.0 (00.0 $\mid$ 100.0) & 00.0 (00.0 $\mid$ 100.0) \\
$r_{2,1}$ &  73.5 (04.0 $\mid$ 22.5) & 93.0 (01.0 $\mid$ 06.0) && 63.0 (00.5 $\mid$ 36.5) & 90.8 (00.0 $\mid$ 09.2) \\
          &  71.5 (04.0 $\mid$ 24.5) & 92.5 (00.8 $\mid$ 06.7) && 60.3 (00.2 $\mid$ 39.5) & 90.5 (00.0 $\mid$ 09.5) \\
$r_{2,2}$ &  70.8 (04.8 $\mid$ 24.5) & 96.5 (00.0 $\mid$ 03.5) && 63.3 (00.2 $\mid$ 36.5) & 91.3 (00.0 $\mid$ 08.7) \\
          &  69.5 (04.8 $\mid$ 25.7)  & 96.0 (00.0 $\mid$ 04.0) && 59.5 (00.8 $\mid$ 39.7) & 90.8 (00.0 $\mid$ 09.2) \\
\\
\multicolumn{1}{c}{Strong, $t_6$} \\
$r_{1,1}$ &  51.3 (39.8 $\mid$ 09.0) & 85.3 (09.5 $\mid$ 05.2) && 71.5 (21.0 $\mid$ 07.5) & 92.8 (02.5 $\mid$ 04.7) \\
          &  48.8 (20.0 $\mid$ 31.2)  & 83.0 (02.3 $\mid$ 14.7) && 00.0 (00.0 $\mid$ 100.0) & 00.0 (00.0 $\mid$ 100.0) \\
$r_{2,1}$ &  66.5 (10.0 $\mid$ 23.5) & 96.3 (01.0 $\mid$ 02.7) && 70.8 (01.2 $\mid$ 28.0) & 95.3 (00.0 $\mid$ 04.7) \\
          &  65.3 (09.8 $\mid$ 25.0) & 95.8 (01.0 $\mid$ 03.2) && 67.3 (01.2 $\mid$ 31.5) & 94.5 (00.3 $\mid$ 05.2) \\
$r_{2,2}$ &  70.0 (08.5 $\mid$ 21.5) & 98.0 (00.8 $\mid$ 01.2) && 69.5 (00.8 $\mid$ 29.7) & 94.8 (00.0 $\mid$ 05.2)  \\
          &  68.8 (08.0 $\mid$ 23.2) & 97.5 (00.8 $\mid$ 01.7) && 68.8 (00.2 $\mid$ 31.0) & 96.3 (00.0 $\mid$ 03.7) \\
\\
\multicolumn{1}{c}{Weak, $\cN$} \\
$r_{1,1}$ &  29.3 (68.5 $\mid$ 02.2) & 69.8 (29.2 $\mid$ 01.0) && 34.8 (64.2 $\mid$ 01.0) & 75.5 (22.0 $\mid$ 02.5) \\
          &  43.5 (34.8 $\mid$ 21.7) & 78.8 (05.5 $\mid$ 15.7) && 00.0 (00.0 $\mid$ 100.0) & 00.0 (00.0 $\mid$ 100.0) \\
$r_{2,1}$ &  72.5 (16.3 $\mid$ 11.2) & 97.5 (02.5 $\mid$ 00.0) && 80.8 (04.7 $\mid$ 14.5) & 99.0 (00.0 $\mid$ 01.0) \\
          &  71.8 (15.5 $\mid$ 12.7) & 97.0 (02.5 $\mid$ 00.5) && 78.0 (04.2 $\mid$ 17.8) & 98.8 (00.0 $\mid$ 01.2) \\
$r_{2,2}$ &  74.8 (13.0 $\mid$ 12.2) & 98.3 (01.5 $\mid$ 00.2) && 80.8 (03.5 $\mid$ 15.7) & 99.5 (00.0 $\mid$ 00.5)  \\
          &  73.8 (12.0 $\mid$ 14.2) & 98.3 (01.5 $\mid$ 00.2) && 82.0 (03.0 $\mid$ 15.0) & 98.8 (00.0 $\mid$ 01.2) \\
\hline 
\end{tabular}
\end{table}

Results of our estimator and those using $\xi_{m,k}=0$ are all included in \Cref{tab: simulation_setting_D}. First, the performance of our estimator remains satisfactory under $t_6$ noise, comparing with the results under Gaussian noise. It is also clear that our proposed estimator has higher correct proportion than the naive eigenvalue-ratio estimator (using $\xi_{m,k}=0$) in most setting. Exceptions include estimating $r_{1,1}$ for Setting~(D.1) with weak factors, where our estimator suffers more. This can be understood since the number of factors are more underestimated when factors are weak, while the naive estimator tends to overestimate the factor numbers and hence remedy this when cross-sectional dimensions are small. Notwithstanding, the naive estimator significantly suffers and always overestimates $r_{1,1}$ in Setting~(D.2). In comparison, our proposed estimator has improved accuracy when dimensionality increases, numerically reflecting the consistency result in \Cref{thm: eigenratio}.

\subsection{Real data analysis}
\label{subsec: real_data}

In this subsection, we use our proposed multi-order tensor factor model to analyse New York traffic data. We study a JXC constructed from a dataset which includes all individual taxi rides operated by Yellow Taxi within Manhattan Island of New York City. The data is available at:

{\small \texttt{https://www1.nyc.gov/site/tlc/about/tlc-trip-record-data.page}} .

To preclude the influence of Covid-19, we focus on trip records during the pre-Covid period from January 1, 2013 to December 31, 2019. For an overview of the dataset, it contains information on the pick-up and drop-off time and locations which are coded according to 69 predefined taxi zones on Manhattan Island. In particular, each day is divided into 24 hourly periods to represent the pick-up and drop-off time, with the first hourly period from 0 \text{a.m.} to 1 a.m, so that on day $t$ we have $\cY_t\in \R^{69\times 69\times 24}$, where $y_{i_1,i_2,i_3,t}$ is the number of trips from zone $i_1$ to zone $i_2$, with pick-up time within the $i_3$th hourly period.
We consider the business-day series which is 1,763 days long.

The taxi data examples in both \cite{Chenetal2022} and \cite{CenLam2025} suggest that Times Square zone is one of the centre zones leading the traffic dynamic in its local areas. To investigate this in detail, for each $t\in[1763]$, we form two threads: (1) $\cX_t^{(1)} \in\R^{24}$ denoting the hourly trip data within Times Square zone; and (2) $\cX_t^{(2)} \in \R^{18\times 18\times 24}$ as a sub-tensor of $\cY_t$ by selecting 18 neighbouring zones around Times Square. The map of selected taxi zones on Manhattan Island is included in the supplement.

The number of factors are estimated according to \Cref{subsec: est_num_fac}, where we adopt the iTIPUP estimator by \cite{Chenetal2022} to obtain $\wh{s}_{1,1} = 2$, $\wh{s}_{2,1} = 2$, $\wh{s}_{2,2} = 3$, and $\wh{s}_{2,3} = 2$.
The resulted global and hence local factor numbers are estimated as
\[
(\wh{r}_{1,1}, \wh{u}_{1,1}) =(1,1), \;
(\wh{r}_{2,1}, \wh{u}_{2,1}) =(1,1), \;
(\wh{r}_{2,2}, \wh{u}_{2,2}) =(1,2), \;
(\wh{r}_{2,3}, \wh{u}_{2,3}) =(1,1).
\]

\begin{table}
\caption{Estimated global and local factor loading matrices on the hour factor. Values are scaled by 30 and in bold if their magnitudes are larger than 10.}
\label{tab: taxi_biz_hour_loading}
\footnotesize
\setlength{\tabcolsep}{3.6pt}
\begin{center}
\begin{tabular}{l||ccccccccccccccccccccccccccc}
\hline $0_\text{am}$ & & 2 & & 4 & & 6 & & 8 & & 10 & & $12_\text{pm}$ & & 2 & & 4 & & 6 & & 8 & & 10 & & $12_\text{am}$ \\
\hline $\wh{\bA}_{1,1}$ & 5 & 3 & 2 & 2 & 1 & 1 & 2 & 5 & 6 & 6 & 6 & 6 & 7 & 7 & 7 & 6 & 6 & 6 & \textbf{10} & \textbf{12} & 7 & 6 & 6 & 6 \\
$\wh{\bB}_{1,1}$ & -6 & -5 & -3 & -2 & -1 & -1 & 3 & 5 & 2 & 0 & -1 & 1 & 3 & 3 & 1 & 1 & 0 & 5 & \textbf{11} & 3 & \textbf{-25} & -2 & 1 & -2 \\
\\
$\wh{\bA}_{2,3}$ & 3 & 2 & 1 & 1 & 1 & 1 & 3 & 6 & 8 & 9 & 8 & 8 & 8 & 8 & 8 & 7 & 6 & 6 & 7 & 8 & 7 & 6 & 6 & 5 \\
$\wh{\bB}_{2,3}$ & 2 & 1 & 1 & 1 & 0 & 1 & 2 & 5 & 8 & 8 & 8 & 4 & 8 & 8 & 8 & 9 & 8 & 8 & 9 & 9 & 7 & 5 & 4 & 3 \\
\hline
\end{tabular}
\end{center}
\end{table}

For the first thread, we use \Cref{tab: taxi_biz_hour_loading} to illustrate the estimated loading matrices on the hour factor. One key feature is the sheer difference in the global and local loading estimators for $\{\cX_t^{(1)}\}$: although they resonate at the 7--8 \text{p.m.} period, the local loading estimator contributes to the taxi traffic in Times Squares zone very mildly or even passively. This suggests that the flow in Times Squares can be largely explained by traffic in and out of Times Squares, which is reasonable since Times Squares can be seen as a core for tourism or offices, \text{cf.} Table~4 in \cite{Chenetal2022}. Such a centric position of Times Squares is also reflected by the peaking magnitudes in the table.

\begin{figure}
\begin{center}
\includegraphics[width=0.94\columnwidth]{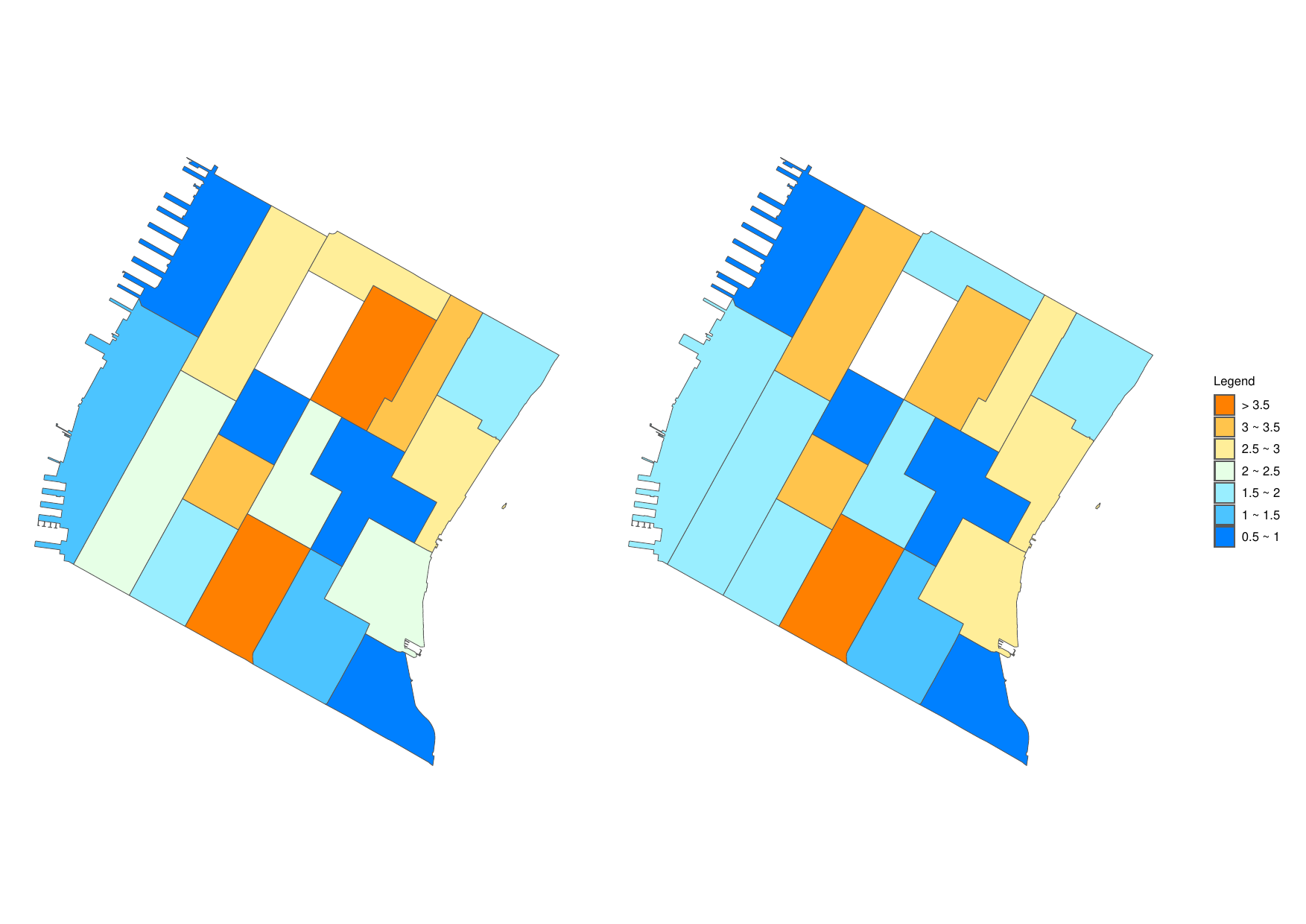}
\caption{Estimated global (left) and local (right) loading matrices on the pickup factor. Values are scaled by 10. Times Squares zone is in white.}
\label{Fig: taxi_biz_pickup_loading}
\end{center}
\begin{center}
\centerline{\includegraphics[width=0.96\columnwidth]{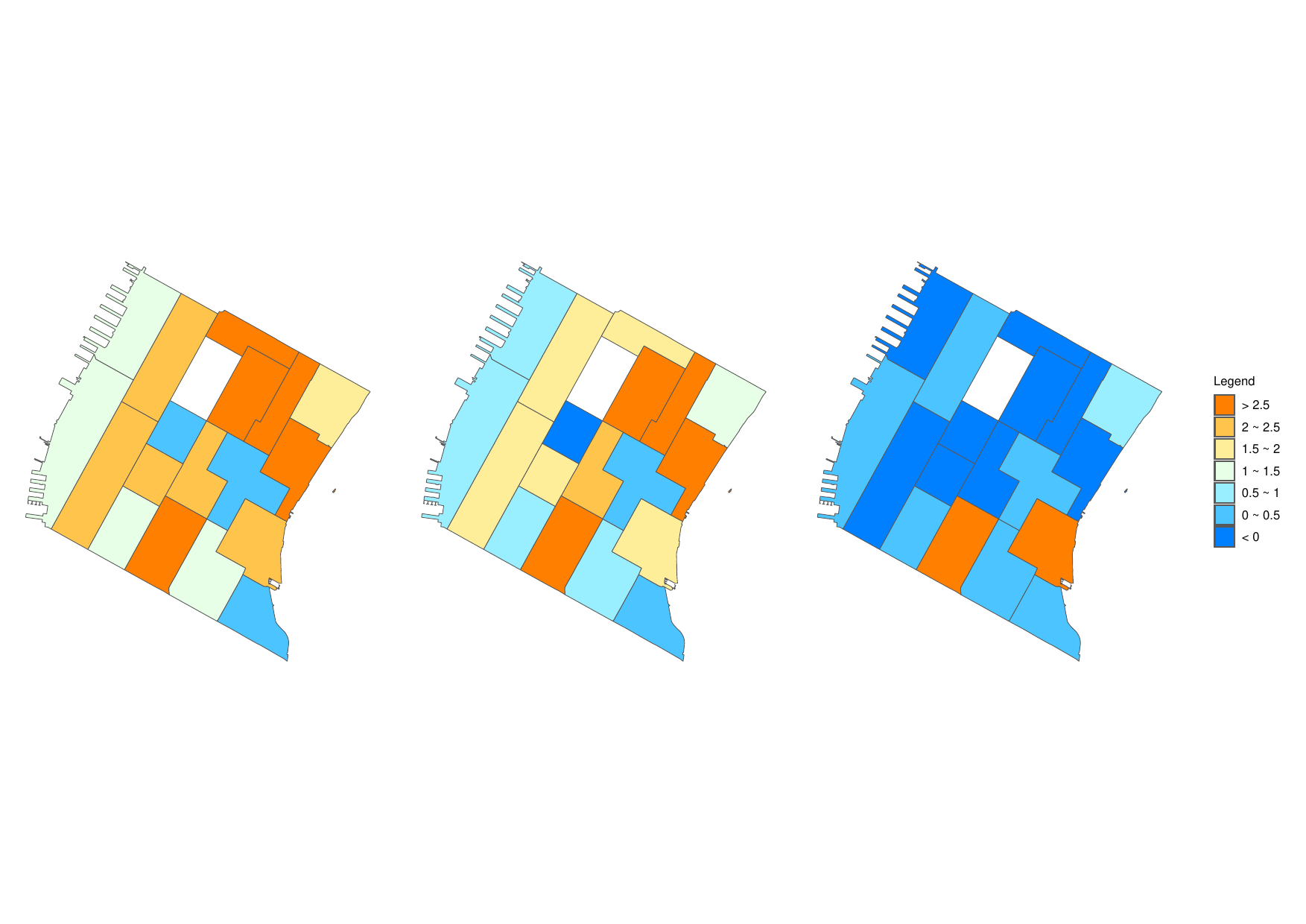}}
\caption{Estimated global (left) and local (middle and right) loading matrices on the drop-off factor. Values are scaled by 10. Times Squares zone is in white.}
\label{Fig: taxi_biz_dropoff_loading}
\end{center}
\end{figure}

Next, consider the second thread which represents the local areas around Times Squares zone. Similarly, we plot the estimated loading matrices on the pickup and drop-off factors in Figures~\ref{Fig: taxi_biz_pickup_loading} and \ref{Fig: taxi_biz_dropoff_loading}, respectively. The first feature is the different patterns between the pickup and drop-off factors. In detail, the pickup factors are generally stronger and centring at the Midtown areas and Union Square, whereas the drop-off factors have more intricate patterns, which is consistent with more drop-off factors, in our selected zones, found in Table~4 in \cite{Chenetal2022}.

Another finding lies at the right panel of \Cref{Fig: taxi_biz_dropoff_loading}, denoting the second local loading matrix estimated on the drop-off factor. Besides pinpointing again the importance of Union Square which is a transportation hub with many shops and restaurants, the active Kips Bay zone can be clearly identified, indicating the busy traffic within. It is also interesting that Kips Bay zone is almost silent in the estimated global loading on the pickup factor. Given that Kips Bay contains many medical facilities and schools, it is natural that there are very few people commuting from Times Squares to Kips Bay. Nevertheless, it is worth pointing out that this zone has been omitted in both \cite{Chenetal2022} and \cite{CenLam2025}, while our discovery takes advantage of the proposed model taking into account the dependency both globally and locally.


\begin{appendix}

\section{Introduction of tensor map operator}
\label{sec: intro_map}

We extend in this subsection the reshape operator introduced in \cite{CenLam2025_KronProd} to a \textit{map} operator. Recall first that given any order-$K$ tensor $\cX\in \R^{I_1\times \dots \times I_K}$ and a set of ordered, strictly increasing elements $v=\{a_1,\dots, a_\ell\} \subseteq[K]$, the reshaped tensor $\reshape(\cX,v)$ is the order-$(K-\ell+1)$ tensor resulting from combining all modes indexed by $v$ into the last mode. Let $V=\{v_1,\dots, v_L\}$ be a set of ordered vectors where each vector $v_l$ is the parameter representing mode indices in a reshape operation and all $v_l$'s, $l\in[L]$, form a partition of $[K]$. Then we denote the mapped tensor $\map(\cX, V)$ as the order-$L$ tensor such that the $l$th mode has dimension $\prod_{j\in v_l} I_j$ and is constructed by combining entries of $\cX$ with modes in $v_l$. This $\map(\cdot,\cdot)$ operation is more general than permuting the array elements in $\cX$ as it also allows for stacking elements from different modes into one, and more importantly, preserves the factor structure in the data, by iteratively applying Theorem~1 in \cite{CenLam2025_KronProd}.

Hereafter, we refer to $V$ as \textit{channel}. Without loss of generality, let the data tensor $\cX$ be pre-processed such that, for $a<b$, elements in $v_a$ are larger than those in $v_b$. Then the procedure of mapping a tensor can be summarized in \Cref{alg: map}. Finally, by \cite{CenLam2025_KronProd}, the reshape operator can be inverted, meaning that one can recover $\cX$ from $\reshape(\cX, v)$ given the original dimensions of $\cX$, we may also recover $\cX$ from $\map(\cX, V)$.

\begin{algorithm}
\caption{Tensor map}\label{alg: map}
\begin{algorithmic}[1]
\State \textbf{Input:} Tensor $\cX$, channel $V=\{v_1,\dots, v_L\}$
\State \textbf{Initialize:} Set $\map(\cX, V) \gets \cX$
\For{$l \in [L]$}
\State Set $\map(\cX, V) \gets \reshape\big\{ \map(\cX, V), v_l \big\}$
\EndFor
\State \textbf{Output:} The mapped tensor $\map(\cX, V)$
\end{algorithmic}
\end{algorithm}




\end{appendix}



\begin{supplement}
\stitle{Supplement to ``Identification and estimation of multi-order tensor factor models''}
\sdescription{The online supplement includes additional details on the traffic dataset in \Cref{subsec: real_data}, in addition to the proof of all theoretical results and auxiliary results.}
\end{supplement}


\bibliographystyle{imsart-nameyear} 
\bibliography{reference}       





\clearpage
\setcounter{section}{0}
\setcounter{equation}{0}
\renewcommand{\thesection}{S.\arabic{section}}
\renewcommand{\theHsection}{S.\arabic{section}} 
\renewcommand{\thesubsection}{S\arabic{section}.\arabic{subsection}}
\renewcommand{\theHsubsection}{S\arabic{section}.\arabic{subsection}} 

\def\theequation{S.\arabic{equation}}
\def\thelemma{S.\arabic{lemma}}
\def\theremark{S.\arabic{remark}}

\setcounter{page}{1}

\title{Supplement to ``Identification and estimation of multi-order tensor factor models''}
\vspace{12pt}


In this supplementary material, we provide additional details on the traffic data in the main text.
Proofs of the main results stated in the main text are also included, together with auxiliary results and their proofs.

\section{Additional details}

\subsection{New York traffic data: visualisation}

\Cref{Fig: manhattan} displays the map of 69 taxi zones on Manhattan Island with the selected zones coloured.

\begin{figure}[htp!]
\begin{center}
\centerline{\includegraphics[width=.95\columnwidth]{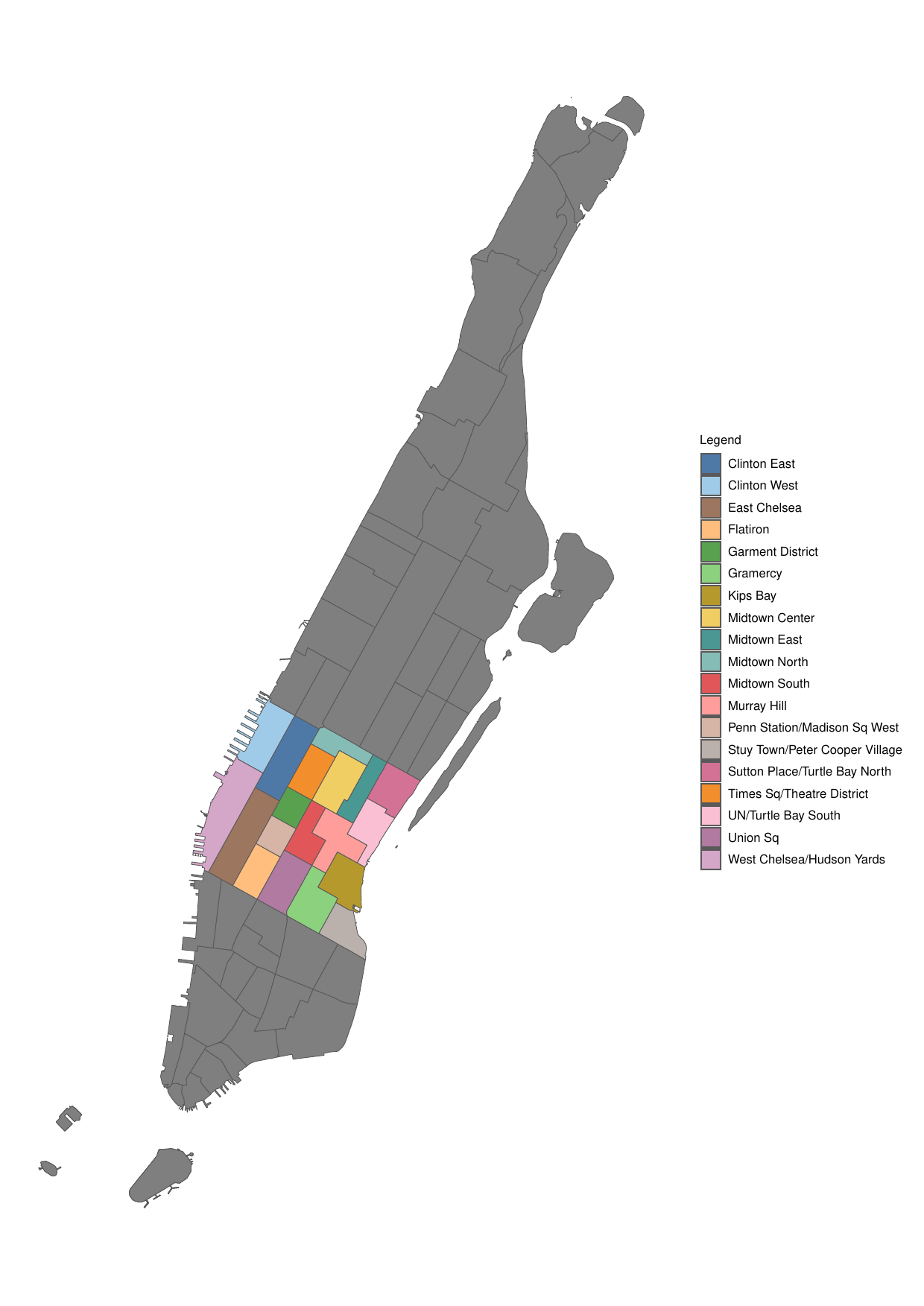}}
\caption{Taxi zones on Manhattan Island. Times Square zone is in orange, 18 neighbouring zones are in different colours, and all remaining zones are shaded.}
\label{Fig: manhattan}
\end{center}
\end{figure}

\section{Proof of theorems and auxiliary results}

\subsection{Lemmas with their proofs}

\begin{lemma}\label{lemma: rate_Omega}
Let Assumptions~\ref{ass: identification}, \ref{ass: loadings}, \ref{ass: noise}, and \ref{ass: dependence_moment} hold. For any $m\in[M]$, $k\in[K_m]$, $n\neq m$, we have
\begin{align*}
    &\;\quad
    \sum_{\bi \in [p_{n,1}] \times \cdots \times[p_{n,K_n}]} \Big\| \wh\bOmega_{k,n,\bi}^{(m)} - \bOmega_{k,n,\bi}^{(m)} \Big\|_F^2 
    = \cO_P\Big( \frac{p_m p_n}{T} \Big) .
\end{align*}
\end{lemma}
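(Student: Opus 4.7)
The plan is to bound the expected value of the left-hand side and then upgrade to the $\cO_P$ rate via Markov's inequality. Writing out
\[
\wh\bOmega_{k,n,\bi}^{(m)} - \bOmega_{k,n,\bi}^{(m)} = \frac{1}{T} \sum_{t=1}^T \Big\{ \mat_k(\cX_t^{(m)}) (\cX_t^{(n)})_{\bi} - \E\big[ \mat_k(\cX_t^{(m)}) (\cX_t^{(n)})_{\bi} \big] \Big\},
\]
I would unroll the squared Frobenius norm entry by entry and interchange the sums over $\bi$, the $p_m$ entries of $\mat_k(\cX_t^{(m)})$, and the expectation. The pair formed by an entry index of $\mat_k(\cX_t^{(m)})$ together with $\bi$ ranges over all $p_m p_n$ coordinates of $\vec(\cX_t^{(m)}) \otimes \vec(\cX_t^{(n)})$, so this bookkeeping yields the identity
\[
\sum_{\bi} \E\big\|\wh\bOmega_{k,n,\bi}^{(m)} - \bOmega_{k,n,\bi}^{(m)}\big\|_F^2 = \frac{1}{T}\, \tr\bigg\{ \var\bigg( \frac{1}{\sqrt{T}} \sum_{t=1}^T \vec(\cX_t^{(m)}) \otimes \vec(\cX_t^{(n)}) \bigg) \bigg\} .
\]
Since the argument has dimension $p_m p_n$, the trace is at most $p_m p_n$ times the largest diagonal entry of this PSD variance matrix, which is itself equal to its max-norm.

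It then remains to show the max-norm is bounded by an absolute constant. To do so, I would apply the model decomposition $\vec(\cX_t^{(m)}) = (\otimes_k \bA_{m,k})\,\vec(\cG_t^{(m)}) + (\otimes_k \bB_{m,k})\,\vec(\cF_t^{(m)}) + \vec(\cE_t^{(m)})$ and analogously for thread $n$, so that $\vec(\cX_t^{(m)}) \otimes \vec(\cX_t^{(n)})$ splits into nine cross summands each of the form $(\bM_m \otimes \bM_n)(\bu_t^{(m)} \otimes \bv_t^{(n)})$, where $\bM_m, \bM_n$ are Kronecker products of loading matrices (or the identity for the pure noise parts) and $\bu_t^{(m)}, \bv_t^{(n)}$ are one of $\vec(\cG_t^{(\cdot)}), \vec(\cF_t^{(\cdot)}), \vec(\cE_t^{(\cdot)})$. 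For every cross term the target diagonal entry becomes a weighted sum of entries of the max-norm-bounded covariance of $T^{-1/2}\sum_t \bu_t^{(m)} \otimes \bv_t^{(n)}$; the four pairings $(\cG,\cG)$, $(\cG,\cF)$, $(\cF,\cG)$, $(\cF,\cF)$ are directly handled by the last three bullets of Assumption~\ref{ass: dependence_moment}(b) combined with bounded loading entries from Assumption~\ref{ass: loadings}(a) and fixed factor numbers, while the four terms containing exactly one factor of $\cE_t^{(n)}$ or $\cE_t^{(m)}$ are handled by the first bullet of Assumption~\ref{ass: dependence_moment}(b); the pure noise--noise diagonal is controlled by the serial moment bounds in Assumption~\ref{ass: noise}(b)--(c) together with the cross-thread uncorrelation in Assumption~\ref{ass: dependence_moment}(a).

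Putting these bounds together yields $\sum_{\bi} \E\|\wh\bOmega_{k,n,\bi}^{(m)}-\bOmega_{k,n,\bi}^{(m)}\|_F^2 = \cO(p_m p_n/T)$, from which the stated $\cO_P(p_m p_n/T)$ rate follows immediately by Markov's inequality. The main obstacle will be the nine-term bookkeeping and, in particular, cleanly extracting separate diagonal-variance bounds for the three constituents (global, local, noise, from thread $m$) paired with $\vec(\cE_t^{(n)})$ out of the single composite bound in bullet~1 of Assumption~\ref{ass: dependence_moment}(b); I expect this to follow by reapplying the $\cX_t^{(m)}$-decomposition inside that bullet and using Cauchy--Schwarz on covariances, since the cross-thread zero-mean and pairwise uncorrelation from Assumption~\ref{ass: dependence_moment}(a) guarantee the relevant cross-covariances in the decomposition vanish.
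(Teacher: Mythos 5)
Your overall strategy is essentially the paper's, just carried out in a different order: both arguments reduce $\sum_{\bi}\|\wh\bOmega_{k,n,\bi}^{(m)}-\bOmega_{k,n,\bi}^{(m)}\|_F^2$ to squared norms of centred averages of Kronecker vectors, split into the nine global/local/noise pairings, and convert the max-norm variance bounds of Assumption~\ref{ass: dependence_moment}(b) into the $p_m p_n/T$ rate, with the loadings contributing the factor $p_m$ (via $\|\bA_{m,k}\|_F^2\|\bA_{m,\text{-}k}\|_F^2$ in the paper, via $O(1)$ row $\ell_1$-norms of $\bM_m\otimes\bM_n$ acting on a max-norm-bounded PSD matrix in yours). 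The trace-of-variance identity, the ``trace $\le$ dimension $\times$ max diagonal'' step, and the concluding Markov argument are all fine.

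The one step that does not close as written is the ``extraction'' you flag at the end. Assumption~\ref{ass: dependence_moment}(a) is a first-moment statement ($\E[(\cX_t^{(m)})_a(\cE_t^{(n)})_b]=0$); it says nothing about fourth-moment quantities such as $\cov\big\{(\cG_t^{(m)})_a(\cE_t^{(n)})_b,\,(\cF_s^{(m)})_c(\cE_s^{(n)})_d\big\}$. Hence the cross-covariances in the expansion of $\var\{T^{-1/2}\sum_t\vec(\cX_t^{(m)})\otimes\vec(\cE_t^{(n)})\}$ do not vanish, and you cannot recover diagonal bounds for the individual constituents (in particular for the pure noise--noise piece $\vec(\cE_t^{(m)})\otimes\vec(\cE_t^{(n)})$, which is not delivered by Assumption~\ref{ass: noise} either, since that assumption is within-thread) from the composite bound by Cauchy--Schwarz. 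The paper does not attempt any such extraction: it applies the Assumption~\ref{ass: dependence_moment}(b)-type variance bound separately to each core pairing (e.g.\ to $T^{-1/2}\sum_t\vec(\cG_t^{(m)})\otimes\vec(\cE_t^{(n)})$ for its term $\cI_2$ and, implicitly, to $\vec(\cE_t^{(m)})\otimes\vec(\cE_t^{(n)})$ for $\cI_9$), i.e.\ it reads the assumption as holding componentwise. Under that reading your proof goes through verbatim; under the literal composite reading the gap is real and is not repaired by Assumption~\ref{ass: dependence_moment}(a).
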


\begin{proof}[Proof of Lemma~\ref{lemma: rate_Omega}]
From \eqref{eqn: unfold_Xt} and \eqref{eqn: def_hat_Sigma_mk}, we may further decompose
\begin{align}
    &\;\quad
    \wh\bOmega_{k,n,\bi}^{(m)} = \frac{1}{T} \sum_{t=1}^T \mat_k(\cX_t^{(m)}) \cdot (\cX_t^{(n)})_{\bi} \notag \\
    &=
    \frac{1}{T} \sum_{t=1}^T \Big\{ \bA_{m,k} \mat_k(\cG_t^{(m)}) \bA_{m,\text{-}k}^\trans + \bB_{m,k} \mat_k(\cF_t^{(m)}) \bB_{m,\text{-}k}^\trans + \mat_k(\cE_t^{(m)}) \Big\} \cdot (\cX_t^{(n)})_{\bi} \notag \\
    &=
    \frac{1}{T} \sum_{t=1}^T \bA_{m,k} \mat_k(\cG_t^{(m)}) \Big( \cG_t^{(n)} \times_{j=1}^{K_n} \bA_{n,j} \Big)_{\bi} \bA_{m,\text{-}k}^\trans \notag \\
    &\;\quad
    + \frac{1}{T} \sum_{t=1}^T \bA_{m,k} \mat_k(\cG_t^{(m)}) \Big( \cE_t^{(n)} \Big)_{\bi} \bA_{m,\text{-}k}^\trans \notag \\
    &\;\quad
    + \frac{1}{T} \sum_{t=1}^T \bA_{m,k} \mat_k(\cG_t^{(m)}) \Big( \cF_t^{(n)} \times_{\ell=1}^{K_n} \bB_{n,\ell} \Big)_{\bi} \bA_{m,\text{-}k}^\trans \notag \\
    &\;\quad
    + \frac{1}{T} \sum_{t=1}^T \bB_{m,k} \mat_k(\cF_t^{(m)}) \Big( \cE_t^{(n)} \Big)_{\bi} \bB_{m,\text{-}k}^\trans \notag \\
    &\;\quad
    + \frac{1}{T} \sum_{t=1}^T \bB_{m,k} \mat_k(\cF_t^{(m)}) \Big( \cG_t^{(n)} \times_{j=1}^{K_n} \bA_{n,j} \Big)_{\bi} \bB_{m,\text{-}k}^\trans \notag \\
    &\;\quad
    + \frac{1}{T} \sum_{t=1}^T \mat_k(\cE_t^{(m)}) \Big( \cG_t^{(n)} \times_{j=1}^{K_n} \bA_{n,j} \Big)_{\bi} \notag \\
    &\;\quad
    + \frac{1}{T} \sum_{t=1}^T \bB_{m,k} \mat_k(\cF_t^{(m)}) \Big( \cF_t^{(n)} \times_{\ell=1}^{K_n} \bB_{n,\ell} \Big)_{\bi} \bB_{m,\text{-}k}^\trans \notag \\
    &\;\quad
    + \frac{1}{T} \sum_{t=1}^T \mat_k(\cE_t^{(m)}) \Big( \cF_t^{(n)} \times_{\ell=1}^{K_n} \bB_{n,\ell} \Big)_{\bi} \notag \\
    &\;\quad
    + \frac{1}{T} \sum_{t=1}^T \mat_k(\cE_t^{(m)}) \Big( \cE_t^{(n)} \Big)_{\bi} \notag \\
    &
    =: \cI_1 + \cI_2 + \cI_3 + \cI_4 + \cI_5 + \cI_6 + \cI_7 + \cI_8 + \cI_9 .
    \label{eqn: decomp_Omega}
\end{align}

Note that by Assumptions~\ref{ass: identification}(b), \ref{ass: noise}, and \ref{ass: dependence_moment}(a), $\cI_2$--$\cI_9$ all have zero mean. For $\cI_2$, consider $\sum_{\bi \in [p_{n,1}] \times \cdots \times[p_{n,K_n}]} \big\| T^{-1} \sum_{t=1}^T \mat_k(\cG_t^{(m)}) \big( \cE_t^{(n)} \big)_{\bi} \big\|_F^2$ first. Then we have
\begin{align*}
    \sum_{\bi \in \substack{[p_{n,1}] \times \cdots\\ \times[p_{n,K_n}] }} \Bigg\| \frac{1}{T} \sum_{t=1}^T \mat_k(\cG_t^{(m)}) \Big( \cE_t^{(n)} \Big)_{\bi} \Bigg\|_F^2 
    &= \Bigg\| \frac{1}{T} \sum_{t=1}^T \vec(\cG_t^{(m)}) \otimes \vec(\cE_t^{(n)}) \Bigg\|_F^2 \\
    &= \cO_P\Big( \frac{p_n}{T} \Big) ,
\end{align*}
where the last equality used \Cref{ass: dependence_moment}(b) and zero mean of $\cI_2$. Hence, by \Cref{ass: loadings}(b),
\begin{equation}
\label{eqn: cI2_rate}
\sum_{\bi \in [p_{n,1}] \times \cdots \times[p_{n,K_n}]} \|\cI_2\|_F^2 \leq \cO_P\Big( \frac{p_n}{T} \Big)\cdot \|\bA_{m,k}\|_F^2 \cdot \|\bA_{m,\text{-}k}\|_F^2 = \cO_P\Big( \frac{p_m p_n}{T} \Big) .
\end{equation}

For $\cI_3$--$\cI_9$, note that by \Cref{ass: loadings}(a), $\Big( \cG_t^{(n)} \times_{j=1}^{K_n} \bA_{n,j} \Big)_{\bi}$ and $\Big( \cF_t^{(n)} \times_{\ell=1}^{K_n} \bB_{n,\ell} \Big)_{\bi}$ are respectively finite linear combinations of elements in $\cG_t^{(n)}$ and $\cF_t^{(n)}$. Then by the similar arguments as the above steps for $\cI_2$, for $j=3, \dots, 9$, we have
\begin{equation}
\label{eqn: cI3_to_cI9_rate}
\sum_{\bi \in [p_{n,1}] \times \cdots \times[p_{n,K_n}]} \|\cI_j\|_F^2 = \cO_P\Big( \frac{p_m p_n}{T} \Big) .
\end{equation}

Finally, consider $\sum_{\bi \in [p_{n,1}] \times \cdots \times[p_{n,K_n}]} \|\cI_1 - \E(\cI_1)\|_F^2$. By \Cref{ass: loadings}(b), we have
\begin{align}
    &\;\quad
    \sum_{\bi \in [p_{n,1}] \times \cdots \times[p_{n,K_n}]} \|\cI_1 - \E(\cI_1)\|_F^2 \notag \\
    &\leq
    \|\bA_{m,k}\|_F^2 \cdot \|\bA_{m,\text{-}k}\|_F^2
    \sum_{\bi \in \substack{[p_{n,1}] \times \cdots\\ \times[p_{n,K_n}] }} \Bigg\| \frac{1}{T} \sum_{t=1}^T \Bigg[ \mat_k(\cG_t^{(m)}) \Big( \cG_t^{(n)} \times_{j=1}^{K_n} \bA_{n,j} \Big)_{\bi} \notag \\
    &\;\quad
    - \E\Big\{ \mat_k(\cG_t^{(m)}) \Big( \cG_t^{(n)} \times_{j=1}^{K_n} \bA_{n,j} \Big)_{\bi} \Big\} \Bigg] \Bigg\|_F^2 \notag \\
    &\lesssim
    p_m \cdot \sum_{\bi \in \substack{[p_{n,1}] \times \cdots\\ \times[p_{n,K_n}] }} \Bigg\| \frac{1}{T} \sum_{t=1}^T \Bigg[ \vec(\cG_t^{(m)}) \Big( \cG_t^{(n)} \times_{j=1}^{K_n} \bA_{n,j} \Big)_{\bi} \notag \\
    &\;\quad
    - \E\Big\{ \vec(\cG_t^{(m)}) \Big( \cG_t^{(n)} \times_{j=1}^{K_n} \bA_{n,j} \Big)_{\bi} \Big\} \Bigg] \Bigg\|_F^2 \notag \\
    &\lesssim
    p_m \cdot \sum_{\bi \in \substack{[p_{n,1}] \times \cdots\\ \times[p_{n,K_n}] }} \Bigg\| \frac{1}{T} \sum_{t=1}^T \Bigg[ \vec(\cG_t^{(m)}) \otimes \vec(\cG_t^{(n)}) - \E\Big\{ \vec(\cG_t^{(m)}) \otimes \vec(\cG_t^{(n)}) \Big\} \Bigg] \Bigg\|_F^2 \notag \\
    &=
    \cO_P\Big( \frac{p_m p_n}{T} \Big) ,
    \label{eqn: cI1_var}
\end{align}
where the last line used Assumptions~\ref{ass: loadings}(a) and \ref{ass: dependence_moment}(b).

Combining \eqref{eqn: decomp_Omega}, \eqref{eqn: cI2_rate}, \eqref{eqn: cI3_to_cI9_rate}, and \eqref{eqn: cI1_var}, we conclude that
\begin{align*}
    &\;\quad
    \sum_{\bi \in [p_{n,1}] \times \cdots \times[p_{n,K_n}]} \Big\| \wh\bOmega_{k,n,\bi}^{(m)} - \bOmega_{k,n,\bi}^{(m)} \Big\|_F^2 \\
    &\leq
    \sum_{\bi \in [p_{n,1}] \times \cdots \times[p_{n,K_n}]} \|\cI_1 - \E(\cI_1)\|_F^2 + \sum_{j=2}^9 \sum_{\bi \in [p_{n,1}] \times \cdots \times[p_{n,K_n}]} \|\cI_j\|_F^2 = \cO_P\Big( \frac{p_m p_n}{T} \Big) ,
\end{align*}
as desired. This completes the proof of the lemma.
\end{proof}

\begin{lemma}\label{lemma: rate_Sigma}
Let Assumptions~\ref{ass: identification}, \ref{ass: loadings}, \ref{ass: noise}, and \ref{ass: dependence_moment} hold. For any $m\in[M]$, $k\in[K_m]$, $n\neq m$, we have
\begin{align*}
    \Big\| \wh\bSigma_{m,k} -\bSigma_{m,k} \Big\|_F
    = \cO_P\Bigg\{ \frac{p_m}{\sqrt{T}} \Bigg( \sum_{\substack{n=1\\ n\neq m}}^M p_n \Bigg) \Bigg\} .
\end{align*}
\end{lemma}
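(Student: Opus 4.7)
The plan is to reduce everything to Lemma~\ref{lemma: rate_Omega} by means of the algebraic identity
\[
\wh\bOmega_{k,n,\bi}^{(m)} \wh\bOmega_{k,n,\bi}^{(m)\trans} - \bOmega_{k,n,\bi}^{(m)} \bOmega_{k,n,\bi}^{(m)\trans}
= \bOmega_{k,n,\bi}^{(m)} \bDelta_{k,n,\bi}^{(m)\trans} + \bDelta_{k,n,\bi}^{(m)} \bOmega_{k,n,\bi}^{(m)\trans} + \bDelta_{k,n,\bi}^{(m)} \bDelta_{k,n,\bi}^{(m)\trans} ,
\]
where $\bDelta_{k,n,\bi}^{(m)} := \wh\bOmega_{k,n,\bi}^{(m)} - \bOmega_{k,n,\bi}^{(m)}$. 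Summing this identity over $n\neq m$ and $\bi$, applying the triangle inequality and submultiplicativity, we obtain
\[
\bigl\|\wh\bSigma_{m,k} - \bSigma_{m,k}\bigr\|_F \;\le\; \sum_{n\neq m}\sum_{\bi} \Bigl\{2\,\|\bOmega_{k,n,\bi}^{(m)}\|_F \,\|\bDelta_{k,n,\bi}^{(m)}\|_F + \|\bDelta_{k,n,\bi}^{(m)}\|_F^2\Bigr\}.
\]
For the cross term I would apply Cauchy--Schwarz with respect to the sum over $\bi$, which yields
\[
\sum_{\bi}\|\bOmega_{k,n,\bi}^{(m)}\|_F\,\|\bDelta_{k,n,\bi}^{(m)}\|_F \;\le\; \Bigl(\sum_{\bi}\|\bOmega_{k,n,\bi}^{(m)}\|_F^2\Bigr)^{1/2}\Bigl(\sum_{\bi}\|\bDelta_{k,n,\bi}^{(m)}\|_F^2\Bigr)^{1/2} .
\]
The second factor is $\cO_P(p_m p_n/T)^{1/2}$ by Lemma~\ref{lemma: rate_Omega}, and likewise the pure quadratic term $\sum_{\bi}\|\bDelta_{k,n,\bi}^{(m)}\|_F^2 = \cO_P(p_m p_n/T)$, so everything reduces to bounding the deterministic quantity $\sum_{\bi}\|\bOmega_{k,n,\bi}^{(m)}\|_F^2$.

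For the latter, recall from the decomposition in the proof of Lemma~\ref{lemma: rate_Omega} that only $\cI_1$ has nonzero mean, hence
\[
\bOmega_{k,n,\bi}^{(m)} = \bA_{m,k}\, \bM_{n,\bi}\, \bA_{m,\text{-}k}^\trans, \qquad \bM_{n,\bi} := \E\Bigl\{\tfrac{1}{T}\sum_{t=1}^T \mat_k(\cG_t^{(m)})\bigl(\cG_t^{(n)} \times_{h=1}^{K_n}\bA_{n,h}\bigr)_{\bi}\Bigr\}.
\]
Using Assumption~\ref{ass: loadings}(a)--(b) to bound the spectral norms $\|\bA_{m,k}\|^2 \lesssim p_{m,k}$ and $\|\bA_{m,\text{-}k}\|^2 \lesssim p_{m,\text{-}k}$, and exploiting that each $\bM_{n,\bi}$ has bounded Frobenius norm (its entries are finite linear combinations of second moments of $\cG_t^{(m)}$ and $\cG_t^{(n)}$, which are bounded by Assumption~\ref{ass: core_factor}(a) with coefficients controlled by Assumption~\ref{ass: loadings}(a)), I would argue $\sum_{\bi}\|\bM_{n,\bi}\|_F^2 = \cO(p_n)$ by a simple element-wise count, giving $\sum_{\bi}\|\bOmega_{k,n,\bi}^{(m)}\|_F^2 = \cO(p_m p_n)$.

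Combining, the cross term contributes $\cO_P(p_m p_n/\sqrt{T})$ while the quadratic term contributes $\cO_P(p_m p_n/T)$, which is of smaller order. Summing over $n\neq m$ yields the claimed rate $\cO_P\bigl\{p_m T^{-1/2} \sum_{n\neq m} p_n\bigr\}$. The main obstacle I anticipate is just being careful with the Frobenius--spectral norm inequalities when bounding $\sum_{\bi}\|\bOmega_{k,n,\bi}^{(m)}\|_F^2$; the rest is bookkeeping that parallels the proof of Lemma~\ref{lemma: rate_Omega}.
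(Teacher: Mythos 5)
Your proposal is correct and follows essentially the same route as the paper's proof: the identical decomposition of $\wh\bOmega\wh\bOmega^\trans-\bOmega\bOmega^\trans$ into a quadratic term plus cross terms, the triangle inequality and Cauchy--Schwarz over the sum in $\bi$, Lemma~\ref{lemma: rate_Omega} for the $\cO_P(p_mp_n/T)$ factors, and the bound $\sum_{\bi}\|\bOmega_{k,n,\bi}^{(m)}\|_F^2=\cO(p_mp_n)$ via the loading-matrix norms and bounded factor moments. The only (shared, harmless) wrinkle is that bounding $\sum_{\bi}\|\bOmega_{k,n,\bi}^{(m)}\|_F^2$ invokes moment conditions on $\cG_t$ from Assumption~\ref{ass: core_factor}, which is not among the assumptions listed in the lemma statement --- the paper's own proof does the same.
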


\begin{proof}[Proof of Lemma~\ref{lemma: rate_Sigma}]
From \eqref{eqn: def_hat_Sigma_mk}, using the triangle inequality and Cauchy--Schwarz inequality,
\begin{align}
    &\;\quad
    \Big\| \wh\bSigma_{m,k} -\bSigma_{m,k} \Big\|_F = \Bigg\| \sum_{\substack{n=1\\ n\neq m}}^M \sum_{\bi \in [p_{n,1}] \times \cdots \times[p_{n,K_n}]} \Big( \wh\bOmega_{k,n,\bi}^{(m)} \wh\bOmega_{k,n,\bi}^{(m) \trans} - \bOmega_{k,n,\bi}^{(m)} \bOmega_{k,n,\bi}^{(m) \trans} \Big) \Bigg\|_F \notag \\
    &=
    \Bigg\| \sum_{\substack{n=1\\ n\neq m}}^M \sum_{\bi \in [p_{n,1}] \times \cdots \times[p_{n,K_n}]} \Bigg\{ \Big( \wh\bOmega_{k,n,\bi}^{(m)} - \bOmega_{k,n,\bi}^{(m)} \Big) \wh\bOmega_{k,n,\bi}^{(m) \trans} + \bOmega_{k,n,\bi}^{(m)} \Big( \wh\bOmega_{k,n,\bi}^{(m)} - \bOmega_{k,n,\bi}^{(m)} \Big)^\trans \Bigg\} \Bigg\|_F \notag \\
    &\leq
    \Bigg\| \sum_{\substack{n=1\\ n\neq m}}^M \sum_{\bi \in [p_{n,1}] \times \cdots \times[p_{n,K_n}]} \Big( \wh\bOmega_{k,n,\bi}^{(m)} - \bOmega_{k,n,\bi}^{(m)} \Big) \Big( \wh\bOmega_{k,n,\bi}^{(m)} - \bOmega_{k,n,\bi}^{(m)} \Big)^\trans \Bigg\|_F \notag \\
    &\;\quad
    + 2\Bigg\| \sum_{\substack{n=1\\ n\neq m}}^M \sum_{\bi \in [p_{n,1}] \times \cdots \times[p_{n,K_n}]} \bOmega_{k,n,\bi}^{(m)} \Big( \wh\bOmega_{k,n,\bi}^{(m)} - \bOmega_{k,n,\bi}^{(m)} \Big)^\trans \Bigg\|_F \notag \\
    &\lesssim
    \sum_{\substack{n=1\\ n\neq m}}^M \sum_{\bi \in \substack{[p_{n,1}] \times \cdots\\ \times[p_{n,K_n}] }} \Big\| \wh\bOmega_{k,n,\bi}^{(m)} - \bOmega_{k,n,\bi}^{(m)} \Big\|_F^2 \notag \\
    &\;\quad
    + \sum_{\substack{n=1\\ n\neq m}}^M \Bigg( \sum_{\bi \in \substack{[p_{n,1}] \times \cdots\\ \times[p_{n,K_n}] }} \Big\| \wh\bOmega_{k,n,\bi}^{(m)} - \bOmega_{k,n,\bi}^{(m)} \Big\|_F^2 \Bigg)^{1/2} \Bigg( \sum_{\bi \in \substack{[p_{n,1}] \times \cdots\\ \times[p_{n,K_n}] }} \Big\| \bOmega_{k,n,\bi}^{(m)} \Big\|_F^2 \Bigg)^{1/2} .
    \label{eqn: whSigma_minus_Sigma_rate}
\end{align}

By \Cref{ass: loadings}(b), we have
\begin{align}
    &\;\quad
    \sum_{\bi \in \substack{[p_{n,1}] \times \cdots\\ \times[p_{n,K_n}] }} \big\| \bOmega_{k,n,\bi}^{(m)} \big\|_F^2 \notag \\
    &=
    \sum_{\bi \in \substack{[p_{n,1}] \times \cdots\\ \times[p_{n,K_n}] }} \Big\| \bA_{m,k} \E\Big\{ \mat_k(\cG_t^{(m)}) \Big( \cG_t^{(n)} \times_{h=1}^{K_n} \bA_{n,h} \Big)_{\bi} \Big\} \bA_{m,\text{-}k}^\trans \Big\|_F^2 \notag \\
    &\leq
    \|\bA_{m,k}\|_F^2 \cdot \|\bA_{m,\text{-}k}\|_F^2 \cdot \Big\| \E\Big\{ \vec(\cG_t^{(m)}) \vec(\cG_t^{(n)})^\trans (\otimes_{h=1}^{K_n} \bA_{n,h})^\trans \Big\} \Big\|_F^2 \notag \\
    &\lesssim
    p_m \cdot \Big\| \E\Big\{ \vec(\cG_t^{(m)}) \otimes \vec(\cG_t^{(n)}) \Big\} \Big\|_F^2 \cdot \Big\| \otimes_{h=1}^{K_n} \bA_{n,h} \Big\|_F^2
    = \cO(p_m p_n) ,
    \label{eqn: sum_i_Omega_rate}
\end{align}
where the last equality used \Cref{ass: core_factor}(c). Combining \eqref{eqn: whSigma_minus_Sigma_rate}, \eqref{eqn: sum_i_Omega_rate}, and \Cref{lemma: rate_Omega}, we concludes the proof of this lemma.
\end{proof}

\begin{lemma}\label{lemma: eigenvalue_Sigma}
Let Assumptions~\ref{ass: identification}, \ref{ass: loadings}, \ref{ass: core_factor}, \ref{ass: noise}, and \ref{ass: dependence_moment} hold. For any $m\in[M]$, $k\in[K_m]$,
\begin{align*}
    \lambda_{r_1}\big( \bSigma_{m,k} \big) \asymp \dots \asymp \lambda_{r_{m,k}}\big( \bSigma_{m,k} \big) \asymp p_m \Bigg( \sum_{\substack{n=1\\ n\neq m}}^M p_n \Bigg) .
\end{align*}
\end{lemma}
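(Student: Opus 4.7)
The plan is to reduce the analysis of $\bSigma_{m,k}$ to a smaller PSD core matrix by writing $\bSigma_{m,k} = \bA_{m,k} \bQ \bA_{m,k}^\trans$ for an $r_{m,k} \times r_{m,k}$ matrix $\bQ$, bound the eigenvalues of $\bQ$ using \Cref{ass: core_factor}(c), and then transfer those bounds to $\bSigma_{m,k}$ via the near-orthogonality in \Cref{ass: loadings}(b).

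To set up the reduction, I would first write $\bOmega_{k,n,\bi}^{(m)} = \bA_{m,k} \bM_{n,\bi} \bA_{m,\text{-}k}^\trans$ with $\bM_{n,\bi} := \E\{ \mat_k(\cG_t^{(m)}) \cdot (\cG_t^{(n)} \times_{h=1}^{K_n} \bA_{n,h})_{\bi}\}$ (treating the time-averaged expectation as the stationary one for notational simplicity). This yields $\bSigma_{m,k} = \bA_{m,k} \bQ \bA_{m,k}^\trans$ with $\bQ := \sum_{n \neq m} \sum_{\bi} \bM_{n,\bi} \bW_{m,k} \bM_{n,\bi}^\trans$ and $\bW_{m,k} := \bA_{m,\text{-}k}^\trans \bA_{m,\text{-}k}$. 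The algebraic heart of the argument is the identity $\sum_{\bi} \bM_{n,\bi} \bM_{n,\bi}^\trans = \mathbf{N}_n \mathbf{N}_n^\trans$, where $\mathbf{N}_n := \E[\mat_k(\cG_t^{(m)}) \otimes \{\vec(\cG_t^{(n)})^\trans (\otimes_{h=1}^{K_n} \bA_{n,h})^\trans\}]$ is exactly the matrix inside the singular value in \Cref{ass: core_factor}(c); this identity follows by an entrywise Kronecker expansion and collecting terms, observing that the column index of $\mathbf{N}_n$ ranges over pairs $(j,\bi)\in[r_{m,\text{-}k}]\times([p_{n,1}]\times\cdots\times[p_{n,K_n}])$.

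With the identity in hand, the rest is a chain of eigenvalue estimates. Applying \Cref{ass: core_factor}(c) to the rescaled loadings $\bA_h := p_{n,h}^{-1/2} \bA_{n,h}$---which have singular values bounded away from zero and infinity by \Cref{ass: loadings}(b)---and peeling off the $p_n^{-1/2}$ scaling gives $\sigma_j(\mathbf{N}_n) \asymp p_n^{1/2}$ for $j \in [r_{m,k}]$. Weyl's inequality for sums of PSD matrices of common size $r_{m,k}$ (using $\lambda_{r_{m,k}}(\sum_n \mathbf{P}_n) \geq \sum_n \lambda_{\min}(\mathbf{P}_n)$ for the lower bound and $\|\sum_n \mathbf{P}_n\| \leq \sum_n \|\mathbf{P}_n\|$ for the upper) then shows that all top $r_{m,k}$ eigenvalues of $\sum_{n \neq m} \mathbf{N}_n \mathbf{N}_n^\trans$ are of order $\sum_{n \neq m} p_n$. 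Applying \Cref{ass: loadings}(b) mode-wise to $\bA_{m,\text{-}k} = \otimes_{j \neq k} \bA_{m,j}$ yields $\bW_{m,k} = p_{m,\text{-}k}(\bI + o(1))$ in operator norm, so the $\bW_{m,k}$-sandwich only inflates eigenvalues by a factor $p_{m,\text{-}k}(1 + o(1))$, giving $\lambda_j(\bQ) \asymp p_{m,\text{-}k} \sum_{n \neq m} p_n$ for $j \in [r_{m,k}]$.

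For the final transfer, the nonzero eigenvalues of $\bA_{m,k} \bQ \bA_{m,k}^\trans$ coincide with those of $\bQ^{1/2} \bA_{m,k}^\trans \bA_{m,k} \bQ^{1/2}$. Writing $\bA_{m,k}^\trans \bA_{m,k} = p_{m,k} \bI + \mathbf{E}$ with $\|\mathbf{E}\| = O(p_{m,k}^{1/2})$ from \Cref{ass: loadings}(b), one more Weyl-type argument (noting that $\|\bQ^{1/2} \mathbf{E} \bQ^{1/2}\| \leq \|\mathbf{E}\| \|\bQ\|$ is of lower order than $p_{m,k} \|\bQ\|$) gives $\lambda_j(\bSigma_{m,k}) = p_{m,k} \lambda_j(\bQ)(1 + o(1)) \asymp p_m \sum_{n \neq m} p_n$ for $j \in [r_{m,k}]$, as claimed. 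The main technical obstacle is the bookkeeping around the Kronecker identity $\sum_{\bi} \bM_{n,\bi} \bM_{n,\bi}^\trans = \mathbf{N}_n \mathbf{N}_n^\trans$, together with the rescaling needed to bring the loadings $\bA_{n,h}$ into the ``bounded eigenvalue'' form required by \Cref{ass: core_factor}(c) before it can be invoked; everything else is a standard application of Weyl's inequality.
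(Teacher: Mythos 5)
Your proposal is correct and follows essentially the same route as the paper's proof: the same reduction of $\sum_{\bi}\bM_{n,\bi}\bM_{n,\bi}^\trans$ to $\mathbf{N}_n\mathbf{N}_n^\trans$ so that Assumption~\ref{ass: core_factor}(c) can be invoked, combined with Weyl's inequality and the near-orthogonality in Assumption~\ref{ass: loadings}(b) to peel off the factors $p_{m,k}$ and $p_{m,\text{-}k}$. The only difference is cosmetic ordering (you sandwich $\bA_{m,\text{-}k}^\trans\bA_{m,\text{-}k}$ inside $\bQ$ and transfer through $\bA_{m,k}$ last, while the paper peels $\bA_{m,k}\bA_{m,k}^\trans$ first via $\lambda_i(\bX\bY)=\lambda_i(\bY\bX)$).
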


\begin{proof}[Proof of Lemma~\ref{lemma: eigenvalue_Sigma}]
Using Assumption~\ref{ass: loadings}(b), we have
\begin{align}
    &\;\quad
    \lambda_{r_{m,k}}\big( \bSigma_{m,k} \big) = \lambda_{r_{m,k}}\Bigg( \sum_{\substack{n=1\\ n\neq m}}^M \sum_{\bi \in [p_{n,1}] \times \cdots \times[p_{n,K_n}]} \bOmega_{k,n,\bi}^{(m)} \bOmega_{k,n,\bi}^{(m) \trans} \Bigg) \notag \\
    &=
    \lambda_{r_{m,k}}\Bigg( \bA_{m,k} \sum_{\substack{n=1\\ n\neq m}}^M \sum_{\bi \in \substack{[p_{n,1}] \times \cdots\\ \times[p_{n,K_n}] }}
    \E\Big\{ \mat_k(\cG_t^{(m)}) \Big( \cG_t^{(n)} \times_{h=1}^{K_n} \bA_{n,h} \Big)_{\bi} \Big\} \bA_{m,\text{-}k}^\trans \bA_{m,\text{-}k} \notag \\
    &\;\quad
    \cdot \E\Big\{ \mat_k(\cG_t^{(m)}) \Big( \cG_t^{(n)} \times_{h=1}^{K_n} \bA_{n,h} \Big)_{\bi} \Big\}^\trans
    \bA_{m,k}^\trans \Bigg) \notag \\
    &\geq
    \lambda_{r_{m,k}}\big( \bA_{m,k} \bA_{m,k}^\trans \big) \cdot \lambda_{r_{m,k}}\Bigg( \sum_{\substack{n=1\\ n\neq m}}^M \sum_{\bi \in \substack{[p_{n,1}] \times \cdots\\ \times[p_{n,K_n}] }} \E\Big\{ \mat_k(\cG_t^{(m)}) \Big( \cG_t^{(n)} \times_{h=1}^{K_n} \bA_{n,h} \Big)_{\bi} \Big\} \notag \\
    &\;\quad
    \cdot \bA_{m,\text{-}k}^\trans \bA_{m,\text{-}k} \E\Big\{ \mat_k(\cG_t^{(m)}) \Big( \cG_t^{(n)} \times_{h=1}^{K_n} \bA_{n,h} \Big)_{\bi} \Big\}^\trans
    \Bigg) \notag \\
    &\geq
    \Big\{ \lambda_{r_{m,k}}\big( p_{m,k} \bI \big) - \lambda_1\big( p_{m,k} \bI - \bA_{m,k} \bA_{m,k}^\trans \big) \Big\} \notag \\
    &\;\quad
    \cdot \Bigg\{ \lambda_{r_{m,k}}\Bigg( \sum_{\substack{n=1\\ n\neq m}}^M \sum_{\bi \in \substack{[p_{n,1}] \times \cdots\\ \times[p_{n,K_n}] }} \E\Big\{ \mat_k(\cG_t^{(m)}) \Big( \cG_t^{(n)} \times_{h=1}^{K_n} \bA_{n,h} \Big)_{\bi} \Big\} \notag \\
    &\;\quad
    \cdot p_{m,\text{-}k} \bI \cdot \E\Big\{ \mat_k(\cG_t^{(m)}) \Big( \cG_t^{(n)} \times_{h=1}^{K_n} \bA_{n,h} \Big)_{\bi} \Big\}^\trans \Bigg) \notag \\
    &\;\quad
    - \lambda_{1}\Bigg( \sum_{\substack{n=1\\ n\neq m}}^M \sum_{\bi \in \substack{[p_{n,1}] \times \cdots\\ \times[p_{n,K_n}] }} \E\Big\{ \mat_k(\cG_t^{(m)}) \Big( \cG_t^{(n)} \times_{h=1}^{K_n} \bA_{n,h} \Big)_{\bi} \Big\} \notag \\
    &\;\quad
    \cdot \big( p_{m,\text{-}k} \bI - \bA_{m,\text{-}k}^\trans \bA_{m,\text{-}k} \big) \cdot \E\Big\{ \mat_k(\cG_t^{(m)}) \Big( \cG_t^{(n)} \times_{h=1}^{K_n} \bA_{n,h} \Big)_{\bi} \Big\}^\trans \Bigg) \Bigg\} \notag \\
    &\asymp
    p_{m,k} \cdot p_{m,\text{-}k} \cdot \lambda_{r_{m,k}}\Bigg( \sum_{\substack{n=1\\ n\neq m}}^M \sum_{\bi \in \substack{[p_{n,1}] \times \cdots\\ \times[p_{n,K_n}] }} \E\Big\{ \mat_k(\cG_t^{(m)}) \Big( \cG_t^{(n)} \times_{h=1}^{K_n} \bA_{n,h} \Big)_{\bi} \Big\} \notag \\
    &\;\quad
    \cdot \E\Big\{ \mat_k(\cG_t^{(m)}) \Big( \cG_t^{(n)} \times_{h=1}^{K_n} \bA_{n,h} \Big)_{\bi} \Big\}^\trans \Bigg) \notag \\
    &\geq
    p_{m,k} \cdot p_{m,\text{-}k} \cdot \sum_{\substack{n=1\\ n\neq m}}^M \lambda_{r_{m,k}}\Bigg( \sum_{\bi \in \substack{[p_{n,1}] \times \cdots\\ \times[p_{n,K_n}] }} \E\Big\{ \mat_k(\cG_t^{(m)}) \Big( \cG_t^{(n)} \times_{h=1}^{K_n} \bA_{n,h} \Big)_{\bi} \Big\} \notag \\
    &\;\quad
    \cdot \E\Big\{ \mat_k(\cG_t^{(m)}) \Big( \cG_t^{(n)} \times_{h=1}^{K_n} \bA_{n,h} \Big)_{\bi} \Big\}^\trans \Bigg) . \label{eqn: eigenvalue_Sigma_step}
\end{align}
where the first and second inequalities used Weyl's inequality and the fact that for any compatible matrices $\bX,\bY$ with $\bX\bY$ being positive semi-definite ($\bSigma_{m,k}$ is $\bX\bY$ here), then $\lambda_i(\bX\bY)=\lambda_i(\bY\bX)$; and the asymptotic equality (in probability) used \Cref{ass: loadings}(b) and \Cref{lemma: inequality_sandwich}(a); and the last inequality used Weyl's inequality again.

Next, consider the following for any $n\neq m$.
First, for any $\bi\in [p_{n,1}] \times \cdots \times[p_{n,K_n}]$, there exists some $i\in[p_n]$ such that we can write
\begin{align*}
    &\;\quad
    \E\Big\{ \mat_k(\cG_t^{(m)}) \Big( \cG_t^{(n)} \times_{h=1}^{K_n} \bA_{n,h} \Big)_{\bi} \Big\} \\
    &=
    \E\Big[ \mat_k(\cG_t^{(m)}) \otimes \Big\{ \be_{p_n,i}^\trans \big( \otimes_{h=1}^{K_n} \bA_{n,h} \big) \vec(\cG_t^{(n)}) \Big\} \Big] \\
    &=
    \E\Big[ \mat_k(\cG_t^{(m)}) \otimes \Big\{ \vec(\cG_t^{(n)})^\trans \big( \otimes_{h=1}^{K_n} \bA_{n,h} \big)^\trans \be_{p_n,i} \Big\} \Big] ,
\end{align*}
where $\be_{p_n,i} \in\R^{p_n}$ contains $1$ in the $i$th entry and zero otherwise. Then
\begin{align}
    &\;\quad
    \lambda_{r_{m,k}}\Bigg( \sum_{\bi \in \substack{[p_{n,1}] \times \cdots\\ \times[p_{n,K_n}] }} \E\Big\{ \mat_k(\cG_t^{(m)}) \Big( \cG_t^{(n)} \times_{h=1}^{K_n} \bA_{n,h} \Big)_{\bi} \Big\} \notag \\
    &\;\quad
    \cdot \E\Big\{ \mat_k(\cG_t^{(m)}) \Big( \cG_t^{(n)} \times_{h=1}^{K_n} \bA_{n,h} \Big)_{\bi} \Big\}^\trans \Bigg) \notag \\
    &=
    \lambda_{r_{m,k}}\Bigg( \sum_{i \in [p_n]} \E\Big[ \mat_k(\cG_t^{(m)}) \otimes \Big\{ \vec(\cG_t^{(n)})^\trans \big( \otimes_{h=1}^{K_n} \bA_{n,h} \big)^\trans \be_{p_n,i} \Big\} \Big] \notag \\
    &\;\quad
    \cdot \E\Big[ \mat_k(\cG_t^{(m)}) \otimes \Big\{ \vec(\cG_t^{(n)})^\trans \big( \otimes_{h=1}^{K_n} \bA_{n,h} \big)^\trans \be_{p_n,i} \Big\} \Big]^\trans \Bigg) \notag \\
    &=
    \lambda_{r_{m,k}}\Bigg( \E\Big[ \mat_k(\cG_t^{(m)}) \otimes \Big\{ \vec(\cG_t^{(n)})^\trans \big( \otimes_{h=1}^{K_n} \bA_{n,h} \big)^\trans \Big\} \Big] \notag \\
    &\;\quad
    \cdot \E\Big[ \mat_k(\cG_t^{(m)})^\trans \otimes \Big\{ \big( \otimes_{h=1}^{K_n} \bA_{n,h} \big) \vec(\cG_t^{(n)}) \Big\} \Big] \Bigg) 
    \asymp
    p_n \cdot c ,
    \label{eqn: eigenvalue_Sigma_step_2}
\end{align}
where $c$ is some positive constant and the last line used Assumptions~\ref{ass: loadings}(b) and \ref{ass: core_factor}(c). Plugging the above back in \eqref{eqn: eigenvalue_Sigma_step}, we conclude there exists some constant $c$ such that
\[
\lambda_{r_{m,k}}\big( \bSigma_{m,k} \big) \geq c \cdot p_m \sum_{n\in[M] \setminus\{m\}} p_n .
\]

It remains to show the same rate as an upper bound of $\lambda_{1}\big( \bSigma_{m,k} \big)$, but this is analogous to all the above steps with signs of inequalities reversed. The proof of this lemma is thus complete.
\end{proof}

\begin{lemma}[Useful inequalities to bound sandwiched terms]\label{lemma: inequality_sandwich}
We have the following.
\begin{enumerate}[itemsep=0pt, label = (\alph*), left = 0pt]
    \item For any square matrix $\bM$ and sequence of matrices $\bA_t$ with compatible dimensions for $t\in[T]$, we have
    \[
    \left\| \sum_{t=1}^T \bA_t \bM \bA_t^\trans \right\|_F^2 \leq \|\bM\|^2 \cdot \left\| \sum_{t=1}^T \bA_t \bA_t^\trans \right\|_F^2 .
    \]
    Note that this inequality is tight, which can be seen by taking $\bM$ as the identity matrix.
    \item (Lemma~B.16 in \cite{Barigozzietal2025_robust}.) For any sequence of matrices $\bA_t=[\bA_{t,ij}]$, $\bB_t$ for $t\in[T]$, and any matrix $\bM$ with compatible dimensions, we have
    \[
    \left\| \sum_{t=1}^T \bA_t \bM \bB_t^\trans \right\|_F^2 \leq \|\bM\|_F^2 \cdot \sum_{i,j} \left\| \sum_{t=1}^T \bA_{t,ij} \bB_t \right\|_F^2 .
    \]
\end{enumerate}
\end{lemma}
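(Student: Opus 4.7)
The plan is as follows. Part~(b) is cited directly from Lemma~B.16 in \cite{Barigozzietal2025_robust}, so only part~(a) requires a proof here. For part~(a), my strategy is to rewrite both sides via the horizontal stacking $\bA := [\bA_1, \dots, \bA_T]$, which gives $\sum_t \bA_t \bA_t^\trans = \bA\bA^\trans$ and $\sum_t \bA_t \bM \bA_t^\trans = \bA (\bI_T \otimes \bM) \bA^\trans$. I would then take the thin SVD $\bA = \bU \bSigma \bV^\trans$ with singular values $\sigma_1 \geq \sigma_2 \geq \cdots \geq 0$, which reduces both sides to expressions involving only $\bSigma$ and $N := \bV^\trans (\bI_T \otimes \bM) \bV$.

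Concretely, this yields $\|\bA\bA^\trans\|_F^2 = \sum_i \sigma_i^4$ and $\|\bA(\bI_T \otimes \bM)\bA^\trans\|_F^2 = \|\bSigma N \bSigma\|_F^2 = \sum_{i,j} \sigma_i^2 \sigma_j^2 |N_{ij}|^2$, where $\|N\|_{op} \leq \|\bI_T \otimes \bM\|_{op} = \|\bM\|_{op}$ because $\bV$ has orthonormal columns. Thus it suffices to prove the scalar inequality $\sum_{i,j} \sigma_i^2 \sigma_j^2 |N_{ij}|^2 \leq \|N\|_{op}^2 \sum_i \sigma_i^4$.

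To close the proof, I would apply the AM-GM inequality $\sigma_i^2 \sigma_j^2 \leq (\sigma_i^4 + \sigma_j^4)/2$ to split
\begin{equation*}
\sum_{i,j} \sigma_i^2 \sigma_j^2 |N_{ij}|^2 \leq \tfrac{1}{2} \sum_i \sigma_i^4 \sum_j |N_{ij}|^2 + \tfrac{1}{2} \sum_j \sigma_j^4 \sum_i |N_{ij}|^2,
\end{equation*}
and then bound each inner sum by $\sum_j |N_{ij}|^2 = \|N^\trans \be_i\|^2 \leq \|N\|_{op}^2$ (and similarly for the column sum over $i$). The main obstacle, in my view, is avoiding the weaker bound produced by naive submultiplicativity such as $\|\bA(\bI_T \otimes \bM)\bA^\trans\|_F \leq \|\bA\|_{op} \|\bM\|_{op} \|\bA\|_F$, which yields $\sigma_1^2 \sum_i \sigma_i^2 \cdot \|\bM\|_{op}^2$ on the right-hand side---strictly larger than the desired $\sum_i \sigma_i^4 \cdot \|\bM\|_{op}^2$ unless all non-zero singular values coincide. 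The AM-GM splitting above (equivalently, Cauchy--Schwarz applied to $(\sigma_i^2 N_{ij})(\sigma_j^2 N_{ij})$) is precisely what extracts the tight $\sum_i \sigma_i^4$ factor so that equality is attained at $\bM = \bI$.
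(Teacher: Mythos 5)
Your proof of part~(a) is correct, but it takes a genuinely different route from the paper. The paper expands $\bigl\| \sum_t \bA_t \bM \bA_t^\trans \bigr\|_F^2$ as a double sum $\sum_{t,s} \tr(\bA_{st}\bM\bA_{st}^\trans\bM^\trans)$ with $\bA_{st}=\bA_s^\trans\bA_t$, reads each term as a Frobenius inner product $\langle \bA_{st}\bM,\, \bM\bA_{st}\rangle_F$, and applies Cauchy--Schwarz termwise to get $\|\bM\|^2\|\bA_{st}\|_F^2$, after which re-summing reassembles $\|\sum_t\bA_t\bA_t^\trans\|_F^2$. You instead stack $\bA=[\bA_1,\dots,\bA_T]$, identify $\sum_t\bA_t\bM\bA_t^\trans=\bA(\bI_T\otimes\bM)\bA^\trans$, reduce via the thin SVD to the scalar inequality $\sum_{i,j}\sigma_i^2\sigma_j^2|N_{ij}|^2\leq\|N\|_{op}^2\sum_i\sigma_i^4$ with $\|N\|_{op}\leq\|\bM\|_{op}$, and close it by AM--GM plus the row/column-norm bound $\sum_j|N_{ij}|^2\leq\|N\|_{op}^2$; each of these steps is valid (in particular $\|\bU X\bU^\trans\|_F=\|X\|_F$ for $\bU$ with orthonormal columns, and $\|\bV^\trans(\bI_T\otimes\bM)\bV\|_{op}\leq\|\bM\|_{op}$ since $\|\bV\|_{op}=1$). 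The paper's argument is shorter and purely trace-based, which is convenient given that it never needs the SVD elsewhere; your reduction isolates a clean standalone scalar inequality and makes the tightness at $\bM=\bI$ (where $N=\bI$ and both sides equal $\sum_i\sigma_i^4$) structurally transparent, at the cost of more setup. One presentational point: the lemma's $\|\bM\|$ is the spectral norm (as your tightness remark requires), so you should state that identification explicitly rather than silently switching to $\|\bM\|_{op}$.
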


\begin{proof}[Proof of Lemma~\ref{lemma: inequality_sandwich}]
Part~(b) is a direct quotation of Lemma~B.16 in \cite{Barigozzietal2025_robust}, so it suffices to show part~(a). Expanding the squared Frobenius norm, we have
\begin{equation}
\label{eqn: sum_AMA_decomp}
\begin{split}
    \left\| \sum_{t=1}^T \bA_t \bM \bA_t^\trans \right\|_F^2 &= \tr\left\{ \left( \sum_{t=1}^T \bA_t \bM \bA_t^\trans \right) \left( \sum_{s=1}^T \bA_s \bM^\trans \bA_s^\trans \right) \right\} \\
    &=
    \sum_{t=1}^T \sum_{s=1}^T \tr(\bA_t \bM \bA_t^\trans \bA_s \bM^\trans \bA_s^\trans ) \\
    &=
    \sum_{t=1}^T \sum_{s=1}^T \tr(\bA_s^\trans \bA_t \bM \bA_t^\trans \bA_s \bM^\trans )
    =: \sum_{t=1}^T \sum_{s=1}^T \tr(\bA_{st} \bM \bA_{st}^\trans \bM^\trans ) ,
\end{split}
\end{equation}
where the second line used the cyclic property of trace, and the definition $\bA_{st}= \bA_s^\trans \bA_t$. For any $t,s\in[T]$, we may read $\tr(\bA_{st} \bM \bA_{st}^\trans \bM^\trans)$ as a Frobenius inner product and apply the Cauchy--Schwarz inequality such that
\begin{equation}
\label{eqn: trace_AMAM}
\tr(\bA_{st} \bM \bA_{st}^\trans \bM^\trans) = \langle \bA_{st} \bM, \bM \bA_{st} \rangle_F \leq \|\bA_{st} \bM\|_F \cdot \|\bM \bA_{st}\|_F \leq \|\bM\|^2 \|\bA_{st}\|_F^2 ,
\end{equation}
where the last inequality used the fact that for any matrix products $\bA\bB$, we have $\|\bA\bB\|_F=\sqrt{\sum_{i=1}^n \|\bA_{i\cdot}^\top \bB\|^2} \leq \sqrt{\sum_{i=1}^n \|\bA_{i\cdot}^\top\|^2 \|\bB\|^2} = \|\bA\|_F \cdot \|\bB\|$.

Combining \eqref{eqn: sum_AMA_decomp} and \eqref{eqn: trace_AMAM}, we have
\begin{align*}
    &\quad
    \left\| \sum_{t=1}^T \bA_t \bM \bA_t^\trans \right\|_F^2 = \sum_{t=1}^T \sum_{s=1}^T \tr(\bA_{st} \bM \bA_{st}^\trans \bM^\trans) \\
    &\leq
    \|\bM\|^2 \cdot \sum_{t=1}^T \sum_{s=1}^T \|\bA_{st}\|_F^2
    = \|\bM\|^2 \sum_{t=1}^T \sum_{s=1}^T \tr(\bA_s^\trans \bA_t \bA_t^\trans \bA_s) \\
    &=
    \|\bM\|^2 \cdot \tr\left\{ \left( \sum_{t=1}^T \bA_t \bA_t^\trans \right) \left( \sum_{s=1}^T \bA_s \bA_s^\trans \right) \right\}
    = \|\bM\|^2 \left\| \sum_{t=1}^T \bA_t \bA_t^\trans \right\|_F^2 ,
\end{align*}
as desired. This completes the proof of this lemma.
\end{proof}

\begin{lemma}\label{lemma: A_perp_rate}
Let all assumptions in \Cref{thm: global_loading} hold. For any $m\in[M]$, $k\in[K_m]$, denote the orthogonal complement of $\bA_{m,k}$ as $\bA_{m,k,\perp}$ which is defined as an orthogonal basis spanning the complement of $\textnormal{col}(\bA_{m,k})$. Then
\begin{align*}
    \Big\| \wh{\bA}_{m,k,\perp} -\bA_{m,k,\perp} \Big\|_F = \cO_P\Big( \frac{1}{\sqrt{T}} \Big) &, \\
    \Big\| \wh{\bA}_{m,k,\perp} \wh{\bA}_{m,k,\perp}^\trans - \bA_{m,k,\perp} \bA_{m,k,\perp}^\trans \Big\|_F = \cO_P\Big( \frac{1}{\sqrt{T}} \Big) &.
\end{align*}
\end{lemma}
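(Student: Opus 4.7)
The plan is to obtain the projection claim via a Davis--Kahan argument built from Lemmas~\ref{lemma: rate_Sigma} and \ref{lemma: eigenvalue_Sigma}, and then to transfer it to the basis claim through an orthogonal Procrustes alignment that exploits the non-uniqueness of $\bA_{m,k,\perp}$.

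First I identify the spectral projectors. From the definition of $\bSigma_{m,k}$ in \Cref{subsec: est_loading}, its column space lies inside $\col(\bA_{m,k})$, and combined with \Cref{lemma: eigenvalue_Sigma} this forces $\bSigma_{m,k}$ to have rank exactly $r_{m,k}$ with top-$r_{m,k}$ eigenspace equal to $\col(\bA_{m,k})$. Hence the spectral projector onto this eigenspace is $\bP := \bA_{m,k}(\bA_{m,k}^\trans \bA_{m,k})^{-1}\bA_{m,k}^\trans$, so $\bA_{m,k,\perp}\bA_{m,k,\perp}^\trans = \bI - \bP$. Since $\wh\bA_{m,k}^\trans \wh\bA_{m,k} = p_{m,k}\bI_{r_{m,k}}$ by construction, the corresponding sample projector is $\wh\bP := p_{m,k}^{-1}\wh\bA_{m,k}\wh\bA_{m,k}^\trans$, and $\wh\bA_{m,k,\perp}\wh\bA_{m,k,\perp}^\trans = \bI - \wh\bP$. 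The second displayed claim therefore reduces to bounding $\|\wh\bP - \bP\|_F$. Applying the Frobenius-norm Davis--Kahan $\sin\Theta$ theorem with eigengap $\lambda_{r_{m,k}}(\bSigma_{m,k}) - 0 \asymp p_m\sum_{n\neq m} p_n$ (by \Cref{lemma: eigenvalue_Sigma}) and perturbation controlled by \Cref{lemma: rate_Sigma} gives
\[
\|\wh\bP - \bP\|_F \;\lesssim\; \frac{\|\wh\bSigma_{m,k} - \bSigma_{m,k}\|_F}{\lambda_{r_{m,k}}(\bSigma_{m,k})} \;=\; \cO_P(1/\sqrt{T}),
\]
where the dimension factor $p_m\sum_{n\neq m}p_n$ in numerator and denominator cancels cleanly.

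For the first displayed claim I exploit the freedom in the choice of complement basis: fix any reference orthonormal basis $\bA_{m,k,\perp}^{0}$ of $\col(\bA_{m,k})^\perp$ and define $\bA_{m,k,\perp} := \bA_{m,k,\perp}^{0}\bR^\ast$, where $\bR^\ast$ solves the orthogonal Procrustes problem $\min_{\bR}\|\wh\bA_{m,k,\perp} - \bA_{m,k,\perp}^{0}\bR\|_F$ over orthogonal $\bR$. Letting $\theta_1,\dots,\theta_{p_{m,k}-r_{m,k}} \in [0,\pi/2]$ denote the principal angles between the two complement subspaces, one has the standard identities $\|\wh\bA_{m,k,\perp}\wh\bA_{m,k,\perp}^\trans - \bA_{m,k,\perp}\bA_{m,k,\perp}^\trans\|_F^2 = 2\sum_i \sin^2\theta_i$ and $\min_\bR\|\wh\bA_{m,k,\perp} - \bA_{m,k,\perp}^{0}\bR\|_F^2 = 4\sum_i \sin^2(\theta_i/2)$. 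The elementary inequality $\sin^2(\theta/2)\leq \sin^2\theta/2$ on $[0,\pi/2]$ then yields
\[
\|\wh\bA_{m,k,\perp} - \bA_{m,k,\perp}\|_F \;\leq\; \|\wh\bA_{m,k,\perp}\wh\bA_{m,k,\perp}^\trans - \bA_{m,k,\perp}\bA_{m,k,\perp}^\trans\|_F \;=\; \cO_P(1/\sqrt{T}) .
\]

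The main technical point is recognising that the relevant eigengap for Davis--Kahan is simply $\lambda_{r_{m,k}}(\bSigma_{m,k})$ (since $\bSigma_{m,k}$ is rank-deficient, so $\lambda_{r_{m,k}+1}(\bSigma_{m,k})=0$), and it is precisely the sharp matching of this gap (via \Cref{lemma: eigenvalue_Sigma}) to the perturbation rate (via \Cref{lemma: rate_Sigma}) that delivers the dimension-free $1/\sqrt{T}$ rate; the Procrustes transfer is essentially bookkeeping once the projection bound is in hand.
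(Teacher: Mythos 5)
Your proposal is correct, but it reaches the two bounds by a different route and in the opposite order from the paper. The paper first proves the \emph{basis} bound: it uses the exact resolution of identity $p_{m,k}\wh{\bA}_{m,k,\perp}\wh{\bA}_{m,k,\perp}^\trans + \wh{\bA}_{m,k}\wh{\bA}_{m,k}^\trans = p_{m,k}\bI$ (and its population analogue) to write $\wh{\bA}_{m,k,\perp}\wh{\bA}_{m,k,\perp}^\trans - \bA_{m,k,\perp}\bA_{m,k,\perp}^\trans = p_{m,k}^{-1}(\bA_{m,k}\bA_{m,k}^\trans - \wh{\bA}_{m,k}\wh{\bA}_{m,k}^\trans)$, bounds the right-hand side by $\cO_P(1/\sqrt{T})$ directly from the already-established Theorem~\ref{thm: global_loading}, and then applies the eigenvector perturbation lemma of Lam et al.~(2011) with unit eigengap $\lambda_{p_{m,k}-r_{m,k}}(\bA_{m,k,\perp}\bA_{m,k,\perp}^\trans)=1$ to obtain $\|\wh{\bA}_{m,k,\perp}-\bA_{m,k,\perp}\|_F = \cO_P(1/\sqrt{T})$; the projector bound is then read off as an immediate consequence. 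You instead return to the raw ingredients (Lemmas~\ref{lemma: rate_Sigma} and \ref{lemma: eigenvalue_Sigma}), run Davis--Kahan on $(\bSigma_{m,k},\wh\bSigma_{m,k})$ to get the \emph{projector} bound first, and then transfer to the basis bound via principal angles and a Procrustes alignment. Both arguments are sound and yield the same rate; your identities $\|\wh\bP-\bP\|_F^2 = 2\sum_i\sin^2\theta_i$, $\min_\bR\|\wh\bU-\bU\bR\|_F^2=4\sum_i\sin^2(\theta_i/2)$ and the inequality $\sin^2(\theta/2)\le\sin^2(\theta)/2$ are all correct. The paper's route is shorter because it recycles Theorem~\ref{thm: global_loading} rather than re-running the covariance perturbation analysis, while yours has two modest advantages: it makes the rotational indeterminacy of $\bA_{m,k,\perp}$ explicit (the paper leaves the alignment implicit in the cited lemma), and by working with the exact projector $\bA_{m,k}(\bA_{m,k}^\trans\bA_{m,k})^{-1}\bA_{m,k}^\trans$ it sidesteps the $\cO(p_{m,k}^{-1/2})$ slack in the identity $p_{m,k}\bA_{m,k,\perp}\bA_{m,k,\perp}^\trans+\bA_{m,k}\bA_{m,k}^\trans=p_{m,k}\bI$ that Assumption~\ref{ass: loadings}(b) only guarantees approximately.
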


\begin{proof}[Proof of Lemma~\ref{lemma: A_perp_rate}]
By the definition of $\wh{\bA}_{m,k,\perp}$ and $\wh{\bA}_{m,k}$, we have
\begin{align*}
    &\;\quad
    p_{m,k} \bA_{m,k,\perp} \bA_{m,k,\perp}^\trans + \bA_{m,k} \bA_{m,k}^\trans = \Big( \sqrt{p_{m,k}} \bA_{m,k,\perp}, \bA_{m,k} \Big) \Big( \sqrt{p_{m,k}} \bA_{m,k,\perp}, \bA_{m,k} \Big)^\trans \\
    &=
    p_{m,k} \cdot \bI = \Big( \sqrt{p_{m,k}} \wh{\bA}_{m,k,\perp}, \wh{\bA}_{m,k} \Big) \Big( \sqrt{p_{m,k}} \wh{\bA}_{m,k,\perp}, \wh{\bA}_{m,k} \Big)^\trans \\
    &=
    p_{m,k} \wh{\bA}_{m,k,\perp} \wh{\bA}_{m,k,\perp}^\trans + \wh{\bA}_{m,k} \wh{\bA}_{m,k}^\trans ,
\end{align*}
which implies
\begin{equation}
\label{eqn: A_perp_minus_hat_A_perp}
\wh{\bA}_{m,k,\perp} \wh{\bA}_{m,k,\perp}^\trans = \bA_{m,k,\perp} \bA_{m,k,\perp}^\trans + \frac{1}{p_{m,k}} \big( \bA_{m,k} \bA_{m,k}^\trans - \wh{\bA}_{m,k} \wh{\bA}_{m,k}^\trans \big) .
\end{equation}

Hence applying Lemma~3 in \cite{Lametal2011} on \eqref{eqn: A_perp_minus_hat_A_perp}, we have
\begin{align*}
    &\;\quad
    \Big\| \wh{\bA}_{m,k,\perp} -\bA_{m,k,\perp} \Big\|_F \\
    &\leq
    \frac{8}{\lambda_{p_{m,k} -r_{m,k}}\big( \bA_{m,k,\perp} \bA_{m,k,\perp}^\trans \big)} \Big\| \frac{1}{p_{m,k}} \big( \bA_{m,k} \bA_{m,k}^\trans - \wh{\bA}_{m,k} \wh{\bA}_{m,k}^\trans \big) \Big\|_F \\
    &\lesssim
    \frac{1}{p_{m,k}} \Big\| \big( \bA_{m,k} - \wh{\bA}_{m,k} \big) \bA_{m,k}^\trans + \wh{\bA}_{m,k} \big( \bA_{m,k} - \wh{\bA}_{m,k} \big)^\trans \Big\|_F
    = \cO_P\Big( \frac{1}{\sqrt{T}} \Big) ,
\end{align*}
where the last equality used the triangle inequality and \Cref{thm: global_loading}. The second statement in the lemma is straightforward from the above, and this completes the proof of this lemma.
\end{proof}

\begin{lemma}\label{lemma: eigenvalue_Sigma_B}
Let Assumptions~\ref{ass: identification}, \ref{ass: loadings}, \ref{ass: core_factor}, \ref{ass: noise}, \ref{ass: dependence_moment}, and \ref{ass: dependence_thread} hold.
Given any $m\in[M]$, $k\in[K_m]$ with $K_m> 1$, recall $\wh{\bD}_{m,k}$ from the statement of Theorem~\ref{thm: local_loading}.
Then the eigenvalues of $\wh{\bD}_{m,k}$ satisfy $\lambda_i(\wh{\bD}_{m,k}) = \lambda_i(\bSigma_{F,m,k}) + o_P(1)$ for any $i\leq u_{m,k}$.
\end{lemma}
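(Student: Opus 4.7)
The plan is to show that $\wh\bSigma_{B,m,k}$ is, up to a spectral-norm $o_P(1)$ perturbation, equal to a low-rank matrix whose $u_{m,k}$ nonzero eigenvalues converge in probability to those of $\bSigma_{F,m,k}$. Substituting \eqref{eqn: unfold_Xt} into \eqref{eqn: def_hat_Sigma_B_mk} and expanding the sandwich of $(\bA_{m,k}\mat_k(\cG_t^{(m)})\bA_{m,\text{-}k}^\trans + \bB_{m,k}\mat_k(\cF_t^{(m)})\bB_{m,\text{-}k}^\trans + \mat_k(\cE_t^{(m)}))$ against itself produces nine summands. The designated ``leading'' summand is
\[
\bL := \frac{1}{Tp_m}\sum_{t=1}^T \bB_{m,k}\mat_k(\cF_t^{(m)})\bB_{m,\text{-}k}^\trans \wh{\bA}_{m,\text{-}k,\perp}\wh{\bA}_{m,\text{-}k,\perp}^\trans \bB_{m,\text{-}k}\mat_k(\cF_t^{(m)})^\trans \bB_{m,k}^\trans,
\]
and my first task is to show the other eight summands are $o_P(1)$ in spectral (or Frobenius) norm.

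For the six mixed/noise summands, I will bound them using Assumption~\ref{ass: noise} (variance control of noise inner products), Assumption~\ref{ass: dependence_thread}(a)(c) (dependence of $\cF_t^{(m)}$ and $\cG_t^{(m)}$ with $\cE_t^{(m)}$) and Assumption~\ref{ass: loadings}(a), noting that the $1/(Tp_m)$ normalization paired with $\|\wh{\bA}_{m,\text{-}k,\perp}\wh{\bA}_{m,\text{-}k,\perp}^\trans\| = 1$ gives the required decay. For the three summands involving the global component $\bA_{m,k}\mat_k(\cG_t^{(m)})\bA_{m,\text{-}k}^\trans$, the critical ingredient is that $\bA_{m,\text{-}k}^\trans \wh{\bA}_{m,\text{-}k,\perp}$ is nearly zero: combining Lemma~\ref{lemma: A_perp_rate} iteratively over the Kronecker factors $\otimes_{j\ne k}\wh{\bA}_{m,j,\perp}$ yields $\|\bA_{m,\text{-}k}^\trans \wh{\bA}_{m,\text{-}k,\perp}\|_F = \cO_P(p_{m,\text{-}k}^{1/2}/\sqrt{T})$, which after the $1/(Tp_m)$ normalization and use of Assumption~\ref{ass: dependence_thread}(b) is negligible.

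For $\bL$ itself, I will replace $\wh{\bA}_{m,\text{-}k,\perp}\wh{\bA}_{m,\text{-}k,\perp}^\trans$ by $\bA_{m,\text{-}k,\perp}\bA_{m,\text{-}k,\perp}^\trans$ using Lemma~\ref{lemma: A_perp_rate} applied to each Kronecker factor, incurring an $o_P(1)$ error. Then Assumption~\ref{ass: loadings}(b) applied mode-by-mode gives
\[
\frac{1}{p_{m,\text{-}k}}\bB_{m,\text{-}k}^\trans \bA_{m,\text{-}k,\perp}\bA_{m,\text{-}k,\perp}^\trans \bB_{m,\text{-}k} = \bI_{u_{m,\text{-}k}} + o(1),
\]
since $\bA_{m,j}^\trans \bB_{m,j}/p_{m,j} = O(p_{m,j}^{-1/2})$ and $\bB_{m,j}^\trans\bB_{m,j}/p_{m,j} = \bI + O(p_{m,j}^{-1/2})$ for every $j\ne k$. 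Consequently
\[
\bL = \frac{1}{Tp_{m,k}}\sum_{t=1}^T \bB_{m,k}\mat_k(\cF_t^{(m)})\mat_k(\cF_t^{(m)})^\trans \bB_{m,k}^\trans + o_P(1).
\]
The nonzero eigenvalues of the right-hand side coincide with those of $\big(T^{-1}\sum_t \mat_k(\cF_t^{(m)})\mat_k(\cF_t^{(m)})^\trans\big)\big(p_{m,k}^{-1}\bB_{m,k}^\trans \bB_{m,k}\big)$, which by Assumptions~\ref{ass: core_factor}(b) and \ref{ass: loadings}(b) converge in probability to the eigenvalues of $\bSigma_{F,m,k}\cdot\bI_{u_{m,k}}$. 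A Weyl-type argument converts the spectral perturbation bounds on $\wh\bSigma_{B,m,k}$ into the stated eigenvalue approximation.

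The main obstacle will be the global cross-terms: even though $\wh{\bA}_{m,\text{-}k,\perp}$ approximately annihilates $\bA_{m,\text{-}k}$, the global factors have potentially large magnitude and nontrivial correlation with $\cF_t^{(m)}$ and $\cE_t^{(m)}$, so these cross terms must be controlled by combining the $\cO_P(\sqrt{p_{m,\text{-}k}/T})$ projection error with second-moment bounds from Assumption~\ref{ass: dependence_thread}; keeping the accounting of the Kronecker dimensions straight is the delicate part.
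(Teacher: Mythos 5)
Your proposal follows essentially the same route as the paper's proof: the identical nine-term expansion of $\wh\bSigma_{B,m,k}$, isolation of the $\cF$-$\cF$ term as the leading term, reduction of $p_{m,\text{-}k}^{-1}\bB_{m,\text{-}k}^\trans\bA_{m,\text{-}k,\perp}\bA_{m,\text{-}k,\perp}^\trans\bB_{m,\text{-}k}$ to $\bI+o(1)$ via Assumption~\ref{ass: loadings}(b), control of the global cross-terms through the near-annihilation $\bA_{m,\text{-}k}^\trans\wh{\bA}_{m,\text{-}k,\perp}=\cO_P(p_{m,\text{-}k}^{1/2}/\sqrt{T})$ from Lemma~\ref{lemma: A_perp_rate}, and a final Weyl argument. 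The only quibble is a bookkeeping slip (eight, not nine, remainder terms, and five rather than three involve $\cG_t^{(m)}$), which does not affect the validity of the argument.
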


\begin{proof}[Proof of Lemma~\ref{lemma: eigenvalue_Sigma_B}]
By the definition in \eqref{eqn: def_hat_Sigma_B_mk}, we can decompose
\begin{align}
    &\;\quad
    \wh\bSigma_{B,m,k} = \frac{1}{Tp_m} \sum_{t=1}^T \mat_k(\cX_t^{(m)}) \wh{\bA}_{m,\text{-}k,\perp} \wh{\bA}_{m,\text{-}k,\perp}^\trans \mat_k(\cX_t^{(m)})^\trans \notag \\
    &=
    \frac{1}{Tp_m} \sum_{t=1}^T \Big( \bB_{m,k} \mat_k(\cF_t^{(m)}) \bB_{m,\text{-}k}^\trans \wh{\bA}_{m,\text{-}k,\perp} + \mat_k(\cE_t^{(m)}) \wh{\bA}_{m,\text{-}k,\perp} \notag \\
    &\;\quad
    + \bA_{m,k} \mat_k(\cG_t^{(m)}) \bA_{m,\text{-}k}^\trans \wh{\bA}_{m,\text{-}k,\perp} \Big) 
    \cdot \Big( \bB_{m,k} \mat_k(\cF_t^{(m)}) \bB_{m,\text{-}k}^\trans \wh{\bA}_{m,\text{-}k,\perp} \notag \\
    &\;\quad
    + \mat_k(\cE_t^{(m)}) \wh{\bA}_{m,\text{-}k,\perp} + \bA_{m,k} \mat_k(\cG_t^{(m)}) \bA_{m,\text{-}k}^\trans \wh{\bA}_{m,\text{-}k,\perp} \Big)^\trans \notag \\
    &=
    \frac{1}{Tp_m} \sum_{t=1}^T \bB_{m,k} \mat_k(\cF_t^{(m)}) \bB_{m,\text{-}k}^\trans \wh{\bA}_{m,\text{-}k,\perp} \wh{\bA}_{m,\text{-}k,\perp}^\trans \bB_{m,\text{-}k} \mat_k(\cF_t^{(m)})^\trans \bB_{m,k}^\trans \notag \\
    &\;\quad
    + \frac{1}{Tp_m} \sum_{t=1}^T \bB_{m,k} \mat_k(\cF_t^{(m)}) \bB_{m,\text{-}k}^\trans \wh{\bA}_{m,\text{-}k,\perp} \wh{\bA}_{m,\text{-}k,\perp}^\trans \mat_k(\cE_t^{(m)})^\trans \notag \\
    &\;\quad
    + \frac{1}{Tp_m} \sum_{t=1}^T \bB_{m,k} \mat_k(\cF_t^{(m)}) \bB_{m,\text{-}k}^\trans \wh{\bA}_{m,\text{-}k,\perp} \wh{\bA}_{m,\text{-}k,\perp}^\trans \bA_{m,\text{-}k} \mat_k(\cG_t^{(m)})^\trans \bA_{m,k}^\trans \notag \\
    &\;\quad
    + \frac{1}{Tp_m} \sum_{t=1}^T \mat_k(\cE_t^{(m)}) \wh{\bA}_{m,\text{-}k,\perp} \wh{\bA}_{m,\text{-}k,\perp}^\trans \bB_{m,\text{-}k} \mat_k(\cF_t^{(m)})^\trans \bB_{m,k}^\trans \notag \\
    &\;\quad
    + \frac{1}{Tp_m} \sum_{t=1}^T \mat_k(\cE_t^{(m)}) \wh{\bA}_{m,\text{-}k,\perp} \wh{\bA}_{m,\text{-}k,\perp}^\trans \mat_k(\cE_t^{(m)})^\trans \notag \\
    &\;\quad
    + \frac{1}{Tp_m} \sum_{t=1}^T \mat_k(\cE_t^{(m)}) \wh{\bA}_{m,\text{-}k,\perp} \wh{\bA}_{m,\text{-}k,\perp}^\trans \bA_{m,\text{-}k} \mat_k(\cG_t^{(m)})^\trans \bA_{m,k}^\trans \notag \\
    &\;\quad
    + \frac{1}{Tp_m} \sum_{t=1}^T \bA_{m,k} \mat_k(\cG_t^{(m)}) \bA_{m,\text{-}k}^\trans \wh{\bA}_{m,\text{-}k,\perp} \wh{\bA}_{m,\text{-}k,\perp}^\trans \bB_{m,\text{-}k} \mat_k(\cF_t^{(m)})^\trans \bB_{m,k}^\trans \notag \\
    &\;\quad
    + \frac{1}{Tp_m} \sum_{t=1}^T \bA_{m,k} \mat_k(\cG_t^{(m)}) \bA_{m,\text{-}k}^\trans \wh{\bA}_{m,\text{-}k,\perp} \wh{\bA}_{m,\text{-}k,\perp}^\trans \mat_k(\cE_t^{(m)})^\trans \notag \\
    &\;\quad
    + \frac{1}{Tp_m} \sum_{t=1}^T \bA_{m,k} \mat_k(\cG_t^{(m)}) \bA_{m,\text{-}k}^\trans \wh{\bA}_{m,\text{-}k,\perp} \wh{\bA}_{m,\text{-}k,\perp}^\trans \bA_{m,\text{-}k} \mat_k(\cG_t^{(m)})^\trans \bA_{m,k}^\trans \notag \\
    &=:
    \cII_1 + \cII_2 + \cII_3 + \cII_4 + \cII_5 + \cII_6 + \cII_7 + \cII_8 + \cII_9 .
    \label{eqn: decomp_Sigma_B}
\end{align}

Consider $\cII_1$ first. By Lemmas~\ref{lemma: inequality_sandwich} and \ref{lemma: A_perp_rate}, we can write
\begin{align*}
    &\;\quad
    \cII_1 = \frac{1}{Tp_m} \sum_{t=1}^T \bB_{m,k} \mat_k(\cF_t^{(m)}) \bB_{m,\text{-}k}^\trans \wh{\bA}_{m,\text{-}k,\perp} \wh{\bA}_{m,\text{-}k,\perp}^\trans \bB_{m,\text{-}k} \mat_k(\cF_t^{(m)})^\trans \bB_{m,k}^\trans \\
    &=
    \frac{1}{Tp_m} \sum_{t=1}^T \bB_{m,k} \mat_k(\cF_t^{(m)}) \bB_{m,\text{-}k}^\trans \bA_{m,\text{-}k,\perp} \bA_{m,\text{-}k,\perp}^\trans \bB_{m,\text{-}k} \mat_k(\cF_t^{(m)})^\trans \bB_{m,k}^\trans \\
    &\;\quad
    + o_P\Big\{ \frac{1}{Tp_m} \sum_{t=1}^T \bB_{m,k} \mat_k(\cF_t^{(m)}) \bB_{m,\text{-}k}^\trans \bB_{m,\text{-}k} \mat_k(\cF_t^{(m)})^\trans \bB_{m,k}^\trans \Big\} \\
    &=: \cII_{\Sigma,1} + o_P(\cII_{\Sigma,2}) .
\end{align*}
By \Cref{ass: loadings}(b), for any $j\in[K_m]$, there exists a half-orthogonal matrix $\bC_{m,j}$ with dimension $p_{m,j}\times (p_{m,j} -r_{m,j} -u_{m,j})$ such that $\bB_{m,j}^\trans \bC_{m,j} =\zero$ and, without loss of generality, $\bA_{m,j,\perp} = \big( p_{m,j}^{-1/2} \bB_{m,j}, \bC_{m,j} \big) + o(1)$. Hence,
\begin{align}
    &\;\quad
    \bB_{m,j}^\trans \bA_{m,j,\perp} \bA_{m,j,\perp}^\trans \bB_{m,j} = \bB_{m,j}^\trans \big( p_{m,j}^{-1/2} \bB_{m,j}, \bC_{m,j} \big) \big( p_{m,j}^{-1/2} \bB_{m,j}, \bC_{m,j} \big)^\trans \bB_{m,j} \notag \\
    &=
    p_{m,j}\cdot \big\{ \big( \bI, \zero\big) \big( \bI, \zero\big)^\trans + o(1) \big\} = p_{m,j} \cdot \{\bI + o(1)\} ,
    \label{eqn: B_A_perp_A_perp_B}
\end{align}
which also implies that
\begin{equation}
\label{eqn: BAAB_converge}
\bB_{m,\text{-}k}^\trans \bA_{m,\text{-}k,\perp} \bA_{m,\text{-}k,\perp}^\trans \bB_{m,\text{-}k}
    = \otimes_{j\in[K_m] \setminus \{k\}} \bB_{m,j}^\trans \bA_{m,j,\perp} \bA_{m,j,\perp}^\trans \bB_{m,j}
    = p_{m,\text{-}k} \cdot \{ \bI+o(1)\} .
\end{equation}
Note that in $\cII_{\Sigma,2}$, we also have $\bB_{m,\text{-}k}^\trans \bB_{m,\text{-}k} = p_{m,\text{-}k} \cdot \{ \bI+o(1)\}$ by \Cref{ass: loadings}(b). Hence $\cII_{\Sigma}$ dominates $o_P(\cII_{\Sigma,2})$ and it remains to consider $\cII_{\Sigma,1}$ for $\cII_1$. To this end, using \Cref{lemma: inequality_sandwich}(a), the $o(1)$ term in \eqref{eqn: BAAB_converge} is dominated when plugged in $\cII_{\Sigma,1}$. We thus only consider
\[
\cII_1 \asymp \frac{1}{Tp_{m,k}} \sum_{t=1}^T \bB_{m,k} \mat_k(\cF_t^{(m)}) \mat_k(\cF_t^{(m)})^\trans \bB_{m,k}^\trans \xrightarrow{P} \frac{1}{p_{m,k}} \bB_{m,k} \bSigma_{F,m,k} \bB_{m,k}^\trans ,
\]
according to \Cref{ass: core_factor}(b). Hence with \Cref{ass: loadings}(b) again, we conclude $\lambda_i(\cII_1) = \lambda_i(\bSigma_{F,m,k}) + o_P(1)$ for all $i\leq u_{m,k}$, while $\lambda_i(\cII_1) =0$ for $i> u_{m,k}$ by $\rank(\cII_1)\leq u_{m,k}$.

Next, consider $\cII_2$. By definition, we have
\begin{align}
    &\;\quad
    \|\cII_2\|_F^2 = \Bigg\| \frac{1}{Tp_m} \sum_{t=1}^T \bB_{m,k} \mat_k(\cF_t^{(m)}) \bB_{m,\text{-}k}^\trans \wh{\bA}_{m,\text{-}k,\perp} \wh{\bA}_{m,\text{-}k,\perp}^\trans \mat_k(\cE_t^{(m)})^\trans \Bigg\|_F^2 \notag \\
    &\asymp
    \sum_{i=1}^{p_{m,k}} \Bigg\| \frac{1}{Tp_m} \sum_{j=1}^{r_{m,\text{-}k}} \sum_{t=1}^T \bB_{m,k} \mat_k(\cF_t^{(m)})_{\cdot j} (\bB_{m,\text{-}k}^\trans \bA_{m,\text{-}k,\perp} \bA_{m,\text{-}k,\perp}^\trans)_{j\cdot}^\trans \mat_k(\cE_t^{(m)})_{i\cdot} \Bigg\|_F^2 \notag \\
    &\leq
    \frac{1}{T p_m^2} \cdot \big\| \bB_{m,k} \big\|_F^2 \cdot p_{m,\text{-}k} \cdot \max_{j\in[r_{m,\text{-}k}]} \sum_{i=1}^{p_{m,k}} \Bigg\| \frac{1}{\sqrt{T}} \sum_{t=1}^T \mat_k(\cF_t^{(m)})_{\cdot j} \notag \\
    &\;\quad
    \cdot \Bigg\{ \be_i^\trans \mat_k(\cE_t^{(m)}) \Bigg( \frac{\bB_{m,\text{-}k}^\trans \bA_{m,\text{-}k,\perp} \bA_{m,\text{-}k,\perp}^\trans}{\sqrt{p_{m,\text{-}k}}} \Bigg)_{j\cdot} \Bigg\} \Bigg\|_F^2 \notag \\
    &\leq
    \frac{1}{T p_m^2} \cdot \big\| \bB_{m,k} \big\|_F^2 \cdot p_{m,\text{-}k} \cdot \cO_P(p_{m,k})
    = \cO_P\Big( \frac{1}{T p_{m,\text{-}k}} \Big) ,
    \label{eqn: cII_2}
\end{align}
where $\be_i \in\R^{p_{m,k}}$ contains $1$ in its $i$th entry and $0$ elsewhere, and the last line used Assumptions~\ref{ass: loadings}(b) and \ref{ass: dependence_thread}(a). The same arguments hold for $\cII_4$, i.e.,
\begin{align}
    &\;\quad
    \|\cII_4\|_F^2 = \cO_P\Big( \frac{1}{T p_{m,\text{-}k}} \Big) .
    \label{eqn: cII_4}
\end{align}

For $\cII_3$ (and hence $\cII_7$), we have the following by similar steps in \eqref{eqn: cII_2}, Assumptions~\ref{ass: loadings}(b) and \ref{ass: dependence_thread}(b):
\begin{align}
    &\;\quad
    \|\cII_3\|_F^2, \; \|\cII_7\|_F^2 = \cO_P\Big( \frac{1}{T p_{m,\text{-}k}} \Big) .
    \label{eqn: cII_3_7}
\end{align}

Consider $\cII_6$ now. Using \Cref{lemma: A_perp_rate}, we have
\begin{equation}
\label{eqn: hat_A_perp_minus_A_kron}
    \Big\| \wh{\bA}_{m,\text{-}k,\perp} -\bA_{m,\text{-}k,\perp} \Big\|_F = \cO_P\Big( \frac{1}{\sqrt{T}} \Big)
    = \Big\| \wh{\bA}_{m,\text{-}k,\perp} \wh{\bA}_{m,\text{-}k,\perp}^\trans - \bA_{m,\text{-}k,\perp} \bA_{m,\text{-}k,\perp}^\trans \Big\|_F^2 ,
\end{equation}
by a simple induction argument (see \text{e.g.} the induction argument in the proof of Lemma~6 in \cite{CenLam2025_KronProd}). Then, using \Cref{lemma: inequality_sandwich}(b),
\begin{align}
    &\;\quad
    \|\cII_6\|_F^2 = \Bigg\| \frac{1}{Tp_m} \sum_{t=1}^T \mat_k(\cE_t^{(m)}) \wh{\bA}_{m,\text{-}k,\perp} \wh{\bA}_{m,\text{-}k,\perp}^\trans \bA_{m,\text{-}k} \mat_k(\cG_t^{(m)})^\trans \bA_{m,k}^\trans \Bigg\|_F^2 \notag \\
    &\lesssim
    \Bigg\| \frac{1}{Tp_m} \sum_{t=1}^T \mat_k(\cE_t^{(m)}) \notag \\
    &\;\quad
    \cdot \big( \wh{\bA}_{m,\text{-}k,\perp} \wh{\bA}_{m,\text{-}k,\perp}^\trans - \bA_{m,\text{-}k,\perp} \bA_{m,\text{-}k,\perp}^\trans \big) \bA_{m,\text{-}k} \mat_k(\cG_t^{(m)})^\trans \bA_{m,k}^\trans \Bigg\|_F^2 \notag \\
    &\leq
    \frac{1}{p_m^2} \cdot \|\bA_{m,k}\|_F^2 \cdot \Big\| \wh{\bA}_{m,\text{-}k,\perp} \wh{\bA}_{m,\text{-}k,\perp}^\trans - \bA_{m,\text{-}k,\perp} \bA_{m,\text{-}k,\perp}^\trans \Big\|_F^2 \cdot p_{m,\text{-}k} \notag \\
    &\;\quad
    \cdot \sum_{i=1}^{p_{m,k}} \sum_{j=1}^{p_{m,\text{-}k}} \Bigg\| \frac{1}{T} \sum_{t=1}^T \mat_k(\cE_t^{(m)})_{i,j} \frac{\bA_{m,\text{-}k}}{\sqrt{p_{m,\text{-}k}}} \mat_k(\cG_t^{(m)})^\trans \Bigg\|_F^2 \notag \\
    &=
    \frac{1}{p_m^2} \cdot \|\bA_{m,k}\|_F^2 \cdot \frac{p_{m,\text{-}k}}{T} \cdot \cO_P(p_{m})
    = \cO_P\Big( \frac{1}{T} \Big) ,
    \label{eqn: cII_6}
\end{align}
where the second last equality used \eqref{eqn: hat_A_perp_minus_A_kron}, \Cref{ass: dependence_thread}(c) and similar arguments in \eqref{eqn: cII_2}, and the last used \Cref{ass: loadings}(b). Similarly, we also have
\begin{align}
    &\;\quad
    \|\cII_8\|_F^2 = \cO_P\Big( \frac{1}{T p_{m,\text{-}k}} \Big) ,
    \label{eqn: cII_8}
\end{align}

For $\cII_9$, by Lemmas~\ref{lemma: inequality_sandwich}(a) and \ref{lemma: A_perp_rate}, we have
\begin{align}
    &\;\quad
    \|\cII_9\|_F^2 \notag \\
    &=
    \Bigg\| \frac{1}{Tp_m} \sum_{t=1}^T \bA_{m,k} \mat_k(\cG_t^{(m)}) \bA_{m,\text{-}k}^\trans \wh{\bA}_{m,\text{-}k,\perp} \wh{\bA}_{m,\text{-}k,\perp}^\trans \bA_{m,\text{-}k} \mat_k(\cG_t^{(m)})^\trans \bA_{m,k}^\trans \Bigg\|_F^2 \notag \\
    &\leq
    \Big\| \wh{\bA}_{m,\text{-}k,\perp} \wh{\bA}_{m,\text{-}k,\perp}^\trans - \bA_{m,\text{-}k,\perp} \bA_{m,\text{-}k,\perp}^\trans \Big\|_F^2 \notag \\
    &\;\quad
    \cdot \Bigg\| \frac{1}{T p_{m,k}} \sum_{t=1}^T \bA_{m,k} \mat_k(\cG_t^{(m)}) \bA_{m,\text{-}k}^\trans \bA_{m,\text{-}k} \mat_k(\cG_t^{(m)})^\trans \bA_{m,k}^\trans \Bigg\|_F^2
    = \cO_P\Big( \frac{1}{T} \Big) ,
    \label{eqn: cII_9}
\end{align}
where the last equality used \Cref{ass: core_factor} and similar arguments on $\cII_1$. Similarly for $\cII_5$, using also Lemma~\ref{lemma: inequality_sandwich}(a) and noting that $\big\| \wh{\bA}_{m,\text{-}k,\perp} \wh{\bA}_{m,\text{-}k,\perp}^\trans -I \big\|_F =\cO(1)$,
\begin{align}
    &\;\quad
    \|\cII_5\|_F^2 = \Bigg\| \frac{1}{Tp_m} \sum_{t=1}^T \mat_k(\cE_t^{(m)}) \wh{\bA}_{m,\text{-}k,\perp} \wh{\bA}_{m,\text{-}k,\perp}^\trans \mat_k(\cE_t^{(m)})^\trans \Bigg\|_F^2 \notag \\
    &\lesssim
    \Bigg\| \frac{1}{Tp_m} \sum_{t=1}^T \mat_k(\cE_t^{(m)}) \mat_k(\cE_t^{(m)})^\trans \Bigg\|_F^2
    = \cO_P\Big( \frac{1}{T p_{m,\text{-}k}} + \frac{1}{p_{m,k}} \Big) ,
    \label{eqn: cII_5}
\end{align}
where the last equality used \Cref{ass: noise} and the proof of Lemma~3 in \cite{barigozzi2022statistical}.

Finally, combining all \eqref{eqn: cII_2}--\eqref{eqn: cII_5}, then the statement in the lemma is concluded by \eqref{eqn: decomp_Sigma_B} and applying Weyl's inequality iteratively. This ends the proof of the lemma.
\end{proof}

\begin{lemma}\label{lemma: eigenvalue_tilde_Sigma_B}
Let Assumptions~\ref{ass: identification}, \ref{ass: loadings}, \ref{ass: core_factor}, \ref{ass: noise}, \ref{ass: dependence_moment}, and \ref{ass: dependence_thread} hold.
Given any $m\in[M]$ with $K_m= 1$, recall $\wt{\bD}_{m,1}$ from the statement of Theorem~\ref{thm: local_loading}.
Then the eigenvalues of $\wt{\bD}_{m,1}$ satisfy $\lambda_i(\wt{\bD}_{m,1}) = \lambda_i(\bSigma_{F,m,1}) + o_P(1)$ for any $i\leq u_{m,1}$.
\end{lemma}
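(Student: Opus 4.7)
The plan is to mimic the nine-term decomposition used in Lemma~\ref{lemma: eigenvalue_Sigma_B}, adapted to the $K_m=1$ situation where $\cX_t^{(m)}$ is a vector and the sandwich by $\wh{\bA}_{m,1,\perp} \wh{\bA}_{m,1,\perp}^\trans$ appears on \emph{both} sides of the summand in \eqref{eqn: def_tilde_Sigma_B_m1}. Substituting $\mat_1(\cX_t^{(m)}) = \bA_{m,1} \mat_1(\cG_t^{(m)}) + \bB_{m,1} \mat_1(\cF_t^{(m)}) + \mat_1(\cE_t^{(m)})$ and expanding the outer product yields nine terms, exactly paralleling $\cII_1$--$\cII_9$ from the proof of Lemma~\ref{lemma: eigenvalue_Sigma_B}.

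First I would isolate the dominant term
\[
\wt\cII_1 = \frac{1}{Tp_{m,1}} \sum_{t=1}^T \wh{\bA}_{m,1,\perp} \wh{\bA}_{m,1,\perp}^\trans \bB_{m,1} \mat_1(\cF_t^{(m)}) \mat_1(\cF_t^{(m)})^\trans \bB_{m,1}^\trans \wh{\bA}_{m,1,\perp} \wh{\bA}_{m,1,\perp}^\trans .
\]
Using Lemma~\ref{lemma: A_perp_rate} to replace $\wh{\bA}_{m,1,\perp} \wh{\bA}_{m,1,\perp}^\trans$ by $\bA_{m,1,\perp} \bA_{m,1,\perp}^\trans$ on each side with $O_P(1/\sqrt{T})$ Frobenius error, together with the asymptotic orthogonality \eqref{eqn: B_A_perp_A_perp_B} giving $\bB_{m,1}^\trans \bA_{m,1,\perp} \bA_{m,1,\perp}^\trans \bB_{m,1} = p_{m,1}\{\bI + o(1)\}$, and Assumption~\ref{ass: core_factor}(b) giving $T^{-1}\sum_t \mat_1(\cF_t^{(m)}) \mat_1(\cF_t^{(m)})^\trans \xrightarrow{P} \bSigma_{F,m,1}$, I can write $\wt\cII_1 = \bX \bSigma_{F,m,1} \bX^\trans + o_P(1)$ where $\bX := p_{m,1}^{-1/2} \bA_{m,1,\perp} \bA_{m,1,\perp}^\trans \bB_{m,1}$ satisfies $\bX^\trans \bX = \bI_{u_{m,1}} + o(1)$. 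Hence the nonzero eigenvalues of $\wt\cII_1$ match those of $\bSigma_{F,m,1}$ up to $o_P(1)$, and the remaining $p_{m,1}-u_{m,1}$ eigenvalues of $\wt\cII_1$ are exactly zero by rank.

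Next I would control the other eight terms in Frobenius norm. Cross terms containing $\mat_1(\cG_t^{(m)})$ inherit a factor $\bA_{m,1}^\trans \wh{\bA}_{m,1,\perp}$ which, after using \eqref{eqn: A_perp_minus_hat_A_perp} and $\bA_{m,1}^\trans \bA_{m,1,\perp}=\zero$, is $\cO_P(1/\sqrt{T})$; this suffices to kill the analogues of $\cII_3$, $\cII_6$--$\cII_9$. Noise-factor cross terms (analogues of $\cII_2$ and $\cII_4$) are handled by Lemma~\ref{lemma: inequality_sandwich}(b) together with Assumption~\ref{ass: dependence_thread}(a)--(c), exactly as in \eqref{eqn: cII_2} and \eqref{eqn: cII_6}, specialised to $p_{m,\text{-}k}=1$. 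The pure-noise term (analogue of $\cII_5$) is bounded via Assumption~\ref{ass: noise} and the trivial operator bound $\|\wh{\bA}_{m,1,\perp} \wh{\bA}_{m,1,\perp}^\trans\|\leq 1$, giving $\cO_P(1/T + 1/p_{m,1}) = o_P(1)$.

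Finally, applying Weyl's inequality iteratively to the decomposition gives $\lambda_i(\wt\bSigma_{B,m,1}) = \lambda_i(\wt\cII_1) + o_P(1) = \lambda_i(\bSigma_{F,m,1}) + o_P(1)$ for all $i\leq u_{m,1}$. The main obstacle is purely the bookkeeping introduced by the two-sided sandwich: each replacement of $\wh{\bA}_{m,1,\perp} \wh{\bA}_{m,1,\perp}^\trans$ by its population counterpart leaves two residual error matrices, and to keep the cross terms at $o_P(1)$ in Frobenius norm I need to use Lemma~\ref{lemma: inequality_sandwich}(a) with care so that the factor $p_{m,1}$ appearing from $\|\bB_{m,1}\|_F^2$ is exactly cancelled by the normalisation $1/p_{m,1}$. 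No fundamentally new analytic difficulty arises beyond what was already resolved in Lemma~\ref{lemma: eigenvalue_Sigma_B}.
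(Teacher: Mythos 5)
Your proposal is correct and follows essentially the same route as the paper's proof: the identical nine-term expansion of $\wt\bSigma_{B,m,1}$, identification of the rank-$u_{m,1}$ leading term whose nonzero eigenvalues match those of $\bSigma_{F,m,1}$ via \Cref{ass: core_factor}(b) and the asymptotic orthogonality \eqref{eqn: B_A_perp_A_perp_B}, Frobenius-norm control of the remaining eight terms using \Cref{lemma: A_perp_rate}, \Cref{lemma: inequality_sandwich}, and Assumptions~\ref{ass: noise} and \ref{ass: dependence_thread}, and a final application of Weyl's inequality. The only (immaterial) difference is that for some of the $\cG_t$-bearing cross terms the paper bounds the time average directly via \Cref{ass: dependence_thread}(b) rather than extracting the $\cO_P(T^{-1/2})$ factor from $\wh{\bA}_{m,1,\perp}\wh{\bA}_{m,1,\perp}^\trans\bA_{m,1}$, but both yield $o_P(1)$ as required.
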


\begin{proof}[Proof of Lemma~\ref{lemma: eigenvalue_tilde_Sigma_B}]
By the definition in \eqref{eqn: def_tilde_Sigma_B_m1}, we decompose
\begin{align}
    &\;\quad
    \wt\bSigma_{B,m,1} = \frac{1}{Tp_{m,1}} \sum_{t=1}^T \wh{\bA}_{m,1,\perp} \wh{\bA}_{m,1,\perp}^\trans \mat_1(\cX_t^{(m)}) \mat_1(\cX_t^{(m)})^\trans \wh{\bA}_{m,1,\perp} \wh{\bA}_{m,1,\perp}^\trans \notag \\
    &=
    \frac{1}{Tp_{m,1}} \sum_{t=1}^T \wh{\bA}_{m,1,\perp} \wh{\bA}_{m,1,\perp}^\trans \Big\{ \bA_{m,1} \mat_1(\cG_t^{(m)}) + \bB_{m,1} \mat_1(\cF_t^{(m)}) + \mat_k(\cE_t^{(m)}) \Big\} \notag \\
    &\;\quad
    \cdot \Big\{ \bA_{m,1} \mat_1(\cG_t^{(m)}) + \bB_{m,1} \mat_1(\cF_t^{(m)}) + \mat_k(\cE_t^{(m)}) \Big\}^\trans \wh{\bA}_{m,1,\perp} \wh{\bA}_{m,1,\perp}^\trans \notag \\
    &=
    \frac{1}{Tp_{m,1}} \sum_{t=1}^T \wh{\bA}_{m,1,\perp} \wh{\bA}_{m,1,\perp}^\trans \bB_{m,1} \mat_1(\cF_t^{(m)}) \mat_1(\cF_t^{(m)})^\trans \bB_{m,1}^\trans \wh{\bA}_{m,1,\perp} \wh{\bA}_{m,1,\perp}^\trans \notag \\
    &\;\quad
    + \frac{1}{Tp_{m,1}} \sum_{t=1}^T \wh{\bA}_{m,1,\perp} \wh{\bA}_{m,1,\perp}^\trans \bB_{m,1} \mat_1(\cF_t^{(m)}) \mat_1(\cG_t^{(m)})^\trans \bA_{m,1}^\trans \wh{\bA}_{m,1,\perp} \wh{\bA}_{m,1,\perp}^\trans \notag \\
    &\;\quad
    + \frac{1}{Tp_{m,1}} \sum_{t=1}^T \wh{\bA}_{m,1,\perp} \wh{\bA}_{m,1,\perp}^\trans \bB_{m,1} \mat_1(\cF_t^{(m)}) \mat_1(\cE_t^{(m)})^\trans \wh{\bA}_{m,1,\perp} \wh{\bA}_{m,1,\perp}^\trans \notag \\
    &\;\quad
    + \frac{1}{Tp_{m,1}} \sum_{t=1}^T \wh{\bA}_{m,1,\perp} \wh{\bA}_{m,1,\perp}^\trans \bA_{m,1} \mat_1(\cG_t^{(m)}) \mat_1(\cG_t^{(m)})^\trans \bA_{m,1}^\trans \wh{\bA}_{m,1,\perp} \wh{\bA}_{m,1,\perp}^\trans \notag \\
    &\;\quad
    + \frac{1}{Tp_{m,1}} \sum_{t=1}^T \wh{\bA}_{m,1,\perp} \wh{\bA}_{m,1,\perp}^\trans \bA_{m,1} \mat_1(\cG_t^{(m)}) \mat_1(\cF_t^{(m)})^\trans \bB_{m,1}^\trans \wh{\bA}_{m,1,\perp} \wh{\bA}_{m,1,\perp}^\trans \notag \\
    &\;\quad
    + \frac{1}{Tp_{m,1}} \sum_{t=1}^T \wh{\bA}_{m,1,\perp} \wh{\bA}_{m,1,\perp}^\trans \bA_{m,1} \mat_1(\cG_t^{(m)}) \mat_1(\cE_t^{(m)})^\trans \wh{\bA}_{m,1,\perp} \wh{\bA}_{m,1,\perp}^\trans \notag \\
    &\;\quad
    + \frac{1}{Tp_{m,1}} \sum_{t=1}^T \wh{\bA}_{m,1,\perp} \wh{\bA}_{m,1,\perp}^\trans \mat_1(\cE_t^{(m)}) \mat_1(\cG_t^{(m)})^\trans \bA_{m,1}^\trans \wh{\bA}_{m,1,\perp} \wh{\bA}_{m,1,\perp}^\trans \notag \\
    &\;\quad
    + \frac{1}{Tp_{m,1}} \sum_{t=1}^T \wh{\bA}_{m,1,\perp} \wh{\bA}_{m,1,\perp}^\trans \mat_1(\cE_t^{(m)}) \mat_1(\cF_t^{(m)})^\trans \bB_{m,1}^\trans \wh{\bA}_{m,1,\perp} \wh{\bA}_{m,1,\perp}^\trans \notag \\
    &\;\quad
    + \frac{1}{Tp_{m,1}} \sum_{t=1}^T \wh{\bA}_{m,1,\perp} \wh{\bA}_{m,1,\perp}^\trans \mat_1(\cE_t^{(m)}) \mat_1(\cE_t^{(m)})^\trans \wh{\bA}_{m,1,\perp} \wh{\bA}_{m,1,\perp}^\trans \notag \\
    &=: \wt\cII_1 + \wt\cII_2 + \wt\cII_3 + \wt\cII_4 + \wt\cII_5 + \wt\cII_6 + \wt\cII_7 + \wt\cII_8 + \wt\cII_9 .
    \label{eqn: decomp_tilde_Sigma_B}
\end{align}

By definition, the rank of $\wt\cII_1$ is $u_{m,1}$. As the nonzero eigenvalues of some matrix product $\bX\bY$ are the same as those of $\bY\bX$ (with multiplicity), from \Cref{ass: core_factor}(b) we have $\lambda_i(\wt\cII_1) \asymp \lambda_i(\bSigma_{F,m,1})$. Then similar to the proof of \Cref{lemma: eigenvalue_Sigma_B}, it remains to show the (squared) Frobenius norm of $\wt\cII_2$--$\wt\cII_9$ are $o_P(1)$.

To this end, consider $\wt\cII_2$. By the triangle inequality, we have
\begin{align}
    &\;\quad
    \|\wt\cII_2\|_F^2 \notag \\
    &=
    \Bigg\| \frac{1}{T p_{m,1}} \sum_{t=1}^T \wh{\bA}_{m,1,\perp} \wh{\bA}_{m,1,\perp}^\trans \bB_{m,1} \mat_1(\cF_t^{(m)}) \mat_1(\cG_t^{(m)})^\trans \bA_{m,1}^\trans \wh{\bA}_{m,1,\perp} \wh{\bA}_{m,1,\perp}^\trans \Bigg\|_F^2 \notag \\
    &\leq
    \frac{1}{T p_{m,1}^2} \big\|\bB_{m,1} \big\|_F^2 \cdot \sum_{j=1}^{p_{m,1}} \Bigg\| \frac{1}{\sqrt{T}} \sum_{t=1}^T \mat_1(\cF_t^{(m)}) \mat_1(\cG_t^{(m)})^\trans (\bA_{m,1})_{j\cdot} \Bigg\|_F^2 
    = \cO_P\Big( \frac{1}{T} \Big)
    \label{eqn: tilde_cII_2}
\end{align}
where the last equality used Assumptions~\ref{ass: loadings}(b) and \ref{ass: dependence_thread}(b). Similarly, we have
\begin{align}
    &\;\quad
    \|\wt\cII_3\|_F^2, \;
    \|\wt\cII_5\|_F^2, \;
    \|\wt\cII_8\|_F^2
    = \cO_P\Big( \frac{1}{T} \Big) .
    \label{eqn: tilde_cII_3_5_8}
\end{align}

With \Cref{lemma: A_perp_rate}, Assumptions~\ref{ass: core_factor}(a) and \ref{ass: dependence_thread}(c), we also have 
\begin{align}
    &\;\quad
    \|\wt\cII_4\|_F^2, \;
    \|\wt\cII_6\|_F^2, \;
    \|\wt\cII_7\|_F^2
    = \cO_P\Big( \frac{1}{T} \Big) .
    \label{eqn: tilde_cII_4_6_7}
\end{align}

Lastly, using the same argument in \eqref{eqn: cII_5} (except that $p_{m,\text{-}k}=1$ here) \begin{align}
    &\;\quad
    \|\wt\cII_9\|_F^2 = \cO_P\Big( \frac{1}{T} + \frac{1}{p_{m,1}} \Big) ,
    \label{eqn: tilde_cII_9}
\end{align}
hence concluding the proof of this lemma.
\end{proof}

\subsection{Proof of theorems}

\begin{proof}[Proof of \Cref{prop: identification}]
Fix $m\in[M]$. Then we have $\cX_{G,t}^{(m)} + \cX_{F,t}^{(m)} = \wt\cX_{G,t}^{(m)} + \wt\cX_{F,t}^{(m)}$. Take any $k\in[K]$, for any $n\neq m$ and multi-index $\bi=(i_1,\ldots,i_{K_n}) \in [p_{n,1}] \times \cdots \times[p_{n,K_n}]$, we have
\begin{align*}
    &\;\quad
    \E\Big\{ \mat_k(\cX_{G,t}^{(m)}) \cdot (\cX_{G,t}^{(n)})_{\bi} \Big\}
    = \E\Big\{ \mat_k\Big( \cX_{G,t}^{(m)} +\cX_{F,t}^{(m)} \Big) \cdot \Big( \cX_{G,t}^{(n)} +\cX_{F,t}^{(n)} \Big)_{\bi} \Big\} \\
    &=
    \E\Big\{ \mat_k\Big( \wt\cX_{G,t}^{(m)} +\wt\cX_{F,t}^{(m)} \Big) \cdot \Big( \cX_{G,t}^{(n)} +\cX_{F,t}^{(n)} \Big)_{\bi} \Big\}
    = \E\Big\{ \mat_k(\wt\cX_{G,t}^{(m)}) \cdot (\cX_{G,t}^{(n)})_{\bi} \Big\} \\
    &=
    \E\Big\{ \mat_k\Big( \cX_{G,t}^{(m)} + [\wt\cX_{G,t}^{(m)} - \cX_{G,t}^{(m)}] \Big) \cdot (\cX_{G,t}^{(n)})_{\bi} \Big\} ,
\end{align*}
where the first and third equalities used Assumption~\ref{ass: identification}(b). The above implies
\[
\E\Big\{ \mat_k\Big( \wt\cX_{G,t}^{(m)} - \cX_{G,t}^{(m)} \Big) \cdot (\cX_{G,t}^{(n)})_{\bi} \Big\} = 0 ,
\]
which, together with Assumption~\ref{ass: identification}(a), suggests that any element of $\wt\cX_{G,t}^{(m)} - \cX_{G,t}^{(m)}$ is uncorrelated to any element in $\cX_{G,t}^{(n)}$, which holds true for any $n\in m$. By Assumption~\ref{ass: identification}(c), we conclude that $\wt\cX_{G,t}^{(m)} = \cX_{G,t}^{(m)}$. The global component---hence the local component---is thus identified, which completes the proof of the theorem.
\end{proof}

\begin{proof}[Proof of Theorem~\ref{thm: global_loading}]
By applying Lemma~3 in \cite{Lametal2011} and notice the matrix $P$ within is of fixed rank, we have
\begin{align*}
    \frac{1}{\sqrt{p_{m,k}}} \Big\| \wh{\bA}_{m,k} -\bA_{m,k} \Big\|_F \leq \frac{8}{\lambda_{r_{m,k}}\big( \bSigma_{m,k} \big)} \Big\| \wh\bSigma_{m,k} -\bSigma_{m,k} \Big\|_F
    = \cO_P\Big( \frac{1}{\sqrt{T}} \Big) ,
\end{align*}
where the last equality used Lemmas~\ref{lemma: rate_Sigma} and \ref{lemma: eigenvalue_Sigma}. This completes the proof of the theorem.
\end{proof}

\begin{proof}[Proof of Theorem~\ref{thm: local_loading}]
By the definition of $\wh{\bB}_{m,k}$, we have
\[
\wh{\bB}_{m,k} = \wh\bSigma_{B,m,k} \wh{\bB}_{m,k} \wh{\bD}_{B,m,k}^{-1} .
\]
First consider $K_m> 1$, and note that from the decomposition in \eqref{eqn: decomp_Sigma_B},
recall from the statement of \Cref{thm: local_loading} that
{\small
\[
\wh{\bH}_{m,k} = \frac{1}{Tp_m} \sum_{t=1}^T \mat_k(\cF_t^{(m)}) \bB_{m,\text{-}k}^\trans \wh{\bA}_{m,\text{-}k,\perp} \wh{\bA}_{m,\text{-}k,\perp}^\trans \bB_{m,\text{-}k} \mat_k(\cF_t^{(m)})^\trans \bB_{m,k}^\trans \wh{\bB}_{m,k} \wh{\bD}_{B,m,k}^{-1} ,
\]
}
so that with the notation in \eqref{eqn: decomp_Sigma_B}, by Cauchy--Schwarz inequality, we have
\begin{align}
    &\;\quad
    \frac{1}{p_{m,k}} \Big\| \wh{\bB}_{m,k} - \bB_{m,k} \wh{\bH}_{m,k} \Big\|_F^2 = \frac{1}{p_{m,k}} \Big\| \Big( \sum_{i=2}^9 \cII_i \Big) \wh{\bB}_{m,k} \wh{\bD}_{B,m,k}^{-1} \Big\|_F^2 \notag \\
    &\lesssim
    \frac{1}{p_{m,k}} \Big\| \cII_5 \wh{\bB}_{m,k} \wh{\bD}_{B,m,k}^{-1} \Big\|_F^2 + \frac{1}{p_{m,k}} \Big\| \cII_9 \wh{\bB}_{m,k} \wh{\bD}_{B,m,k}^{-1} \Big\|_F^2 \notag \\
    &\;\quad
    + \frac{1}{p_{m,k}}  \cdot \big\| \wh{\bD}_{B,m,k}^{-1} \big\|_F^2 \cdot \sum_{\substack{i=2\\ i\neq 5}}^8 \Big\| \cII_i \wh{\bB}_{m,k} \Big\|_F^2 \notag \\
    &=
    \frac{1}{p_{m,k}} \Big\| \cII_5 \wh{\bB}_{m,k} \Big\|_F^2 + \frac{1}{p_{m,k}} \Big\| \cII_9 \wh{\bB}_{m,k} \Big\|_F^2 + \cO_P\Big( \frac{1}{T p_{m,\text{-}k}} \Big) + \cO_P\Big( \frac{1}{T p_{m,k}} \Big) \notag \\
    &\;\quad
    + o_P\Big( \frac{1}{p_{m,k}} \Big\| \wh{\bB}_{m,k} - \bB_{m,k} \wh{\bH}_{m,k} \Big\|_F^2 \Big) \notag \\
    &\lesssim
    \frac{1}{p_{m,k}} \Big\| \cII_5 \bB_{m,k} \Big\|_F^2 + \frac{1}{p_{m,k}} \Big\| \cII_9 \bB_{m,k} \Big\|_F^2 + \cO_P\Big( \frac{1}{T p_{m,\text{-}k}} \Big) + \cO_P\Big( \frac{1}{T p_{m,k}} \Big) \notag \\
    &\;\quad
    + o_P\Big( \frac{1}{p_{m,k}} \Big\| \wh{\bB}_{m,k} - \bB_{m,k} \wh{\bH}_{m,k} \Big\|_F^2 \Big) ,
    \label{eqn: hat_B_minus_BH}
\end{align}
where: (1) the second last equality used \Cref{lemma: eigenvalue_Sigma_B}, the rates in its proof, and the fact that $\big\|\bA_{m,k}^\trans \bB_{m,k} \big\|_F =\cO_P(\sqrt{p_{m,k}})$ by \Cref{ass: loadings}(b) and hence, for example, for $\cII_6 \wh{\bB}_{m,k}$, similar to \eqref{eqn: cII_6}:
\begin{align*}
    \Big\| \cII_6 \wh{\bB}_{m,k} \Big\|_F^2 &\lesssim
    \frac{1}{p_m^2} \cdot \big\|\bA_{m,k}^\trans \bB_{m,k} \big\|_F^2 \cdot \frac{p_{m,\text{-}k}}{T} \cdot \cO_P(p_{m})
    + o_P\Big( \Big\| \wh{\bB}_{m,k} - \bB_{m,k} \wh{\bH}_{m,k} \Big\|_F^2 \Big) \\
    &=
    \cO_P\Big( \frac{1}{T} \Big) + o_P\Big( \Big\| \wh{\bB}_{m,k} - \bB_{m,k} \wh{\bH}_{m,k} \Big\|_F^2 \Big) ;
\end{align*}
and (2) the last line used \eqref{eqn: cII_9}, \eqref{eqn: cII_5}, and the fact that $\big\| \wh{\bH}_{m,k} \big\|_F^2 =\cO_P(1)$ by the convergence of $\cII_1$ in the proof of \Cref{lemma: eigenvalue_Sigma_B}.

Then by \Cref{lemma: inequality_sandwich}(b) and \Cref{ass: loadings}(a), consider
\begin{align}
    &\;\quad
    \frac{1}{p_{m,k}} \Big\| \cII_5 \bB_{m,k} \Big\|_F^2 \notag \\
    &=
    \frac{1}{p_{m,k}} \Bigg\| \frac{1}{Tp_m} \sum_{t=1}^T \mat_k(\cE_t^{(m)}) \wh{\bA}_{m,\text{-}k,\perp} \wh{\bA}_{m,\text{-}k,\perp}^\trans \mat_k(\cE_t^{(m)})^\trans \bB_{m,k} \Bigg\|_F^2 \notag \\
    &\leq
    \frac{1}{T^2 p_m^2 p_{m,k}} \sum_{i=1}^{u_{m,k}} \sum_{j=1}^{p_{m,\text{-}k}} \Bigg\| \sum_{t=1}^T \big\{ \bB_{m,k}^\trans \mat_k(\cE_t^{(m)}) \big\}_{ij} \mat_k(\cE_t^{(m)})^\trans \Bigg\|_F^2 \notag \\
    &=
    \frac{1}{T^2 p_m^2 p_{m,k}} \max_{i\in[u_{m,k}]} \sum_{j=1}^{p_{m,\text{-}k}} \Bigg\| \sum_{t=1}^T (\bB_{m,k})_{\cdot i}^\trans \mat_k(\cE_t^{(m)})_{\cdot j} \mat_k(\cE_t^{(m)}) \Bigg\|_F^2 \notag \\
    &=
    \frac{1}{T^2 p_m^2 p_{m,k}} \max_{i\in[u_{m,k}]} \sum_{l=1}^{p_{m,k}} \sum_{h,j=1}^{p_{m,\text{-}k}} \Bigg( \sum_{t=1}^T \sum_{w=1}^{p_{m,k}} (\bB_{m,k})_{wi} \mat_k(\cE_t^{(m)})_{wj} \mat_k(\cE_t^{(m)})_{lh} \Bigg)^2 \notag \\
    &\lesssim
    \frac{1}{T^2 p_m^2 p_{m,k}} \max_{i\in[u_{m,k}]} \sum_{l=1}^{p_{m,k}} \sum_{h,j=1}^{p_{m,\text{-}k}} \Bigg\{ \E\Bigg( \sum_{t=1}^T \sum_{w=1}^{p_{m,k}} (\bB_{m,k})_{wi} \mat_k(\cE_t^{(m)})_{wj} \mat_k(\cE_t^{(m)})_{lh} \Bigg) \Bigg\}^2 \notag \\
    &\;\quad
    + \frac{1}{T^2 p_m^2 p_{m,k}} \max_{i\in[u_{m,k}]} \sum_{l=1}^{p_{m,k}} \sum_{h,j=1}^{p_{m,\text{-}k}} \sum_{t,s=1}^T \sum_{w,q=1}^{p_{m,k}} \cov\Big\{ \mat_k(\cE_t^{(m)})_{wj} \mat_k(\cE_t^{(m)})_{lh}, \notag \\
    &\;\quad
    \mat_k(\cE_s^{(m)})_{qj} \mat_k(\cE_s^{(m)})_{lh} \Big\}
    =
    \cO_P\Big( \frac{1}{p_{m,k}^2} + \frac{1}{T p_{m,k}} \Big) ,
    \label{eqn: cII_5_Bmk}
\end{align}
where the last equality used Assumptions~\ref{ass: noise}(b) and (c).

On the other hand, by triangle inequality, we have
\begin{align*}
    &\;\quad
    \frac{1}{p_{m,k}} \Big\| \cII_9 \bB_{m,k} \Big\|_F^2 \\
    &=
    \frac{1}{p_{m,k}} \Bigg\| \frac{1}{Tp_m} \sum_{t=1}^T \bA_{m,k} \mat_k(\cG_t^{(m)}) \\
    &\;\quad
    \cdot \bA_{m,\text{-}k}^\trans \wh{\bA}_{m,\text{-}k,\perp} \wh{\bA}_{m,\text{-}k,\perp}^\trans \bA_{m,\text{-}k} \mat_k(\cG_t^{(m)})^\trans \bA_{m,k}^\trans \bB_{m,k} \Bigg\|_F^2 \\
    &\leq
    \frac{1}{p_m^2 p_{m,k}} \big\|\bA_{m,k}\big\|_F^2 \cdot \Big\| \wh{\bA}_{m,\text{-}k,\perp} \wh{\bA}_{m,\text{-}k,\perp}^\trans - \bA_{m,\text{-}k,\perp} \bA_{m,\text{-}k,\perp}^\trans \Big\|_F^2 \\
    &\;\quad
    \cdot \Bigg\| \frac{1}{T} \sum_{t=1}^T \mat_k(\cG_t^{(m)}) \bA_{m,\text{-}k}^\trans \bA_{m,\text{-}k} \mat_k(\cG_t^{(m)})^\trans \Bigg\|_F^2 \cdot \big\| \bA_{m,k}^\trans \bB_{m,k} \big\|_F^2 \\
    &\lesssim
    \frac{1}{T} \cdot \frac{1}{p_{m,k}^2} \big\| \bA_{m,k}^\trans \bB_{m,k} \big\|_F^2
    = \frac{1}{T} \cdot \Bigg\| \frac{1}{p_{m,k}} \bA_{m,k}^\trans \bB_{m,k} \Bigg\|_F^2
    = \cO_P\Big( \frac{1}{T p_{m,k}} \Big) ,
\end{align*}
where the last line used Assumption~\ref{ass: loadings}(b), \Cref{lemma: A_perp_rate}, and similar arguments in \eqref{eqn: cII_9}.

Finally, noting that the last term in \eqref{eqn: hat_B_minus_BH} can be omitted, we conclude that
\[
\frac{1}{p_{m,k}} \Big\| \wh{\bB}_{m,k} - \bB_{m,k} \wh{\bH}_{m,k} \Big\|_F^2 = \cO_P\Big( \frac{1}{p_{m,k}^2} + \frac{1}{T p_{m,k}} + \frac{1}{T p_{m,\text{-}k}} \Big) .
\]

It remains to show $\wh{\bH}_{m,k} \wh{\bH}_{m,k}^\trans \xrightarrow{P} \bI$, which is obvious by
\begin{align*}
    \bI &= \frac{1}{p_{m,k}} \wh{\bB}_{m,k}^\trans \wh{\bB}_{m,k} = \frac{1}{p_{m,k}} \wh{\bB}_{m,k}^\trans (\wh{\bB}_{m,k} - \bB_{m,k} \wh{\bH}_{m,k}) + \frac{1}{p_{m,k}} \wh{\bB}_{m,k}^\trans \bB_{m,k} \wh{\bH}_{m,k} \\
    &=
    \frac{1}{p_{m,k}} \wh{\bB}_{m,k}^\trans (\wh{\bB}_{m,k} - \bB_{m,k} \wh{\bH}_{m,k}) + \frac{1}{p_{m,k}} (\wh{\bB}_{m,k} - \bB_{m,k} \wh{\bH}_{m,k})^\trans \bB_{m,k} \wh{\bH}_{m,k} \\
    &\;\quad
    + \frac{1}{p_{m,k}} \wh{\bH}_{m,k}^\trans \bB_{m,k}^\trans \bB_{m,k} \wh{\bH}_{m,k}
    =
    \wh{\bH}_{m,k} \wh{\bH}_{m,k}^\trans + o_P(1) ,
\end{align*}
which completes part~(a) of the theorem.

Consider part~(b). Similar to part~(a), we also have $\wh{\bB}_{m,1} = \wt\bSigma_{B,m,1} \wh{\bB}_{m,1} \wt{\bD}_{B,m,1}^{-1}$. For $\wt\cII_1$ in \eqref{eqn: decomp_tilde_Sigma_B}, by \Cref{lemma: A_perp_rate} and \eqref{eqn: B_A_perp_A_perp_B},
\begin{align*}
    \wt\cII_1 &= \frac{1}{Tp_{m,1}} \sum_{t=1}^T \bA_{m,1,\perp} \bA_{m,1,\perp}^\trans \bB_{m,1} \mat_1(\cF_t^{(m)}) \mat_1(\cF_t^{(m)})^\trans \bB_{m,1}^\trans \wh{\bA}_{m,1,\perp} \wh{\bA}_{m,1,\perp}^\trans \\
    &\;\quad
    + o_P(\wt\cII_1) \\
    &\asymp
    \sqrt{p_{m,1}} \cdot \big( p_{m,1}^{-1/2} \bB_{m,1}, \bC_{m,1} \big) \big( \bI, \zero\big)^\trans \\
    &\;\quad
    \cdot \frac{1}{Tp_{m,1}} \sum_{t=1}^T \mat_1(\cF_t^{(m)}) \mat_1(\cF_t^{(m)})^\trans \bB_{m,1}^\trans \wh{\bA}_{m,1,\perp} \wh{\bA}_{m,1,\perp}^\trans \\
    &\asymp
    \frac{1}{Tp_{m,1}} \sum_{t=1}^T \bB_{m,1} \mat_1(\cF_t^{(m)}) \mat_1(\cF_t^{(m)})^\trans \bB_{m,1}^\trans \wh{\bA}_{m,1,\perp} \wh{\bA}_{m,1,\perp}^\trans ,
\end{align*}
so that the remaining of part~(b) can be shown by repeating the steps for part~(a), except that \Cref{lemma: eigenvalue_tilde_Sigma_B} is used in place of \Cref{lemma: eigenvalue_Sigma_B}, and for \eqref{eqn: cII_5_Bmk}, Lemma~4 in \cite{barigozzi2022statistical} can be directly applied. The proof of this theorem is now complete.
\end{proof}

\begin{proof}[Proof of Theorem~\ref{thm: core_factor}]
For the local factor estimator, consider $K_m> 1$ first. From \eqref{eqn: def_hat_Ft}, we have
\begin{align}
    &\;\quad
    \vec(\wh{\cF}_t^{(m)}) = (\wh{\bC}_m^\trans \wh{\bC}_m)^{-1} \wh{\bC}_m^\trans (\otimes_{k=1}^K \wh{\bA}_{m,k,\perp})^\trans \vec(\cX_t^{m}) \notag \\
    &=
    (\wh{\bC}_m^\trans \wh{\bC}_m)^{-1} \wh{\bC}_m^\trans \Big\{ (\otimes_{k=1}^K \wh{\bA}_{m,k,\perp})^\trans (\otimes_{k=1}^K \bB_{m,k}) \vec(\cF_t^{(m)}) \notag \\
    &\;\quad
    + (\otimes_{k=1}^K \wh{\bA}_{m,k,\perp})^\trans (\otimes_{k=1}^K \bA_{m,k}) \vec(\cG_t^{(m)}) + (\otimes_{k=1}^K \wh{\bA}_{m,k,\perp})^\trans \vec(\cE_t^{m}) \Big\} \notag \\
    &=
    (\otimes_{k=1}^K \wh{\bH}_{m,k}^\trans) \vec(\cF_t^{(m)}) \notag \\
    &\;\quad
    + (\wh{\bC}_m^\trans \wh{\bC}_m)^{-1} \wh{\bC}_m^\trans (\otimes_{k=1}^K \wh{\bA}_{m,k,\perp})^\trans \big( \otimes_{k=1}^K \bB_{m,k} - \otimes_{k=1}^K \wh{\bB}_{m,k} \wh{\bH}_{m,k}^\trans \big) \vec(\cF_t^{(m)}) \notag \\
    &\;\quad
    + (\wh{\bC}_m^\trans \wh{\bC}_m)^{-1} \wh{\bC}_m^\trans (\otimes_{k=1}^K \wh{\bA}_{m,k,\perp})^\trans (\otimes_{k=1}^K \bA_{m,k}) \vec(\cG_t^{(m)}) \notag \\
    &\;\quad
    + (\wh{\bC}_m^\trans \wh{\bC}_m)^{-1} \wh{\bC}_m^\trans (\otimes_{k=1}^K \wh{\bA}_{m,k,\perp})^\trans \vec(\cE_t^{m}) \notag \\
    &=: (\otimes_{k=1}^K \wh{\bH}_{m,k}^\trans) \vec(\cF_t^{(m)}) + \cIII_1 + \cIII_2 + \cIII_3 .
    \label{eqn: vec_F_decomp}
\end{align}

Before considering the last line in \eqref{eqn: vec_F_decomp}, first note that
\begin{align}
    &\;\quad
    \Big\| (\wh{\bC}_m^\trans \wh{\bC}_m)^{-1} \wh{\bC}_m^\trans \Big\|_F^2 \notag \\
    &\leq
    \Big\| \big( \otimes_{k=1}^K \wh{\bB}_{m,k}^\trans \wh{\bA}_{m,k,\perp} \wh{\bA}_{m,k,\perp}^\trans \wh{\bB}_{m,k} \big)^{-1} \Big\|_F^2 \cdot \big\| \otimes_{k=1}^K \wh{\bA}_{m,k,\perp} \big\|_F^2 \cdot \big\| \otimes_{k=1}^K \wh{\bB}_{m,k} \big\|_F^2 \notag \\
    &\asymp
    \big\| (p_m \bI)^{-1} \big\|_F^2 \cdot \big\| \otimes_{k=1}^K \wh{\bA}_{m,k,\perp} \big\|_F^2 \cdot \big\| \otimes_{k=1}^K \wh{\bB}_{m,k} \big\|_F^2
    = \cO_P\big( p_m^{-1} \big) ,
    \label{eqn: CCC_rate}
\end{align}
where the last line used \Cref{lemma: A_perp_rate}, \eqref{eqn: B_A_perp_A_perp_B}, and \Cref{ass: loadings}(b).

With \eqref{eqn: CCC_rate}, by Assumption~\ref{ass: core_factor}(a) and \Cref{thm: local_loading}, we immediately have (by a simple induction):
\begin{equation}
\label{eqn: cIII_1}
\begin{split}
    \| \cIII_1\|_F^2 &\lesssim \sum_{k\in[K_m]} \frac{1}{p_{m,k}} \Big\| \bB_{m,k} - \wh{\bB}_{m,k} \wh{\bH}_{m,k}^\trans \Big\|_F^2 \\
    &=
    \cO_P\Bigg\{ \sum_{k=1}^{K_m} \Big( \frac{1}{p_{m,k}^2} + \frac{1}{T p_{m,k}} + \frac{1}{T p_{m,\text{-}k}} \Big) \Bigg\} .
\end{split}
\end{equation}

For $\cIII_2$, using \eqref{eqn: CCC_rate} and \Cref{lemma: A_perp_rate},
\begin{equation}
\label{eqn: cIII_2}
\| \cIII_2\|_F^2 \lesssim \frac{1}{p_m} \big\| \otimes_{k=1}^K \bA_{m,k} \big\|_F^2 \cdot \Big\| \wh{\bA}_{m,k,\perp} -\bA_{m,k,\perp} \Big\|_F^2 
= \cO_P\Big( \frac{1}{T} \Big).
\end{equation}

For $\cIII_3$, first consider $\wh{\bC}_m^\trans (\otimes_{k=1}^K \wh{\bA}_{m,k,\perp})^\trans \vec(\cE_t^{m})$. By \eqref{eqn: B_A_perp_A_perp_B}, we have
\begin{align*}
    &\;\quad
    \Big\| \wh{\bC}_m^\trans (\otimes_{k=1}^K \wh{\bA}_{m,k,\perp})^\trans \vec(\cE_t^{m}) \Big\|_F^2 \\
    &\lesssim \max_i \big\| (\otimes_{k=1}^K \bB_{m,k})_{i\cdot} \big\|_F^2 \sum_{j,l=1}^{p_m} \big| \E\big\{ \vec(\cE_t^{m})_j \vec(\cE_t^{m})_l \big\} \Big|
    = \cO_P(p_m) ,
\end{align*}
where the last equality used Assumptions~\ref{ass: loadings}(a) and \ref{ass: noise}(b). Noting that $\big\| (\wh{\bC}_m^\trans \wh{\bC}_m)^{-1} \big\|_F^2 =\cO_P\big( p_m^{-2} \big)$ from \eqref{eqn: CCC_rate}, it hence holds that
\begin{equation}
\label{eqn: cIII_3}
\| \cIII_3\|_F^2 = \cO_P\Big( \frac{1}{p_m} \Big).
\end{equation}

Therefore, combining \eqref{eqn: vec_F_decomp}, \eqref{eqn: cIII_1}, \eqref{eqn: cIII_2}, and \eqref{eqn: cIII_3}, we conclude
\[
\Big\| \vec(\wh{\cF}_t^{(m)}) - (\otimes_{k=1}^K \wh{\bH}_{m,k}^\trans) \vec(\cF_t^{(m)}) \Big\|_F^2 = \cO_P\Big( \sum_{k\in[K_m]} \frac{1}{p_{m,k}^2} + \frac{1}{T} + \frac{1}{p_m} \Big) ,
\]
which completes the proof for the scenario $K_m> 1$. When $K_m=1$, all the previous arguments follow, except that for \eqref{eqn: cIII_1}, we have
\[
\| \cIII_1\|_F^2 = \cO_P\Big( \frac{1}{p_{m,1}^2} + \frac{1}{T} \Big) ,
\]
which leads to essentially the same expression (note that $p_m=p_{m,1}$ for $K_m=1$, and $\wh{\bH}_{m,k}$ is replaced by $\wt{\bH}_{m,1}$), i.e.,
\[
\Big\| \vec(\wh{\cF}_t^{(m)}) - \wt{\bH}_{m,1}^\trans \vec(\cF_t^{(m)}) \Big\|_F^2 = \cO_P\Big( \frac{1}{T} + \frac{1}{p_{m,1}} \Big) .
\]
This completes the proof for part~(a) of the theorem.

Next, we first show the consistency of the vectorized local component as follows. For $K_m> 1$,
\begin{align}
    &\;\quad
    \frac{1}{p_m} \Big\| (\otimes_{k=1}^K \bB_{m,k}) \vec(\cF_t^{(m)}) - (\otimes_{k=1}^K \wh{\bB}_{m,k}) \vec(\wh\cF_t^{(m)}) \Big\|_F^2 \notag \\
    &=
    \frac{1}{p_m} \Big\| (\otimes_{k=1}^K \bB_{m,k} \wh{\bH}_{m,k}) (\otimes \wh{\bH}_{m,k}^\trans) \vec(\cF_t^{(m)}) - (\otimes_{k=1}^K \wh{\bB}_{m,k}) \vec(\wh\cF_t^{(m)}) \Big\|_F^2 \notag \\
    &\lesssim
    \frac{1}{p_m} \Big\| (\otimes_{k=1}^K \bB_{m,k} \wh{\bH}_{m,k}) - (\otimes_{k=1}^K \wh{\bB}_{m,k}) \Big\|_F^2
    + \Big\| (\otimes \wh{\bH}_{m,k}^\trans) \vec(\cF_t^{(m)}) - \vec(\wh\cF_t^{(m)}) \Big\|_F^2 \notag \\
    &=
    \cO_P\Big( \sum_{k\in[K_m]} \frac{1}{p_{m,k}^2} + \frac{1}{T} + \frac{1}{p_m} \Big) ,
    \label{eqn: rate_local_component_Km_not1}
\end{align}
by Theorems~\ref{thm: local_loading} and \ref{thm: core_factor}(a). Similarly, for $K_m=1$,
\begin{align}
    &\;\quad
    \frac{1}{p_m} \Big\| (\otimes_{k=1}^K \bB_{m,k}) \vec(\cF_t^{(m)}) - (\otimes_{k=1}^K \wh{\bB}_{m,k}) \vec(\wh\cF_t^{(m)}) \Big\|_F^2 \notag \\
    &=
    \cO_P\Big( \frac{1}{T} + \frac{1}{p_{m,1}} \Big) .
    \label{eqn: rate_local_component_Km_1}
\end{align}

Then for part~(b), by the definition in \eqref{eqn: def_hat_Gt}, we may write
\begin{align*}
    &\;\quad
    \vec(\wh{\cG}_t^{(m)}) \\
    &=
    \Big\{ (\otimes_{k=1}^K \wh{\bA}_{m,k})^\trans (\otimes_{k=1}^K \wh{\bA}_{m,k}) \Big\}^{-1} (\otimes_{k=1}^K \wh{\bA}_{m,k})^\trans \Big\{ \vec(\cX_t^{m}) - (\otimes_{k=1}^K \wh{\bB}_{m,k}) \vec(\wh\cF_t^{(m)}) \Big\} \\
    &=
    \Big\{ (\otimes_{k=1}^K \wh{\bA}_{m,k})^\trans (\otimes_{k=1}^K \wh{\bA}_{m,k}) \Big\}^{-1} (\otimes_{k=1}^K \wh{\bA}_{m,k})^\trans \Big\{ \vec(\cX_t^{m}) - (\otimes_{k=1}^K \bB_{m,k}) \vec(\cF_t^{(m)}) \Big\} \\
    &\;\quad
    + \Big\{ (\otimes_{k=1}^K \wh{\bA}_{m,k})^\trans (\otimes_{k=1}^K \wh{\bA}_{m,k}) \Big\}^{-1} (\otimes_{k=1}^K \wh{\bA}_{m,k})^\trans \\
    &\;\quad
    \cdot \Big\{ (\otimes_{k=1}^K \bB_{m,k}) \vec(\cF_t^{(m)}) - (\otimes_{k=1}^K \wh{\bB}_{m,k}) \vec(\wh\cF_t^{(m)}) \Big\} \\
    &=
    \Big\{ (\otimes_{k=1}^K \wh{\bA}_{m,k})^\trans (\otimes_{k=1}^K \wh{\bA}_{m,k}) \Big\}^{-1} (\otimes_{k=1}^K \wh{\bA}_{m,k})^\trans \Big\{ (\otimes_{k=1}^K \bA_{m,k}) \vec(\cG_t^{(m)}) + \vec(\cE_t^{m}) \Big\} \\
    &\;\quad
    + \Big\{ (\otimes_{k=1}^K \wh{\bA}_{m,k})^\trans (\otimes_{k=1}^K \wh{\bA}_{m,k}) \Big\}^{-1} (\otimes_{k=1}^K \wh{\bA}_{m,k})^\trans \\
    &\;\quad
    \cdot \Big\{ (\otimes_{k=1}^K \bB_{m,k}) \vec(\cF_t^{(m)}) - (\otimes_{k=1}^K \wh{\bB}_{m,k}) \vec(\wh\cF_t^{(m)}) \Big\} \\
    &=
    \vec(\cG_t^{(m)}) + \Big\{ (\otimes_{k=1}^K \wh{\bA}_{m,k})^\trans (\otimes_{k=1}^K \wh{\bA}_{m,k}) \Big\}^{-1} \\
    &\;\quad
    \cdot (\otimes_{k=1}^K \wh{\bA}_{m,k})^\trans \Big\{ (\otimes_{k=1}^K \bA_{m,k}) -(\otimes_{k=1}^K \wh{\bA}_{m,k}) \Big\} \vec(\cG_t^{(m)}) \\
    &\;\quad
    + \Big\{ (\otimes_{k=1}^K \wh{\bA}_{m,k})^\trans (\otimes_{k=1}^K \wh{\bA}_{m,k}) \Big\}^{-1} (\otimes_{k=1}^K \wh{\bA}_{m,k})^\trans \\
    &\;\quad
    \cdot \Big\{ (\otimes_{k=1}^K \bB_{m,k}) \vec(\cF_t^{(m)}) - (\otimes_{k=1}^K \wh{\bB}_{m,k}) \vec(\wh\cF_t^{(m)}) \Big\} \\
    &\;\quad
    + \Big\{ (\otimes_{k=1}^K \wh{\bA}_{m,k})^\trans (\otimes_{k=1}^K \wh{\bA}_{m,k}) \Big\}^{-1} (\otimes_{k=1}^K \wh{\bA}_{m,k})^\trans \vec(\cE_t^{m}) \\
    &=: \vec(\cG_t^{(m)}) + \cIV_1 + \cIV_2 + \cIV_3 .
\end{align*}

The terms $\cIV_1$--$\cIV_3$ are comparable to $\cIII_1$--$\cIII_3$ in \eqref{eqn: vec_F_decomp}. Indeed, combining \Cref{thm: global_loading}, \eqref{eqn: rate_local_component_Km_not1}, \eqref{eqn: rate_local_component_Km_1}, and analogous argument as for $\cIII_3$ (see details above \eqref{eqn: cIII_3}), it holds similarly as the previous proof for part~(a) that
\begin{align*}
    \|\cIV_1\|_F^2 &\lesssim \sum_{k=1}^{K_m} \frac{1}{p_{m,k}} \Big\| \wh{\bA}_{m,k} -\bA_{m,k} \Big\|_F^2 = \cO_P\Big( \frac{1}{T} \Big) ; \\
    \|\cIV_2\|_F^2 &\lesssim \frac{1}{p_m} \Big\| (\otimes_{k=1}^K \bB_{m,k}) \vec(\cF_t^{(m)}) - (\otimes_{k=1}^K \wh{\bB}_{m,k}) \vec(\wh\cF_t^{(m)}) \Big\|_F^2 \\
    &= \begin{cases}
        \cO_P\Big( \sum_{k\in[K_m]} \frac{1}{p_{m,k}^2} + \frac{1}{T} + \frac{1}{p_m} \Big) & \text{if $K_m> 1$,} \\
        \cO_P\Big( \frac{1}{T} + \frac{1}{p_{m,1}} \Big) & \text{if $K_m= 1$;}
    \end{cases} \\
    \|\cIV_3\|_F^2 &= \cO_P\Big( \frac{1}{p_m} \Big) .
\end{align*}

Putting all together, we hence conclude
\[
\Big\| \vec(\wh{\cG}_t^{(m)}) - \vec(\cG_t^{(m)}) \Big\|_F^2 =
\begin{cases}
    \cO_P\Big( \sum_{k\in[K_m]} \frac{1}{p_{m,k}^2} + \frac{1}{T} + \frac{1}{p_m} \Big) & \text{if $K_m> 1$,} \\
    \cO_P\Big( \frac{1}{T} + \frac{1}{p_{m,1}} \Big) & \text{if $K_m= 1$.}
\end{cases}
\]

The rate of convergence for the local component estimator has been spelled out in \eqref{eqn: rate_local_component_Km_not1} and \eqref{eqn: rate_local_component_Km_1}. The arguments for the global component estimator follow similarly and hence omitted here. This ends the proof of the theorem.
\end{proof}

\begin{proof}[Proof of \Cref{prop: global_loading_bootstrap}]
The proof is direct by following Lemmas~\ref{lemma: rate_Omega}--\ref{lemma: eigenvalue_Sigma} similarly (with the rate $p_n$ being $|\cS_n^\dagger|$ now) and hence \Cref{thm: global_loading}, except that in \eqref{eqn: eigenvalue_Sigma_step_2} in the proof of \Cref{lemma: eigenvalue_Sigma}:
\begin{align*}
    &\;\quad
    \lambda_{r_{m,k}}\Bigg( \sum_{\bi \in \cS_n} \E\Big\{ \mat_k(\cG_t^{(m)}) \Big( \cG_t^{(n)} \times_{h=1}^{K_n} \bA_{n,h} \Big)_{\bi} \Big\} \\
    &\;\quad
    \cdot \E\Big\{ \mat_k(\cG_t^{(m)}) \Big( \cG_t^{(n)} \times_{h=1}^{K_n} \bA_{n,h} \Big)_{\bi} \Big\}^\trans \Bigg) \notag \\
    &=
    \lambda_{r_{m,k}}\Bigg( \E\Big[ \mat_k(\cG_t^{(m)}) \otimes \Big\{ \vec(\cG_t^{(n)})^\trans \big( \otimes_{h=1}^{K_n} \bA_{n,h} \big)_{\cS_n^\dagger}^\trans \Big\} \Big] \notag \\
    &\;\quad
    \cdot \E\Big[ \mat_k(\cG_t^{(m)})^\trans \otimes \Big\{ \big( \otimes_{h=1}^{K_n} \bA_{n,h} \big)_{\cS_n^\dagger} \vec(\cG_t^{(n)}) \Big\} \Big] \Bigg) 
    \asymp
    \cS_n^\dagger \cdot c ,
\end{align*}
where the last line used \Cref{ass: core_factor}(c) and the additional condition in the statement of \Cref{prop: global_loading_bootstrap}. This concludes the proof of the theorem.
\end{proof}

\begin{proof}[Proof of \Cref{thm: eigenratio}]
Fix any $m\in[M]$ and $k\in[K_m]$. By Weyl's inequality and \Cref{lemma: rate_Sigma}, we have, for any $j\in[p_{m,k}]$,
\begin{align*}
    \Big| \lambda_j(\wh\bSigma_{m,k}) -\lambda_j(\bSigma_{m,k}) \Big| \leq \Big\| \wh\bSigma_{m,k} -\bSigma_{m,k} \Big\|
    = \cO_P\Bigg\{ \frac{p_m}{\sqrt{T}} \Bigg( \sum_{\substack{n=1\\ n\neq m}}^M p_n \Bigg) \Bigg\}
    = \cO_P(\xi_{m,k}) ,
\end{align*}
where the last equality used the definition of $\xi_{m,k}$.

Consider the scenario of $r_{m,k}>1$. For any $j\in[r_{m,k} -1]$, we hence have
\begin{equation}
\label{eqn: eigenratio_proof_1}
\frac{\lambda_{j+1}(\wh\bSigma_{m,k}) +\xi_{m,k}}{\lambda_j(\wh\bSigma_{m,k}) +\xi_{m,k}} 
\geq \frac{\lambda_{j+1}(\bSigma_{m,k}) +\cO_P(\xi_{m,k})}{\lambda_j(\bSigma_{m,k}) +\cO_P(\xi_{m,k})}
\asymp c ,
\end{equation}
for some positive constant $c$, where the asymptotic (in probability) equality used \Cref{lemma: eigenvalue_Sigma} and the definition of $\xi_{m,k}$. The above is unnecessary to prove this theorem when $r_{m,k}=1$.

Next, note that the rank of $\bSigma_{m,k}$ is upper bounded by $r_{m,k}$ due to the definition of $\bSigma_{m,k}$ where $\bA_{m,k}\in \R^{p_{m,k}\times r_{m,k}}$ within. We then have
\begin{equation}
\label{eqn: eigenratio_proof_2}
\begin{split}
    &\;\quad
    \frac{\lambda_{r_{m,k}+1}(\wh\bSigma_{m,k}) +\xi_{m,k}}{\lambda_{r_{m,k}}(\wh\bSigma_{m,k}) +\xi_{m,k}} \leq \frac{\lambda_{r_{m,k}+1}(\bSigma_{m,k}) +\cO_P(\xi_{m,k})}{\lambda_{r_{m,k}}(\bSigma_{m,k}) +\cO_P(\xi_{m,k})} \\
    &=
    \frac{\cO_P(\xi_{m,k})}{\lambda_{r_{m,k}}(\bSigma_{m,k}) +\cO_P(\xi_{m,k})}
    = \cO_P\Big( \frac{1}{\sqrt{T}} \Big) ,
\end{split}
\end{equation}
where the last equality used again \Cref{lemma: eigenvalue_Sigma} and the definition of $\xi_{m,k}$.

It remains to consider the eigenvalue ratio for $j\in \{r_{m,k}+1, \dots, r_{\max,m,k}\}$. In detail, we have
\begin{equation}
\label{eqn: eigenratio_proof_3}
\frac{\lambda_{j+1}(\wh\bSigma_{m,k}) +\xi_{m,k}}{\lambda_j(\wh\bSigma_{m,k}) +\xi_{m,k}}
\geq \frac{\xi_{m,k}}{\lambda_j(\wh\bSigma_{m,k}) +\xi_{m,k}}
\geq \frac{\xi_{m,k}}{\cO_P(\xi_{m,k}) +\xi_{m,k}}
\geq 1/c ,
\end{equation}
in probability for some positive constant $c$, where we also used the definition of $\xi_{m,k}$. Lastly, combining \eqref{eqn: eigenratio_proof_1}, \eqref{eqn: eigenratio_proof_2}, and \eqref{eqn: eigenratio_proof_3}, we conclude the proof of the theorem.
\end{proof}

\subsection{Proof of claims}

\begin{proof}[Proof of \Cref{claim: assump_fac_simpler}]
Note that elements of the global factors have zero mean according to \Cref{ass: identification}(a). To prove the claim, it suffices to consider the case when global factors have elements with unit variance, since the variance can be absorbed into the global loading matrices otherwise. For simplicity, we use throughout this proof the following notation.
\begin{align*}
    & \cG_t := \cG_t^{(1)} =\dots =\cG_t^{(M)}, \\
    & K:= K_1=\dots =K_M, \\
    & r_{h} := r_{1,h}=\dots =r_{M,h} \quad \text{for each $h\in[K]$} , \\
    & r_{\text{-}k} := r_1 \dots r_{k-1} r_{k+1} \dots r_K .
\end{align*}
where we fix $k\in[K]$ and $n\in[K] \setminus\{m\}$.

Further define
\begin{align*}
    \bM_t &:= \mat_k(\cG_t^{(m)}) \otimes \Big\{ \vec(\cG_t^{(n)})^\trans \big( \otimes_{h=1}^{K_n} \bA_{h} \big)^\trans \Big\} \\
    &=
    \mat_k(\cG_t) \otimes \Big\{ \vec(\cG_t)^\trans \big( \otimes_{h=1}^{K} \bA_{h} \big)^\trans \Big\} ,
\end{align*}
which is a matrix of dimension $r_k \times (r_{\text{-}k} p_n)$. We next write the expression in index form.
Let the entries of $\cG_t$ be denoted by $g_{t,i_1, \dots, i_K}$ with $i_h \in [r_h]$.
Denote the entries of mode-$k$ unfolding of $\cG_t$ as
\[
\mat_k(\cG_t)_{i_k, \balpha_k} = g_{t,i_1, \dots, i_K},
\]
where $\balpha_k$ corresponds to the multi-index $(i_1, \dots, i_{k-1}, i_{k+1}, \dots, i_K)$. For the row vector $\vec(\cG_t^{(n)})^\trans \big( \otimes_{h=1}^{K_n} \bA_{n,h} \big)^\trans$ which has length $p_n=p_{n,1}\dots p_{n,K}$, let $\bbeta = (j_1, \dots, j_K)$ denote its multi-index (corresponding to each $\bA_{n,h}$). Then we can write
\[
\Big\{ \vec(\cG_t^{(n)})^\trans \big( \otimes_{h=1}^{K_n} \bA_{n,h} \big)^\trans \Big\}_{\bbeta} = \sum_{i'_1, \dots, i'_K} \left( \prod_{h=1}^K (\bA_h)_{j_h, i'_h} \right) g_{t,i'_1, \dots, i'_K}.
\]

Hence, the $(i_k, (\balpha_k, \bbeta))$ entry of $\bM_t$ is
\[
\bM_t(i_k, (\balpha_k, \bbeta)) = g_{t,i_1, \dots, i_K} \cdot \left\{ \sum_{i'_1, \dots, i'_K} \left( \prod_{h=1}^K (\bA_h)_{j_h, i'_h} \right) g_{t,i'_1, \dots, i'_K} \right\},
\]
where $i_k$ determines the row index and $\balpha_k$ determines the other indices.

Since the entries of $\cG_t$ are uncorrelated with mean 0 and variance 1, we have
\[
\E(g_{t,i_1, \dots, i_K} \cdot g_{t,i'_1, \dots, i'_K}) = 
\begin{cases}
1 & \text{if } (i_1, \dots, i_K) = (i'_1, \dots, i'_K), \\
0 & \text{otherwise}.
\end{cases}
\]
Therefore,
\begin{equation}
\label{eqn: Mt_multi_index_simplify}
\E[\bM_t(i_k, (\balpha_k, \bbeta))] = \prod_{h=1}^K (\bA_h)_{j_h, i_h}
= (\bA_1)_{j_1, i_1} (\bA_2)_{j_2, i_2} \dots (\bA_K)_{j_K, i_K} .
\end{equation}

In what follows, we recognise a Kronecker product structure in $\bM_t$. To this end, rewrite the product in \eqref{eqn: Mt_multi_index_simplify} as
\begin{equation}
\label{eqn: Mt_multi_index_simplify_2}
\E[\bM_t(i_k, (\balpha_k, \bbeta))] = (\bA_k)_{j_k, i_k} \cdot \left\{ \prod_{h \neq k} (\bA_h)_{j_h, i_h} \right\}.
\end{equation}

For each $h \neq k$, denote the $i_h$th column of $\bA_h$ as $\bA_{h,\cdot i_h}$. Then the product over $h \neq k$ in \eqref{eqn: Mt_multi_index_simplify_2} is the $(j_1, \dots, j_{k-1}, j_{k+1}, \dots, j_K)$ entry of
\[
\bA_{1,\cdot i_1} \otimes \bA_{2,\cdot i_2} \otimes \dots \otimes \bA_{k-1,\cdot i_{k-1}} \otimes \bA_{k+1,\cdot i_{k+1}} \otimes \dots \otimes \bA_{K,\cdot i_K} .
\]

Define $\bA_{\text{-}k} := \otimes_{h\in[K] \setminus \{k\}} \bA_h$. Then the element of $\bA_{\text{-}k}$ corresponding to row (index by multi-index) $(j_1, \dots, j_{k-1}, j_{k+1}, \dots, j_K)$ and column $\balpha_k$ is 
\[
\bA_{\text{-}k}(\bbeta, \balpha_k) = \prod_{h \neq k} (\bA_h)_{j_h, i_h} .
\]
Thus by \eqref{eqn: Mt_multi_index_simplify_2}, we have
\begin{align*}
    \E[\bM_t(i_k, (\balpha_k, \bbeta))] &= (\bA_k)_{j_k, i_k} \cdot \bA_{\text{-}k}(\bbeta, \balpha_k)
    = (\bA_k^\trans)_{i_k, j_k} \cdot \bA_{\text{-}k}(\bbeta, \balpha_k) ,
\end{align*}
which implies $\E[\bM_t(:, (\balpha_k, \bbeta))] = (\bA_k^\trans)_{\cdot j_k} \cdot \bA_{\text{-}k}(\bbeta, \balpha_k)$ and hence
\begin{align*}
    \E[\bM_t(:, (\balpha_k, :))] &= \bA_k^\trans \otimes \{ \bA_{\text{-}k}(:, \balpha_k) \}^\trans 
    = \bA_k^\trans \otimes (\bA_{\text{-}k})_{\cdot \balpha_k}^\trans .
\end{align*}

Finally, $\E[\bM_t]$ is the row block vector by concatenating all $\E[\bM_t(:, (\balpha_k, :))]$ in order. Thus, we conclude
\begin{align*}
    \E[\bM_t] = \bA_k^\trans \otimes \vec(\bA_{\text{-}k})^\trans
    = \bA_k^\trans \otimes \vec\big( \otimes_{h\in[K] \setminus \{k\}} \bA_h \big)^\trans .
\end{align*}

Then it is easy to see that for any $j\in[r_k]$,
\begin{align*}
    &\;\quad
    \sigma_{j}^2\big( \E[\bM_t] \big) = \lambda_{j}\big( \E[\bM_t] \E[\bM_t]^\trans \big) \\
    &=
    \lambda_{j}\Big\{ \bA_k^\trans \bA_k \otimes \vec\big( \otimes_{h\in[K] \setminus \{k\}} \bA_h \big)^\trans \vec\big( \otimes_{h\in[K] \setminus \{k\}} \bA_h \big) \Big\}
    \asymp c ,
\end{align*}
for some positive constant $c$, where the last line used the fact that $\bA_k^\trans \bA_k$ has all first $r_k$ eigenvalues bounded away from zero and infinity, and
\begin{align*}
    &\;\quad
    \vec\big( \otimes_{h\in[K] \setminus \{k\}} \bA_h \big)^\trans \vec\big( \otimes_{h\in[K] \setminus \{k\}} \bA_h \big) \Big\} \\
    &=
    \Big\| \vec\big( \otimes_{h\in[K] \setminus \{k\}} \bA_h \big) \Big\|^2
    = \Big\| \otimes_{h\in[K] \setminus \{k\}} \bA_h \Big\|_F^2 \asymp c .
\end{align*}
The proof of the claim is thus complete.
\end{proof}

\end{document}